 \newtheoremstyle{theoremdd}% name of the style to be used
  {0}% measure of space to leave above the theorem. E.g.: 3pt
  {0}% measure of space to leave below the theorem. E.g.: 3pt
  {\itshape}% name of font to use in the body of the theorem
  {0pt}% measure of space to indent
  {\bfseries}% name of head font
  {}% punctuation between head and body
  { }% space after theorem head; " " = normal interword space
  {\thmname{#1}\thmnumber{ #2}\textnormal{\thmnote{ (#3)}}}
\theoremstyle{theoremdd}
\newcommand*{\rom}[1]{\expandafter\@slowromancap\romannumeral #1@}
\DeclareMathOperator{\CCa}{CC_A}
\DeclareMathOperator{\obs}{obs}
\DeclareMathOperator{\dss}{dss}
\newcommand{\Scal}{\mathcal{S}}
\newtheorem{theorem}{Theorem}[section]
\newtheorem{lemma}[theorem]{Lemma}
\newtheorem{definition}{Definition}
\newtheorem{proposition}[theorem]{Proposition}
\newtheorem{example}[theorem]{Example}
\newtheorem{remark}{Remark}
\newtheorem{problem}{Problem}
\let\NAT@parse\undefined
\newcounter{enumi_saved}
\tikzset{elliptic state/.style={draw,ellipse}}
\tikzset{rectangle state/.style={draw,rectangle}}
\tikzset{
%->, % makes the edges directed
%>=stealthâ, % makes the arrow heads bold
node distance=3cm, % specifies the minimum distance between two nodes. Change if necessary.
every state/.style={thick, fill=gray!10}, % sets the properties for each âstateâ node
initial text=$ $, % sets the text that appears on the start arrow
}
\newcommand{\red}{\color{red}}
\definecolor{green}{rgb}{0.1,0.7,0.1}
\newcommand{\algorithmicstop}{\textbf{stop}}
\newcommand{\STOP}{\STATE \algorithmicstop}
\title{Verification and Enforcement of Strong State-Based Opacity for Discrete-Event Systems}
\author{Xiaoguan Han\\
{\small College of Electronic Information and Automation}\\
{\small Tianjin University of Science and Technology, Tianjin 300222, China}\\
{\small hxg-allen@163.com}
\and
Kuize Zhang\\
{\small Department of Computer Science}\\
{\small University of Surrey, Guildford GU2 7XH, United Kingdom}\\
{\small kuize.zhang@surrey.ac.uk}
\and 
Zhiwu Li\\
{\small Institute of Systems Engineering}\\
{\small Macau University of Science and Technology, Taipa 999078, China}\\
{\small School of Electro-Mechanical Engineering}\\
{\small Xidian University, Xi'an 710071, China}\\
{\small zhwli@xidian.edu.cn}
}
\begin{document}

\date{}

\maketitle

{\bf Abstract}
  In this paper, we investigate the verification and enforcement of strong state-based opacity (SBO) in discrete-event systems modeled as partially-observed (nondeterministic) finite-state automata, including strong $K$-step opacity ($K$-SSO), strong current-state opacity (SCSO), strong initial-state opacity (SISO), and strong infinite-step opacity (Inf-SSO).
  They are stronger versions of four widely-studied standard opacity notions, respectively.
  We firstly propose a new notion of $K$-SSO, and then we construct a concurrent-composition structure that is a variant of our previously-proposed one to verify it.
  Based on this structure, a verification algorithm for the proposed notion of $K$-SSO is designed.
  Also, an upper bound on $K$ in the proposed $K$-SSO is derived.
  Secondly, we propose a distinctive opacity-enforcement mechanism that has better scalability than the existing ones (such as supervisory control).
  The basic philosophy of this new mechanism is choosing a subset of controllable transitions to disable before an original system starts to run in order to cut off all its runs that violate a notion of strong SBO of interest.
  Accordingly, the algorithms for enforcing the above-mentioned four notions of strong SBO are designed using the proposed two concurrent-composition structures.
  In particular, the designed algorithm for enforcing Inf-SSO has lower time complexity than the existing one in the literature, and does not depend on any assumption.
  Finally, we illustrate the applications of the designed algorithms using examples.

{\bf Keywords}
Discrete-event system, strong state-based opacity, verification, enforcement, concurrent composition.

\tableofcontents

\section{Introduction}\label{sec1}

Opacity is a concealment property, which guarantees that the secret information of a system cannot be distinguished from its non-secret information to an outside observer called an \emph{intruder} who knows complete knowledge of the system but can only observe its outputs.
In other words, an opaque system always holds the ``plausible deniability" for its secrets during its execution.
Opacity adapts to the characteristics of a variety of security and privacy requirements in diverse dynamic systems, including event-driven systems~\cite{Lafortune(2018),Hadjicostis(2020)}, time-driven systems~\cite{An(2020),Ramasubramanian(2020)}, and metric systems~\cite{Yin(2021)}.

The notion of opacity initially appeared in the computer science literature~\cite{Mazare(2004)} for analyzing cryptographic protocols.
Whereafter, various notions of opacity were introduced in the context of discrete-event systems (DESs), including Petri nets~\cite{Bryans(2005)}, automata~\cite{Badouel(2007),Saboori(2007)}, labelled transition systems~\cite{Bryans(2008)}, etc.
For details see the recent surveys~\cite{Lafortune(2018),Jacob(2016)} and the textbook~\cite{Hadjicostis(2020)}.
In plain words, a system is called \emph{opaque} if for each behavior relevant to its secrets, there always exists an observationally-equivalent non-secret behavior.
Note that, for a given DES, an intruder completely knows the system's structure but has only limited observations to its behavior.
In general, the secrets of a DES are modeled by two ways: 1) a set of secret states, and 2) a set of secret traces.
For the former, opacity is referred to as state-based (e.g.,~\cite{Bryans(2005),Saboori(2007)}), while for the latter, opacity is referred to as language-based (e.g.,~\cite{Badouel(2007),Bryans(2008)}).

In automata-based formalisms, different notions of opacity were proposed, including current-state opacity (CSO)~\cite{Saboori(2007)}, initial-state opacity (ISO)~\cite{Saboori(2013)}, $K$-step opacity ($K$-SO)~\cite{Saboori(2011a)}, infinite-step opacity (Inf-SO)~\cite{Saboori(2012)}\footnote{For convenience, the notion originally called CSO (resp., ISO, $K$-SO, and Inf-SO) is renamed as standard CSO (resp., standard ISO, standard $K$-SO, and standard Inf-SO) in this paper.}, and language-based opacity (LBO)~\cite{Badouel(2007),Bryans(2008),Lin(2011)}.
Some efficient algorithms to check them have also been provided in~\cite{Wu(2013),Zhang(2023a),Balun(2023)}.
In particular, it was proven that the above-mentioned five versions of opacity can be reduced to each other in polynomial time when LBO is restricted to the special case that the secret languages are regular (cf.,~\cite{Wu(2013),Balun(2021)}), while LBO is generally undecidable in finite-state automata with $\epsilon$-labeling functions (cf.,~\cite{Bryans(2008)}).
Furthermore, when a system is not opaque, a natural question is to ask ``how can one enforce the system to be opaque?
This is the so-called \emph{opacity enforcement}, which has been widely investigated using a variety of techniques, including supervisory control~\cite{Dubreil(2010),Saboori(2012b),Yin(2016),Tong(2018),Moulton(2022)}, insertion/edit functions~\cite{Wu(2014a),Ji(2018),Ji(2019),Liu(2022),Li(2022)}, dynamic observers~\cite{Cassez(2012),Zhang(2015),Yin(2020)}, etc.
In addition, verification and/or enforcement of opacity have been extended to other classes of models, see, e.g.,~\cite{Tong(2017),Zhang(2019),Keroglou(2017),Yin(2019a),Deng(2020),Zhang(2024),Hou(2022),Zhang(2023)}.
Some applications of opacity in real-world systems have also been provided in the literature, see, e.g.,~\cite{Saboori(2011b),Wu(2014),Bourouis(2017),Lin(2020)}.

Among various notions of opacity, the standard CSO guarantees that an intruder cannot make sure whether a system is currently at a secret state.
In Location-Based Services (LBS), a user may want to hide his/her initial location or his/her location (e.g., visiting a hospital or bank) at some specific previous instants.
Such requirements can be characterized by the standard ISO and $K$/Inf-SO.
However, as mentioned in~\cite{Falcone(2015),Ma(2021),Han(2023)}, these four notions of standard opacity have some limitations in practice.
Specifically, they cannot capture such a scenario that an intruder can never infer for sure whether a system has passed through a secret state based on his/her observations.
In other words, even though a system is ``\emph{opaque}" in the standard sense, the intruder may still be able to determine that a secret state has been passed through.
This motivated the authors in~\cite{Falcone(2015)} to introduce the notion of \emph{strong $K$-step opacity} ($K$-SSO), which is a stronger version of the standard $K$-SO and captures that the visit of a secret state cannot be inferred within the last $K$ observable steps.
Inspired by~\cite{Falcone(2015)}, the notion of $K$-SSO was extended to \emph{strong infinite-step opacity} (Inf-SSO) in~\cite{Ma(2021)}, which is a stronger version of the standard Inf-SO.
Further, two algorithms have been presented to verify $K$-SSO and Inf-SSO using the so-called $K$-step and $\infty$-step recognizers, respectively.
In particular, the algorithm for verifying $K$-SSO is more efficient than that of~\cite{Falcone(2015)}.
Later on, the verification efficiency of $K$-SSO proposed in~\cite{Falcone(2015)} was further improved using our proposed concurrent-composition approach in~\cite{Zhang(2023a)}.

Recently, in our previous work~\cite{Han(2023)}, two stronger versions of the standard CSO and ISO, called \emph{strong current-state opacity} (SCSO) and \emph{strong initial-state opacity} (SISO), were proposed in nondeterministic finite-state automata, respectively.
SCSO (resp., SISO) guarantees that for each run generated by a system ending (resp., starting) at a secret state, there always exists an observationally-equivalent non-secret run.
Further, we developed a novel methodology to simultaneously verify SCSO, SISO, and Inf-SSO using a concurrent-composition technique.
It was also proven that our designed algorithm for verifying Inf-SSO has lower time complexity than that in~\cite{Falcone(2015)} and~\cite{Ma(2021)}.
Additionally, in~\cite{Zhang(2023b)} we proposed another type of $K$-SSO notion, which is slightly stronger than that in studied in~\cite{Falcone(2015)} and~\cite{Ma(2021)}.
Note that, the notion of $K$-SSO proposed in~\cite{Zhang(2023b)} can be directly checked using our concurrent-composition structure in~\cite{Han(2023)}, see~\cite{Zhang(2023a),Zhang(2023b)} for details.

To avoid confusion, from now on we classify the various notions of opacity in the context of finite-state automata into two categories:
1) \textbf{language-based opacity} (LBO), including strong LBO and weak LBO (cf.,~\cite{Lin(2011)});
and 2) \textbf{state-based opacity} (SBO), including standard SBO (CSO, ISO, $K$-SO, Inf-SO) and strong SBO (SCSO, SISO, $K$-SSO, Inf-SSO).

As an extension of~\cite{Han(2023)} and~\cite{Zhang(2023b)}, in this paper we propose another new notion of $K$-SSO, which is weaker than both two notions of $K$-SSO proposed in~\cite{Falcone(2015)} and~\cite{Zhang(2023b)} but stronger than standard $K$-SO.
Further, a new concurrent-composition structure that is a variant of that of~\cite{Han(2023)} is constructed to do efficient verification for the proposed new notion of $K$-SSO.
And then, we develop a distinctive-enforcement mechanism to enforce various notions of strong SBO.
Specifically, the main contributions of this paper are as follows.

\begin{itemize}
\item Although the concurrent-composition structure provided in our previous work~\cite{Han(2023)} can efficiently verify SCSO, SISO, Inf-SSO, and $K$-SSO defined in~\cite{Zhang(2023b)}, it cannot be
  directly used to do verification for the proposed new notion of $K$-SSO.
  For this purpose, we propose a variant of the concurrent-composition structure given in~\cite{Han(2023)}.
  The proposed variant structure does not depend on $K$.
  Further, we exploit it to design an efficient verification algorithm for the proposed notion of $K$-SSO, which has time complexity $\mathcal{O}((|\Sigma_o||\Sigma_{uo}|+|\Sigma|)|X|^{2}2^{|X|})$.
  Moreover, an upper bound on $K$ in the proposed $K$-SSO is derived, i.e., $|\hat{X}|2^{|X\backslash X_S|}-1$, where $|\hat{X}|$ is the number of states of
  the initial-secret subautomaton $\hat{G}$.
  This results in the fact that a system $G$ is $K$-SSO proposed in this paper if and only if it is min$\{K, |\hat{X}|2^{|X\backslash X_S|}-1\}$-SSO.

\item We propose a distinctive opacity-enforcement mechanism that also works in a closed-loop manner and restricts a system's behavior to ensure that the modified system does not reveal its ``secrets" to an intruder.
  Specifically, this new opacity-enforcement approach chooses a subset of controllable transitions to disable before an original system starts to run in order to cut off all runs that violate a notion of strong SBO of interest.
  This remarkably differs from the existing opacity-enforcement methods.

\item Under the proposed control mechanism, we consider the enforcement of $K$-SSO, SCSO, SISO, and Inf-SSO, respectively.
  We use the proposed concurrent-composition structures to design an opacity-enforcement algorithm for each of these four strong SBO notions.
  Our proposed concurrent-composition approach does not depend on any assumption.
  By contrast, the containment relationship between $\Sigma_o$ and $\Sigma_c$ is widely used in the supervisory control framework.
  The proposed algorithms of enforcing these four strong SBO notions have the same (worst-case) time complexity as their verification, and are currently the most efficient.
  This means that the proposed algorithm for enforcing Inf-SSO leads to a considerable improvement compared with the existing one in~\cite{Ma(2021)} under the supervisory control framework from the complexity's point of view.
\end{itemize}

\begin{remark}\label{re:1.1}
Let us give some comments on the differences between the proposed opacity-enforcement mechanism and the existing ones in the DES literature.
Until now, there are mainly four different types of opacity-enforcement mechanisms: supervisory control~\cite{Dubreil(2010),Saboori(2012b),Yin(2016),Tong(2018),Moulton(2022)}, insertion/edit functions~\cite{Wu(2014a),Ji(2018),Ji(2019),Liu(2022),Li(2022)}, dynamic observers~\cite{Cassez(2012),Zhang(2015),Yin(2020)}, and runtime enforcement approach~\cite{Falcone(2015)}.
Among these four types, supervisor control is the most relevant to our approach and probably also the most widely used for the enforcement of opacity.
However, there are intrinsic differences between them.
Specifically, the classical supervisor control framework chooses controllable events to disable according to the observed output sequences.
Whereas the proposed enforcement approach is to choose controllable transitions to disable before a system starts to run.
Moreover, our approach can choose a subset of controllable transitions to disable and can also add additional transitions, so it is more flexible than supervisory control.

Note that, this new enforcement mechanism initially appeared in our previous work~\cite{Zhang(2023c)} for enforcing delayed strong detectability of DESs.
However, enforcement of opacity has not been considered in~\cite{Zhang(2023c)}.
To the best of our knowledge, this paper is the first to systematically investigate opacity enforcement under this new enforcement framework, which will be discussed in more detail in Section~\ref{sec4} below.
\end{remark}

The rest of this paper is arranged as follows.
Section~\ref{sec2} introduces the system model adopted in this paper, relevant notations, and basic notions.
In Section~\ref{sec3}, we propose a new notion of $K$-SSO.
To verify it, a new concurrent-composition structure is proposed, based on which we present an efficient verification algorithm, as well as an upper bound on the value of $K$ in the proposed notion of $K$-SSO.
In Section~\ref{sec4}, we propose a novel opacity-enforcement approach for $K$-SSO, SCSO, SISO, and Inf-SSO.
Accordingly, the enforcement algorithms for all notions of strong SBO are designed and their time complexity is also analyzed, respectively.
Finally, we conclude this paper in Section~\ref{sec5}.
%Some of results in Section~\ref{sec3} on the verification of $K$-SSO in Definition~\ref{de:3.1} were presented in~\cite{Han(2022)}.

\section{Preliminaries}\label{sec2}

A \emph{nondeterministic finite-state automaton} (NFA) is a quadruple $G=(X,\Sigma,\delta, X_0)$, where $X$ is a finite set of states, $\Sigma$ is a finite alphabet of events, $X_0\subseteq X$ is a set of initial states, $\delta: X\times\Sigma\rightarrow 2^{X}$ is the transition function, which depicts the dynamics of $G$: given states $x,y\in X$ and an event $\sigma\in\Sigma$, $y\in\delta(x,\sigma)$ implies that there exists a transition labeled by $\sigma$ from $x$ to $y$, i.e., $x\stackrel{\delta}{\rightarrow}y$.
We can extend the transition function to $\delta :X\times\Sigma^{\ast}\rightarrow 2^{X}$ in the recursive manner, where $\Sigma^{\ast}$ denotes the \emph{Kleene closure} of $\Sigma$, consisting of all finite sequences composed of the events in $\Sigma$ (including the empty sequence $\epsilon$)\footnote{Note that, $\delta(x,\epsilon):=\{x\}$ for any $x\in X$.}. Details can be found in~\cite{Hopcroft(2001)}.
We use $\mathcal{L}(G,x)$ to denote the language generated by $G$ from state $x$, i.e., $\mathcal{L}(G,x)=\{s\in\Sigma^{\ast}: \delta(x,s)\neq\emptyset\}$.
Therefore, the language generated by $G$ is $\mathcal{L}(G)=\cup_{x_0\in X_0}\mathcal{L}(G,x_0)$.
$G$ is called a deterministic finite-state automation (DFA) if $|X_0|=1$ and $|\delta(x,\sigma)|\leq 1$ for all $x\in X$ and all $\sigma\in\Sigma$.
When $G$ is deterministic, $\delta$ is also regarded as a partial transition function $\delta: X\times\Sigma^{\ast}\rightarrow X$.
For a sequence $s\in\mathcal{L}(G)$, we denote its length by $|s|$ and its prefix closure by $Pr(s)$, i.e., $Pr(s)=\{w\in\mathcal{L}(G): (\exists w^\prime\in\Sigma^{\ast})[ww^\prime=s]\}$.
Further, for a prefix $w\in Pr(s)$, we use the notation $s/w$ to denote the suffix of $s$ after its prefix $w$.

In this paper, a DES of interest is modeled as an NFA $G$.
As usual, we assume that an intruder can only see partial behaviors of $G$.
To this end, $\Sigma$ is partitioned into the set $\Sigma_o$ of observable events and the set $\Sigma_{uo}$ of unobservable events, i.e., $\Sigma_o\cup\Sigma_{uo}=\Sigma$ and $\Sigma_o\cap\Sigma_{uo}=\emptyset$.
The natural projection $P:\Sigma^{\ast}\rightarrow \Sigma_o^{\ast}$ is defined recursively by:
($i$) $P(\epsilon)=\epsilon$,
($ii$) $P(s\sigma)=P(s)\sigma,\mbox{ if } \sigma\in \Sigma_o$,
and ($iii$) $P(s\sigma)=P(s),\mbox{ if } \sigma\in \Sigma_{uo}$, where $s\in\Sigma^\ast$.
We extend the natural projection $P$ to $\mathcal{L}(G)$ by $P(\mathcal{L}(G))=\{P(s)\in\Sigma_{o}^{\ast}: s\in\mathcal{L}(G)\}$, see~\cite{Cassandras(2021)} for details.
Without loss of generality, we assume that system $G$ is accessible, i.e., all its states are reachable from $X_0$.
A state $x\in X$ is called \emph{$K$-step observationally reachable} if there exists an initial state $x_0\in X_0$ and a sequence $s\in\mathcal{L}(G,x_0)$ such that
$x\in\delta(x_0,s)$ and $|P(s)|=K$, where $K\in\mathbb{N}$ is a non-negative integer.

To study strong SBO of $G=(X,\Sigma,\delta,X_0)$, it is assumed that $G$ has a set of secret states, denoted by $X_{S}\subseteq X$.
We denote its the set of non-secret states by $X_{NS}=X\setminus X_S$.
We use the notation $X^{S}_0$ to denote the set of secret initial states (i.e., $X^{S}_0=X_0\cap X_S$).
The set of non-secret initial states is denoted by $X^{NS}_0$, i.e., $X^{NS}_0=X_0\setminus X^{S}_0$.
Consider an $n$-length sequence $s=\sigma_{1}\sigma_{2}\cdots\sigma_{n}\in\Sigma^{\ast}$ (where $n\in\mathbb{N}$), $x_0\in X_0$, and $x_i\in X$, $i=1,2,\ldots,n$, if $x_{k+1}\in\delta(x_k,\sigma_{k+1})$, $0\leq k\leq n-1$,
we call $x_0\stackrel{\sigma_1}{\rightarrow}x_1\stackrel{\sigma_2}{\rightarrow}x_2\stackrel{\sigma_3}{\rightarrow}\cdots\stackrel{\sigma_n}{\rightarrow}x_n$ a \emph{run} generated by $G$ from $x_0$ to $x_n$ under $s$.
For brevity, we write $x_0\stackrel{s}{\rightarrow}x_n$ (resp., $x_0\stackrel{s}{\rightarrow}$) when $x_1,x_2,\ldots,x_{n-1}$ (resp., $x_1,x_2,\ldots,x_n$) are not specified.
Note that $x_0\stackrel{s}{\rightarrow}x_n$ (resp., $x_0\stackrel{s}{\rightarrow}$) may denote more than one run based on the nondeterminism of $G$, which depends on the context.
A run $x_0\stackrel{\sigma_1}{\rightarrow}x_1\stackrel{\sigma_2}{\rightarrow}x_2\stackrel{\sigma_3}{\rightarrow}\cdots\stackrel{\sigma_n}{\rightarrow}x_n$ (resp., $x_0\stackrel{\sigma_1}{\rightarrow}x_1\stackrel{\sigma_2}{\rightarrow}x_2\stackrel{\sigma_3}{\rightarrow}\cdots$), abbreviated as $x_0\stackrel{s}{\rightarrow}x_n$ (resp., $x_0\stackrel{s}{\rightarrow}$), is called \emph{non-secret} if all $x_i\in X_{NS}$, $i=0,1,2,\ldots,n$ (resp., $i=0,1,2,\ldots$); otherwise, it is secret.

\section{Verification of strong \texorpdfstring{$K$}{K}-step opacity}\label{sec3}

\subsection{The notion of strong\texorpdfstring{$K$}{K}-step opacity }\label{subsec3.1}

In~\cite{Falcone(2015)}, the authors proposed a notion of strong $K$-step opacity ($K$-SSO) for a DFA $G$ to overcome the limitation that the notion of standard $K$-SO is not sufficiently strong (i.e., the standard $K$-SO cannot guarantee that the visit of a secret state cannot be inferred within the last $K$ observable steps).
Now we generalize the definition of $K$-SSO proposed in~\cite{Falcone(2015)} in nondeterministic finite-state automata (NFAs).

\begin{definition}[\cite{Falcone(2015)}]\label{de:3.1a}
Given a system $G=(X,\Sigma,\delta,X_0)$, a projection map $P$ w.r.t. a set $\Sigma_o$ of observable events, a set $X_{S}\subseteq X$ of secret states, and a non-negative integer $K$,
$G$ is said to be strongly $K$-step opaque ($K$-SSO)\footnote{In this paper, the terminology ``$K$-SSO" is the acronym of both ``strong $K$-step opacity" and ``strongly $K$-step opaque", which depends on the context.} w.r.t. $\Sigma_o$ and $X_{S}$ if for all $x_0\in X_0$ and all $s\in\mathcal L(G,x_0)$, there exists a $x^\prime_0\in X_0$ and a $w\in\mathcal L(G,x^\prime_0)$ with $P(w)=P(s)$ such that for all $\bar{w}\in Pr(w)$ satisfying $|P(w)|-|P(\bar{w})|\leq K$, $\delta(x^\prime_0,\bar{w})\subseteq X_{NS}$.
\end{definition}

In plain words, $K$-SSO in Definition~\ref{de:3.1a} implies that in the last at most $K$ observable steps, an intruder cannot make sure whether some secret states have been visited.
Note that, the notion of $K$-SSO studied in~\cite{Ma(2021)} is equivalent to Definition~\ref{de:3.1a} when system $G$ is deterministic.
Currently, the most efficient verification algorithm for $K$-SSO in Definition~\ref{de:3.1a} has been designed in our previous work~\cite{Zhang(2023a)}.

Recently, in~\cite{Zhang(2023b)} we proposed another type of $K$-SSO notion (see Definition 2.25 therein), which can be regarded as strengthening of SCSO notion (see Definition~\ref{de:4.2} below).
Now we recall the notion of $K$-SSO formulated by Definition 2.25 in~\cite{Zhang(2023b)}.

\begin{definition}[\cite{Zhang(2023b)}]\label{de:3.1b}
Given a system $G=(X,\Sigma,\delta,X_0)$, a projection map $P$ w.r.t. a set $\Sigma_o$ of observable events, a set $X_{S}\subseteq X$ of secret states, and a non-negative integer $K$,
$G$ is said to be strongly $K$-step opaque ($K$-SSO) w.r.t. $\Sigma_o$ and $X_{S}$ if
\begin{align}\label{eq:3.1b}
&(\forall\mbox{ run } x_0\stackrel{s_1}{\rightarrow}x_1\stackrel{s_2}{\rightarrow}x_2: x_0\in X_0\wedge x_1\in X_S\wedge|P(s_2)|\leq K)\nonumber\\
&(\exists\mbox{ non-secret run }x^\prime_0\stackrel{s^\prime_1}{\rightarrow}{x^\prime_1}\stackrel{s^\prime_2}{\rightarrow}{x^\prime_2})[(x^\prime_0\in X_0)\wedge (P(s^\prime_1)=\nonumber\\
&P(s_1))\wedge (P(s^\prime_2)=P(s_2))].
\end{align}
\end{definition}

By Definitions~\ref{de:3.1b} and~\ref{de:4.2}, it is not difficult to see that $K$-SSO in Definition~\ref{de:3.1b} reduces to SCSO in Definition~\ref{de:4.2} when string $t=\epsilon$.
Further, $K$-SSO in Definition~\ref{de:3.1b} can be directly checked using our concurrent-composition structure presented in~\cite{Han(2023)}, see the Theorem $2.15$ of~\cite{Zhang(2023b)} for details.

Next, in this subsection we propose another new notion of $K$-SSO, which is weaker than both two notions of $K$-SSO proposed in~\cite{Falcone(2015)} and~\cite{Zhang(2023b)} but stronger than standard $K$-SO.

\begin{definition}\label{de:3.1}
Given a system $G=(X,\Sigma,\delta,X_0)$, a projection map $P$ w.r.t. a set $\Sigma_o$ of observable events, a set $X_{S}\subseteq X$ of secret states, and a non-negative integer $K$,
$G$ is said to be strongly $K$-step opaque ($K$-SSO) w.r.t. $\Sigma_o$ and $X_{S}$ if
\begin{align}\label{eq:3.1}
&(\forall\mbox{ run } x_0\stackrel{s_1}{\rightarrow}x_1\stackrel{s_2}{\rightarrow}x_2: x_0\in X_0\wedge x_1\in X_S\wedge|P(s_2)|\leq K)\nonumber\\
&(\exists\mbox{ run } x_0^\prime\stackrel{s_1^\prime}{\rightarrow}{x_1^\prime}\stackrel{s_2^\prime}{\rightarrow}{x_2^\prime})[(x_0^\prime\in X_0)\wedge(P(s_1^\prime)=P(s_1))\wedge\nonumber\\
&(P(s_2^\prime)=P(s_2))\wedge({x_1^\prime}\stackrel{s_2^\prime}{\rightarrow}{x_2^\prime}\mbox{ is non-secret})].
\end{align}
\end{definition}

\begin{remark}\label{re:3.1}
1) For $K$-SSO in Definition~\ref{de:3.1}, if no such a run $x_0^\prime\stackrel{s_1^\prime}{\rightarrow}{x_1^\prime}\stackrel{s_2^\prime}{\rightarrow}{x_2^\prime}$ exists,
we call $x_0\stackrel{s_1}{\rightarrow}x_1\stackrel{s_2}{\rightarrow}x_2$ a \emph{leaking-secret run}.
Similarly, we also define a leaking-secret run for other notions of opacity.

2) If a system $G$ is $K$-SSO in Definition~\ref{de:3.1}, then an intruder cannot be sure whether the system is/was in a secret state within the last $K$ observable steps.
We compare these aforementioned three notions of $K$-SSO.
From their definitions, we conclude readily that: 1) $K$-SSO in Definition~\ref{de:3.1b} implies $K$-SSO in Definition~\ref{de:3.1a}, but the converse is not true; 2) $K$-SSO in Definition~\ref{de:3.1a} implies $K$-SSO in Definition~\ref{de:3.1}, but the converse is also not true.
However, these three notions of $K$-SSO are obviously stronger than the standard $K$-SO proposed in~\cite{Saboori(2011a)}.
\end{remark}

\subsection{The structure of concurrent composition}\label{subsec3.2}

In this subsection, we propose a new information structure using a concurrent-composition approach to verify $K$-SSO in Definition~\ref{de:3.1}.
Note that, the proposed information structure is a variant of that in our previous work~\cite{Han(2023)}.
In order to present this structure, we need to introduce two notions of \emph{initial-secret subautomaton} and \emph{non-secret subautomaton}.

Given a system $G=(X,\Sigma,\delta,X_0)$ and a set $X_S\subseteq X$ of secret states.
We first recall the notion of standard \emph{subset construction} of $G$ also called an \emph{observer}, which is defined by
\begin{equation}\label{eq:3.2}
Obs(G)=(X_{obs},\Sigma_{obs},\delta_{obs},X_{obs,0}),
\end{equation}
where $X_{obs}\subseteq 2^X\backslash\{\emptyset\}$ stands for the set of states,
$\Sigma_{obs}=\Sigma_o$ stands for the set of observable events,
$\delta_{obs}: X_{obs}\times\Sigma_{obs}\rightarrow X_{obs}$ stands for the (partial) deterministic transition function defined as follows: for any $q\in X_{obs}$ and $\sigma\in\Sigma_{obs}$,
we have $\delta_{obs}(q,\sigma)=\{x^{\prime}\in X:\exists x\in q, \exists w\in\Sigma_{uo}^\ast\mbox{ s.t. }x^{\prime}\in\delta(x,\sigma w)\}$ if it is nonempty,
$X_{obs,0}=\{x\in X:\exists x_0\in X_0, \exists w\in\Sigma_{uo}^\ast\mbox{ s.t. }x\in\delta(x_0,w)\}$ stands for the (unique) initial state.
We refer the reader to~\cite{Cassandras(2021)} for details on the observer $Obs(G)$.

\begin{remark}\label{re:3.2}
According to Definition~\ref{de:3.1}, we conclude readily that $K$-SSO reduces to standard CSO when $K=0$, i.e., $0$-SSO coincides with standard CSO.
Therefore, $0$-SSO can be determined using the standard observer $Obs(G)$ of $G$.
Specifically, $G$ is $0$-SSO in Definition~\ref{de:3.1} if and only if in its observer $Obs(G)$ there exists no state $q\in X_{obs}$ such that $q\subseteq X_S$, see~\cite{Saboori(2007)} for details.
\end{remark}

To study verification of $K$-SSO ($K\geq 1$) in Definition~\ref{de:3.1}, the state set $X_{obs}$ of $Obs(G)$ is partitioned into three disjoint parts: $X^S_{obs}$, $X^{NS}_{obs}$, and $X^{hyb}_{obs}$, i.e., $X_{obs}=X^S_{obs}\cup X^{NS}_{obs}\cup X^{hyb}_{obs}$, where $X^S_{obs}=\{q\in X_{obs}:q\subseteq X_S\}$, $X^{NS}_{obs}=\{q\in X_{obs}:q\subseteq X_{NS}\}$, and $X^{hyb}_{obs}=\{q\in X_{obs}:q\cap X_S\neq\emptyset\wedge q\cap X_{NS}\neq\emptyset\}$.
Note that the superscript ``\emph{hyb}" of $X^{hyb}_{obs}$ stands for the acronym of ``\emph{hybrid}".

Now, we construct a subautomaton of $G$ called an \emph{initial-secret subautomaton}, denoted by
\begin{equation}\label{eq:3.3}
\hat{G}=(\hat{X},\hat{\Sigma},\hat{\delta},\hat{X}_0),
\end{equation}
which is obtained from $G$ by: 1) replacing its initial state set $X_0$ with $\hat{X}_0=X_S$, and 2) computing the part of $G$ reachable from $\hat{X}_0$ as $\hat{G}$.
Note that, $\hat{G}$ can be computed from $G$ in time linear in the size of $G$.
In particular, when $G$ is deterministic, the time complexity of computing $\hat{G}$ reduces to $\mathcal{O}(|\Sigma||X|)$.

We also construct another subautomaton of $G$ called a \emph{non-secret subautomaton}, denoted by
\begin{equation}\label{eq:3.4}
\tilde{G}=(\tilde{X},\tilde{\Sigma},\tilde{\delta},\tilde{X}_0),
\end{equation}
whose set of initial states is defined as $\tilde{X}_0=\{x\in X:\exists q\in X^{hyb}_{obs}\mbox{ s.t. }x\in q\cap X_{NS}\}$.
And then, in $G$ we delete all secret states and compute the part of $G$ reachable from $\tilde{X}_0$ as $\tilde{G}$.
The time complexity of computing $\tilde{G}$ from $G$ is exponential in the size of $G$.
Note that, when we delete a secret state, all its ingoing and outgoing transitions are also deleted.

Next, we construct the observer of $\tilde{G}$, denoted by $\tilde{G}_{obs}$, which is a minor variant of standard observer $Obs(\tilde{G})$.
The unique difference between constructions is that $\tilde{G}_{obs}$ may have more than one initial state.
Specifically, we construct $\tilde{G}_{obs}=(\tilde{X}_{obs},\tilde{\Sigma}_{obs},\tilde{\delta}_{obs},\tilde{X}_{obs,0})$, where the initial state set is $\tilde{X}_{obs,0}=\{X_{NS}\cap q: \exists q\in X^{hyb}_{obs}\}$.

Based on the above preparation, we propose an information structure called the \emph{concurrent composition} of $\hat{G}$ and $\tilde{G}_{obs}$, which will be used to verify $K$-SSO in Definition~\ref{de:3.1}.

\begin{definition}[Concurrent Composition]\label{de:3.2}
Given a system $G=(X,\Sigma,\delta,X_0)$ and a set $X_{S}\subseteq X$ of secret states, the concurrent composition of $\hat{G}$ and $\tilde{G}_{obs}$ is an NFA
\begin{equation}\label{eq:5}
Cc(\hat{G},\tilde{G}_{obs})=(\hat{X}_{cc},\hat{\Sigma}_{cc},\hat{\delta}_{cc},\hat{X}_{cc,0}),
\end{equation}
where
\begin{itemize}
  \item $\hat{X}_{cc}\subseteq \hat{X}\times 2^{\tilde{X}}$ stands for the set of states;
  \item $\hat{\Sigma}_{cc}=\{(\sigma,\sigma): \sigma\in\hat{\Sigma}_o\}\cup\{(\sigma,\epsilon): \sigma\in\hat{\Sigma}_{uo}\}$ stands for the set of events;
  \item $\hat{\delta}_{cc}: \hat{X}_{cc}\times\hat{\Sigma}_{cc}\rightarrow 2^{\hat{X}_{cc}}$ is the transition function defined as follows: for any state $(x,\tilde{q})\in\hat{X}_{cc}$ and any event $\sigma\in\hat{\Sigma}$,
  \begin{itemize}
  \item [(i)] when $\tilde{q}\neq\emptyset$,\\
    (a) if $\sigma\in\hat{\Sigma}_o$, then
   \begin{equation*}
   \begin{split}
  & \hat{\delta}_{cc}((x,\tilde{q}),(\sigma,\sigma))=\{(x^\prime,\tilde{q}^\prime): x^\prime\in\hat{\delta}(x,\sigma)\wedge\\
  & \tilde{q}^\prime=\tilde{\delta}_{obs}(\tilde{q},\sigma)\mbox{ if }\tilde{\delta}_{obs}(\tilde{q},\sigma) \mbox{ is well-defined}, \tilde{q}^\prime=\emptyset \\
  & \mbox{otherwise}\};
   \end{split}
   \end{equation*}
    (b) if $\sigma\in\hat{\Sigma}_{uo}$, then
   \begin{equation*}
   \hat{\delta}_{cc}((x,\tilde{q}),(\sigma,\epsilon))=\{(x^\prime,\tilde{q}): x^\prime\in\hat{\delta}(x,\sigma)\};
  \end{equation*}
  \item [(ii)] When $\tilde{q}=\emptyset$,\\
     (a) if $\sigma\in\hat{\Sigma}_o$, then
   \begin{equation*}
   \hat{\delta}_{cc}((x,\emptyset),(\sigma,\sigma))=\{(x^\prime,\emptyset): x^\prime\in\hat{\delta}(x,\sigma)\};
   \end{equation*}
     (b) if $\sigma\in\hat{\Sigma}_{uo}$, then
   \begin{equation*}
   \hat{\delta}_{cc}((x,\emptyset),(\sigma,\epsilon))=\{(x^\prime,\emptyset): x^\prime\in\hat{\delta}(x,\sigma)\};
  \end{equation*}
  \end{itemize}
  \item $\hat{X}_{cc,0}=\{(x,\tilde{q}_0): (\exists q\in X^{S}_{obs}\vee X^{hyb}_{obs})[(x\in X_S\cap q)\wedge(\tilde{q}_0=X_{NS}\cap q)]\}\subseteq\hat{X}_0\times\tilde{X}_{obs,0}$ stands for the set of initial states.
\end{itemize}
\end{definition}

\begin{remark}\label{re:3.3}
The concurrent composition $Cc(\hat{G},\tilde{G}_{obs})$ in Definition~\ref{de:3.2} is a variant of that of~\cite{Han(2023)}.
However, there are remarkable differences between them based on their definitions.
Intuitively, $Cc(\hat{G},\tilde{G}_{obs})$ captures that for all $x\in X_S$ and all $s\in\mathcal{L}(G,x)$ whether there exists an observation $\alpha\in\mathcal{L}(\tilde{G}_{obs},\tilde{q}_0)$ such that $P(s)=\alpha$ and the initial state $\tilde{q}_0\in\tilde{X}_{obs,0}$ of observer $\tilde{G}_{obs}$ satisfies $\tilde{q}_0=q\backslash X_S$ for some $q$ satisfying $x\in q\in X^{hyb}_{obs}$.
In other words, in $Cc(\hat{G},\tilde{G}_{obs})$ if there exists a state of the form $(\cdot,\emptyset)$, then there exists a non-negative integer $K$ such that $G$ is not $K$-SSO, and vice versa.
In addition, for a sequence $e\in\mathcal{L}(Cc(\hat{G},\tilde{G}_{obs}))$, we use the notations $e(L)$ and $e(R)$ to denote its left and right components, respectively.
Further, $P(e)$ denotes $P(e(L))$ or $e(R)$ since $P(e(L))=P(e(R))=e(R)$.
\end{remark}

\subsection{Verification for \texorpdfstring{$K$}{K}-SSO in Definition~\ref{de:3.1}}\label{subsec3.3}

In this subsection, we are ready to present the main result on the verification of $K$-SSO in Definition~\ref{de:3.1} using the proposed concurrent composition $Cc(\hat{G},\tilde{G}_{obs})$.

\begin{theorem}\label{th:3.1}
Given a system $G=(X,\Sigma,\delta,X_0)$, a projection map $P$ w.r.t. a set $\Sigma_o$ of observable events, a set $X_{S}$ of secret states, and a non-negative integer $K$,
let $Cc(\hat{G},\tilde{G}_{obs})$ be the corresponding concurrent composition.
$G$ is $K$-SSO in Definition~\ref{de:3.1} w.r.t. $\Sigma_o$ and $X_S$ if and only if there exists no state of the form $(\cdot,\emptyset)$ in $Cc(\hat{G},\tilde{G}_{obs})$ that is observationally reachable from $\hat{X}_{cc,0}$ within $K$ steps.
\end{theorem}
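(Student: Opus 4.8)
The plan is to prove the contrapositive equivalence: $G$ fails to be $K$-SSO in Definition~\ref{de:3.1} if and only if some state of the form $(\cdot,\emptyset)$ is observationally reachable from $\hat{X}_{cc,0}$ within $K$ steps. The whole argument rests on a single structural correspondence between runs of $Cc(\hat{G},\tilde{G}_{obs})$ and pairs consisting of a secret-reaching run together with the set of its candidate observationally-equivalent non-secret continuations in $G$. Once this correspondence is isolated, both implications reduce to unwinding Definition~\ref{de:3.1}.

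First I would establish the key invariant by induction on the number of transitions of a run in $Cc(\hat{G},\tilde{G}_{obs})$. Fix an initial state $(x_1,\tilde{q}_0)\in\hat{X}_{cc,0}$; by definition there is some $q\in X^{S}_{obs}\cup X^{hyb}_{obs}$ with $x_1\in X_S\cap q$ and $\tilde{q}_0=X_{NS}\cap q$. Since every observer state is reached under some observation and $G$ is accessible, there is a run $x_0\stackrel{s_1}{\rightarrow}x_1$ with $x_0\in X_0$, $x_1\in X_S$, and $q=\delta_{obs}(X_{obs,0},P(s_1))$. The invariant I would prove is: a run $(x_1,\tilde{q}_0)\stackrel{e}{\rightarrow}(x_2,\tilde{q})$ exists in $Cc(\hat{G},\tilde{G}_{obs})$ exactly when its left component traces a run $x_1\stackrel{s_2}{\rightarrow}x_2$ in $G$ with $e(L)=s_2$, and in that case $\tilde{q}$ equals the set of all $y\in X_{NS}$ for which there is a non-secret run $x_1^\prime\stackrel{s_2^\prime}{\rightarrow}y$ with $x_1^\prime\in\tilde{q}_0$ and $P(s_2^\prime)=P(s_2)=e(R)$. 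The inductive step uses the transition rules of Definition~\ref{de:3.2}: observable events $(\sigma,\sigma)$ advance the $\hat{G}$-component and the observer $\tilde{G}_{obs}$ synchronously on $\sigma$, while unobservable events $(\sigma,\epsilon)$ advance only the $\hat{G}$-component; the observer semantics of $\tilde{G}_{obs}$ on the non-secret subautomaton $\tilde{G}$ then yields precisely the stated set of non-secret endpoints, and $\tilde{q}^\prime=\emptyset$ is produced exactly when $\tilde{\delta}_{obs}$ is undefined, i.e.\ when no such continuation survives.

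With the invariant in hand, both directions follow. For the direction from the structure to the system, suppose $(x_2,\emptyset)$ is observationally reachable from $(x_1,\tilde{q}_0)$ via some $e$ with $|e(R)|=|P(s_2)|\leq K$. The invariant gives a secret-reaching run $x_0\stackrel{s_1}{\rightarrow}x_1\stackrel{s_2}{\rightarrow}x_2$ with $x_1\in X_S$ and $\tilde{q}=\emptyset$, meaning no non-secret run $x_1^\prime\stackrel{s_2^\prime}{\rightarrow}$ with $x_1^\prime\in\tilde{q}_0=X_{NS}\cap q$ and $P(s_2^\prime)=P(s_2)$ exists. Since any admissible matching run of Definition~\ref{de:3.1} must begin its non-secret part at some $x_1^\prime\in X_{NS}$ with $P(s_1^\prime)=P(s_1)$, such an $x_1^\prime$ is reachable under observation $P(s_1)$ and hence lies in $q\cap X_{NS}=\tilde{q}_0$ (forcing $q\in X^{hyb}_{obs}$); emptiness of $\tilde{q}$ therefore shows that no matching run exists, so $x_0\stackrel{s_1}{\rightarrow}x_1\stackrel{s_2}{\rightarrow}x_2$ is a leaking-secret run and $G$ is not $K$-SSO. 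The reverse direction inverts this chain: a leaking-secret run with $|P(s_2)|\leq K$ determines the observer state $q=\delta_{obs}(X_{obs,0},P(s_1))\ni x_1$, hence an initial state $(x_1,X_{NS}\cap q)\in\hat{X}_{cc,0}$, and the invariant forces the right component to be empty after reading the $e$ with $e(L)=s_2$, producing a $(\cdot,\emptyset)$ state within $|P(s_2)|\leq K$ observable steps.

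I expect the main obstacle to be the careful verification, inside the induction, that $\tilde{G}_{obs}$ tracks \emph{exactly} the non-secret continuations and no others. Two points demand attention. First, deleting all secret states in $\tilde{G}$ together with their incident transitions must rule out precisely the continuations passing through $X_S$, so that $\tilde{q}=\emptyset$ genuinely certifies the absence of a non-secret matching suffix rather than a bookkeeping artefact. Second, the seed set $\tilde{X}_{obs,0}=\{X_{NS}\cap q:\exists q\in X^{hyb}_{obs}\}$ must correctly initialize every candidate starting point $x_1^\prime\in q\cap X_{NS}$, including the degenerate case $q\in X^{S}_{obs}$, where $\tilde{q}_0=\emptyset$ already holds at step $0$; this case corresponds to the failure of $0$-SSO (standard CSO) recorded in Remark~\ref{re:3.2} and must be shown to be consistent with the invariant at the base of the induction.
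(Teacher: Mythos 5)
Your proof is correct and takes essentially the same route as the paper's: both establish the contrapositive equivalence by matching runs of $Cc(\hat{G},\tilde{G}_{obs})$ with secret runs of $G$ and their observationally-equivalent non-secret continuations, using the observer property of $Obs(G)$ to place every candidate midpoint $x_1^\prime$ in $\tilde{q}_0=q\cap X_{NS}$ and the semantics of $\tilde{G}_{obs}$ to certify that $\tilde{q}=\emptyset$ means no non-secret suffix survives. The only difference is expository: your explicit inductive invariant makes rigorous the steps the paper justifies directly ``by the construction of'' $\hat{G}$, $\tilde{G}$, and $\tilde{G}_{obs}$, including the degenerate case $\tilde{q}_0=\emptyset$ that the paper handles via a separate case and footnote.
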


\begin{proof}
$(\Rightarrow)$ By contrapositive, assume that there exists a state $(x,\emptyset)$ in $Cc(\hat{G},\tilde{G}_{obs})$ that is $k$-step observationally reachable from $\hat{X}_{cc,0}$, where $k\leq K$.
By the construction of $Cc(\hat{G},\tilde{G}_{obs})$, there exists an initial state $(x_1,\tilde{q}_0)\in\hat{X}_{cc,0}$ and a sequence $e\in\mathcal{L}(Cc(\hat{G},\tilde{G}_{obs}),(x_1,\tilde{q}_0))$ with $|P(e)|=k$ such that $(x,\emptyset)\in\hat{\delta}_{cc}((x_1,\tilde{q}_0),e)$, where $x_1\in X_S\cap q$ and $\tilde{q}_0=X_{NS}\cap q$ for some $q\in X^{S}_{obs}\vee X^{hyb}_{obs}$.
There are two cases:
When $\tilde{q}_0=\emptyset$, since $(x_1,\emptyset)$ is an initial state of $Cc(\hat{G},\tilde{G}_{obs})$, we conclude that $G$ is not $0$-SSO based on the construction of $Cc(\hat{G},\tilde{G}_{obs})$.
Hence, it is also not $K$-SSO w.r.t. $\Sigma_o$ and $X_S$.
When $\tilde{q}_0\neq\emptyset$, by the constructions of $\hat{G}$ and $\tilde{G}_{obs}$, we further have that: 1) $x\in\hat{\delta}(x_1,e(L))$, and 2) $\tilde{\delta}_{obs}(\tilde{q}_0,e(R))$ is not well-defined.
Item 1) means $x\in\delta(x_1,e(L))$.
Item 2) means, by the construction of $\tilde{G}$, that for all $x_1^\prime\in\tilde{q}_0$ and all $s_2^\prime\in\tilde{\Sigma}^\ast$ with $P(s_2^\prime)=e(R)$, it holds $\tilde{\delta}(x_1^\prime,s_2^\prime)=\emptyset$.
Therefore, there exists no non-secret run in $G$ starting at $x_1^\prime$ such that its observation is $e(R)$.
On the other hand, we conclude $\{x_1\}\cup\tilde{q}_0\subseteq q\in X^{hyb}_{obs}$ based on the fact of $\tilde{q}_0\neq\emptyset$.
Thus by the construction of $Obs(G)$, we conclude that: 1) in $G$ there exists an initial state $x_0\in X_0$ and a sequence $s_1\in\mathcal{L}(G,x_0)$ such that $x_1\in\delta(x_0,s_1)$, and 2) for each $x_1^\prime\in\tilde{q}_0$, there exists an initial state $x_0^\prime\in X_0$ and a sequence $s_1^\prime\in\mathcal{L}(G,x_0^\prime)$ with $P(s_1^\prime)=P(s_1)$ such that $x_1^\prime\in\delta(x_0^\prime,s_1^\prime)$.
Taken together, for the run $x_0\stackrel{s_1}{\rightarrow}x_1\stackrel{e(L)}{\rightarrow}x$ generated by $G$ with $|P(e(L))|=k\leq K$, there exists no run $x^\prime_0\stackrel{s_1^\prime}{\rightarrow}{x^\prime_1}\stackrel{s_2^\prime}{\rightarrow}{x_2^\prime}$ such that ${x^\prime_1}\stackrel{s_2^\prime}{\rightarrow}{x^\prime_2}$ is non-secret, where $P(s_1^\prime)=P(s_1)$ and $P(s_2^\prime)=P(s_2)$.
By Definition~\ref{de:3.1}, $G$ is not $K$-SSO w.r.t. $\Sigma_o$ and $X_S$.

$(\Leftarrow)$ Also by contrapositive, assume that $G$ is not $K$-SSO w.r.t. $\Sigma_o$ and $X_S$.
By Definition~\ref{de:3.1}, we conclude that in $G$: 1) there exists a run $x_0\stackrel{s_1}{\rightarrow}x_1\stackrel{s_2}{\rightarrow}x_2$, where $x_0\in X_0$, $x_1\in X_S$, and $|P(s_2)|=k\leq K$,
and 2) there exists no run $x^\prime_0\stackrel{s_1^\prime}{\rightarrow}{x^\prime_1}\stackrel{s_2^\prime}{\rightarrow}{x^\prime_2}$ such that ${x^\prime_1}\stackrel{s_2^\prime}{\rightarrow}{x^\prime_2}$ is non-secret, where $x^\prime_0\in X_0$, $P(s_1^\prime)=P(s_1)$, and $P(s_2^\prime)=P(s_2)$.
Item 1) means $x_1\in\delta(x_0,s_1)$ and $x_2\in\hat{\delta}(x_1,s_2)$ based on the construction of $\hat{G}$.
By the construction of $\tilde{G}$, item 2) means $\tilde{\delta}(x_1^{\prime},s_2^\prime)=\emptyset$ for all $x_1^\prime\in\tilde{q}_0=X_{NS}\cap q$, where $x_1^\prime\in q\in X^{S}_{obs}\vee X^{hyb}_{obs}$, $s_2^\prime\in\tilde{\Sigma}^\ast$, and $P(s_2^\prime)=P(s_2)$\footnote{Note that, when $q\in X^{S}_{obs}$, we have $\tilde{q}_0=X_{NS}\cap q=\emptyset$. For this case, we conclude that $(x_1,\emptyset)$ is an initial state of $Cc(\hat{G},\tilde{G}_{obs})$. Thus we can directly obtain $(x_2,\emptyset)\in\hat{\delta}_{cc}((x_1,\emptyset),e)$.}.
By the construction of $\tilde{G}_{obs}$, we further obtain $\tilde{\delta}_{obs}(\tilde{q}_0,P(s_2^\prime))=\emptyset$.
Then by the definition of $Cc(\hat{G},\tilde{G}_{obs})$, we conclude that there exists a sequence $e\in\mathcal{L}(Cc(\hat{G},\tilde{G}_{obs}))$ with $e(L)=s_2$ and $e(R)=P(s_2^\prime)$ such that $(x_2,\emptyset)\in\hat{\delta}_{cc}((x_1,\tilde{q}_0),e)$.
Since $|P(e)|=|P(e(L))|=|P(s_2)|=k$, state $(x_2,\emptyset)$ is $k$-step observationally reachable from $\hat{X}_{cc,0}$ in $Cc(\hat{G},\tilde{G}_{obs})$.
\end{proof}

Based on Theorem~\ref{th:3.1}, the verification procedure for $K$-SSO in Definition~\ref{de:3.1} can be summed up as the following Algorithm~\ref{algo:1}.
\begin{algorithm}
\caption{Verification of $K$-SSO in Definition~\ref{de:3.1}}\label{algo:1}
\begin{algorithmic}[1]
\REQUIRE System $G=(X,\Sigma,\delta,X_0)$, set $\Sigma_o$ of observable events, set $X_{S}$ of secret states, and non-negative integer $K$
\ENSURE ``Yes'' if $G$ is $K$-SSO w.r.t. $\Sigma_o$ and $X_S$, ``No" otherwise
\STATE Compute the observer $Obs(G)$ of $G$
\IF{there exists a reachable state $q\in X_{obs}$ such that $q\subseteq X_S$}
\STATE $G$ is not $0$-SSO w.r.t. $\Sigma_o$ and $X_S$
\RETURN ``No"
\STOP
\ELSE
\STATE Construct the initial-secret subautomaton $\hat{G}$ of $G$
\STATE Construct the non-secret subautomaton $\tilde{G}$ of $G$
\STATE Compute the observer $\tilde{G}_{obs}$ of $\tilde{G}$
\STATE Compute the corresponding $Cc(\hat{G},\tilde{G}_{obs})$
\STATE Use the ``Breadth-First Search Algorithm" in~\cite{Cormen(2009)} to find whether there exists a state of form $(\cdot,\emptyset)$ in $Cc(\hat{G},\tilde{G}_{obs})$ that is observationally reachable from $\hat{X}_{cc,0}$ within $K$ steps
\IF{such a state $(\cdot,\emptyset)$ in $Cc(\hat{G},\tilde{G}_{obs})$ exists}
\RETURN ``No"
\STOP
\ELSE
\RETURN ``Yes"
\STOP
\ENDIF
\ENDIF
\end{algorithmic}
\end{algorithm}

\begin{remark}\label{re:3.4}
We discuss the time complexity of using the proposed concurrent composition $Cc(\hat{G},\tilde{G}_{obs})$ to verify $K$-SSO in Definition~\ref{de:3.1}.
Specifically, computing $Obs(G)$, $\hat{G}$, $\tilde{G}$, $\tilde{G}_{obs}$, and $Cc(\hat{G},\tilde{G}_{obs})$ take time $\mathcal{O}(|\Sigma_o||\Sigma_{uo}||X|^{2}2^{|X|})$, $\mathcal{O}(|\Sigma||X|^2)$, $\mathcal{O}(|\Sigma||X|^2+|X|2^{|X|})$, $\mathcal{O}(|\Sigma_o||\Sigma_{uo}||X|^{2}2^{|X|})$, and $\mathcal{O}(|\Sigma||X|^{2}2^{|X|})$, respectively.
As a consequence, the overall (worst-case) time complexity of verifying $K$-SSO in Definition~\ref{de:3.1} using Algorithm~\ref{algo:1} is $\mathcal{O}((|\Sigma_o||\Sigma_{uo}|+|\Sigma|)|X|^{2}2^{|X|})$, 
where $|X|$ (resp., $|\Sigma|$, $|\Sigma_o|$, $|\Sigma_{uo}|$) is the number of states (resp., events, observable events, unobservable events) of system $G$. 
Note that, although Algorithm~\ref{algo:1} for verifying $K$-SSO in Definition~\ref{de:3.1} depends on $K$, the construction of the concurrent composition $Cc(\hat{G},\tilde{G}_{obs})$ does not depend on $K$.
\end{remark}

\begin{example}[\cite{Ma(2021)}]\label{ex:3.1}
Let us consider the system $G$ shown in Fig.~\ref{Fig1} in which $\Sigma_o=\{a,b,c\}$, $\Sigma_{uo}=\{u\}$, $X_0=\{0\}$, and $X_S=\{5,7\}$.
By applying Algorithm~\ref{algo:1}, we obtain the corresponding $Obs(G)$, $\hat{G}$, $\tilde{G}$, $\tilde{G}_{obs}$, and $Cc(\hat{G},\tilde{G}_{obs})$, which are depicted in~Fig.~\ref{Fig2}(a-e), respectively.
In $Cc(\hat{G},\tilde{G}_{obs})$ there exists a state $(8,\emptyset)$ that is $2$-step observationally reachable from the initial-state $(7,\{1,2,3,4\})$.
Then by Theorem~\ref{th:3.1}, $G$ is $1$-SSO in Definition~\ref{de:3.1}, but not $K$-SSO for any $K>1$.
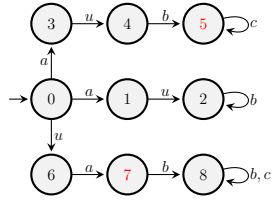
\begin{figure}[!ht]
  \centering
  	  \begin{tikzpicture}[>=stealth',shorten >=1pt,auto,node distance=2.0 cm, scale = 0.5, transform shape,
	>=stealth,inner sep=2pt]

	\node[initial, initial where = left, state] (0) {$0$};
	\node[state] (1) [right of =0] {$1$};
	\node[state] (2) [right of =1] {$2$};
	\node[state] (3) [above of =0] {$3$};
	\node[state] (4) [right of =3] {$4$};
	\node[state] (5) [right of =4] {$\red 5$};
	\node[state] (6) [below of =0] {$6$};
	\node[state] (7) [right of =6] {$\red 7$};
	\node[state] (8) [right of =7] {$8$};

	\path [->]
	(0) edge node [above, sloped] {$a$} (1)
	(1) edge node [above, sloped] {$u$} (2)
	(3) edge node [above, sloped] {$u$} (4)
	(4) edge node [above, sloped] {$b$} (5)
	(6) edge node [above, sloped] {$a$} (7)
	(7) edge node [above, sloped] {$b$} (8)
	(0) edge node {$a$} (3)
	(0) edge node {$u$} (6)
	(5) edge [loop right] node {$c$} (5)
	(2) edge [loop right] node {$b$} (2)
	(8) edge [loop right] node {$b,c$} (8)
	;

        \end{tikzpicture}
  \caption{The automaton $G$ considered in Example~\ref{ex:3.1}.}
  \label{Fig1}
\end{figure}

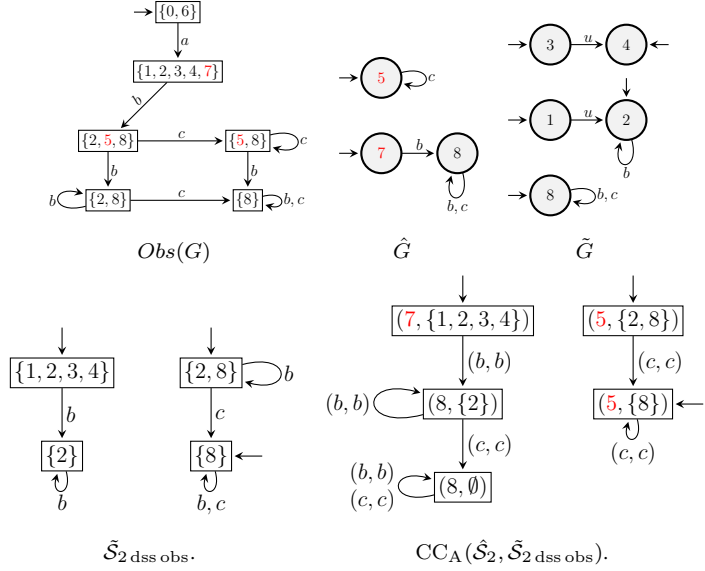
\begin{figure}[!ht]
  \centering
  \subcaptionbox*{\scriptsize$Obs(G)$}{

	  		  \begin{tikzpicture}[>=stealth',shorten >=1pt,auto,node distance=2.6 cm, scale = 0.5, transform shape,
	>=stealth,inner sep=2pt]

	\node[initial, initial where = left, rectangle state] (06) {$\{0,6\}$};
	\node[rectangle state] (12347) [below = 1cm of 06] {$\{1,2,3,4,{\red 7}\}$};
	\node[rectangle state] (258) [below left of = 12347] {$\{2,{\red 5},8\}$};
	\node[rectangle state] (58) [below right of = 12347] {$\{{\red 5},8\}$};
	\node[rectangle state] (28) [below = 1cm of 258] {$\{2,8\}$};
	\node[rectangle state] (8) [below = 1cm of 58] {$\{8\}$};
	
	\path [->]
	(06) edge node {$a$} (12347)
	(12347) edge node [above, sloped] {$b$} (258)
	(258) edge node [above, sloped] {$c$} (58)
	(58) edge node {$b$} (8)
	(58) edge [loop right] node {$c$} (58)
	(8) edge [loop right] node {$b,c$} (8)
	(258) edge node {$b$} (28)
	(28) edge node [above, sloped] {$c$} (8)
	(28) edge [loop left] node {$b$} (28)
	;

        \end{tikzpicture}
	}
	\subcaptionbox*{\scriptsize$\hat G$}{
	\begin{tikzpicture}[>=stealth',shorten >=1pt,auto,node distance=2.0 cm, scale = 0.5, transform shape,
	>=stealth,inner sep=2pt]

	\node[initial, initial where = left, state] (5) {$\red 5$};
	\node[initial, initial where = left, state] (7) [below of =5] {$\red 7$};
	\node[state] (8) [right of =7] {$8$};

	\path [->]
	(7) edge node [above, sloped] {$b$} (8)
	(5) edge [loop right] node {$c$} (5)
	(8) edge [loop below] node {$b,c$} (8)
	;

        \end{tikzpicture}
	}
	\subcaptionbox*{\scriptsize$\tilde G$}{
	\begin{tikzpicture}[>=stealth',shorten >=1pt,auto,node distance=2.0 cm, scale = 0.5, transform shape,
	>=stealth,inner sep=2pt]

	\node[initial, initial where = left, state] (1) {$1$};
	\node[initial, initial where = above, state] (2) [right of =1] {$2$};
	\node[initial, initial where = left, state] (3) [above of =1] {$3$};
	\node[initial, initial where = right, state] (4) [right of =3] {$4$};
	\node[initial, initial where = left, state] (8) [below of =1] {$8$};

	\path [->]
	(1) edge node [above, sloped] {$u$} (2)
	(3) edge node [above, sloped] {$u$} (4)
	(2) edge [loop below] node {$b$} (2)
	(8) edge [loop right] node {$b,c$} (8)
	;

        \end{tikzpicture}
	 }

	 \subcaptionbox*{\scriptsize$\tilde\Scal_{2\dss\obs}$.}{
	 \begin{tikzpicture}[>=stealth',shorten >=1pt,auto,node distance=2.8 cm, scale = 0.7, transform shape,
	>=stealth,inner sep=2pt]

	\node[initial, initial where = above, rectangle state] (1234) {$\{1,2,3,4\}$};
	\node[initial, initial where = above, rectangle state] (28) [right of =1234] {$\{2,8\}$};
	\node[rectangle state] (2) [below = 1cm of 1234] {$\{2\}$};
	\node[initial, initial where = right, rectangle state] (8) [below = 1cm of 28] {$\{8\}$};
%	\node[rectangle state] (phi) [left of = 2] {$\emptyset$};

	\path [->]
	(1234) edge node {$b$} (2)
	(28) edge node {$c$} (8)
	(28) edge [loop right] node {$b$} (28)
	(2) edge [loop below] node {$b$} (2)
	(8) edge [loop below] node {$b,c$} (8)
%	(2) edge node {$c$} (phi)
	;
    \end{tikzpicture}
	 }
	 \subcaptionbox*{\scriptsize$\CCa(\hat\Scal_2,\tilde\Scal_{2\dss\obs})$.}{
	  \begin{tikzpicture}[>=stealth',shorten >=1pt,auto,node distance=3.2 cm, scale = 0.7, transform shape,
	>=stealth,inner sep=2pt]

	\node[initial, initial where = above, rectangle state] (7-1234) {$({\red 7},\{1,2,3,4\})$};
	\node[initial, initial where = above, rectangle state] (5-28) [right of = 7-1234] {$({\red 5},\{2,8\})$};
	\node[rectangle state] (8-2) [below = 1cm of 7-1234] {$(8,\{2\})$};
	\node[initial, initial where = right, rectangle state] (5-8) [below = 1cm of 5-28] {$({\red 5},\{8\})$};
	\node[rectangle state] (8-phi) [below = 1cm of 8-2] {$(8,\emptyset)$};

	\path [->]
	(7-1234) edge node {$(b,b)$} (8-2)
	(5-28) edge node {$(c,c)$} (5-8)
	(8-2) edge node {$(c,c)$} (8-phi)
	(8-2) edge [loop left] node {$(b,b)$} (8-2)
	(5-8) edge [loop below] node {$(c,c)$} (508)
	(8-phi) edge [loop left] node {$\begin{matrix}(b,b)\\(c,c)\end{matrix}$} (8-phi)
	;

        \end{tikzpicture}
	  }
  \caption{The constructed automata $Obs(G)$, $\hat{G}$, $\tilde{G}$, $\tilde{G}_{obs}$, and $Cc(\hat{G},\tilde{G}_{obs})$ from the automaton $G$ shown in Fig.~\ref{Fig1}.}
  \label{Fig2}
\end{figure}

\end{example}

\subsection{An upper bound on \texorpdfstring{$K$}{K} in \texorpdfstring{$K$}{K}-SSO in Definition~\ref{de:3.1}}\label{subsec3.4}

In~\cite{Ma(2021)}, the authors proposed the notion of Inf-SSO and investigated its verification.
Recently, an improved algorithm for verifying Inf-SSO has been designed in our previous work~\cite{Han(2023)}.
Moreover, the authors in~\cite{Ma(2021)} used the so-called $K$-step recognizer to derive an upper bound on $K$ in $K$-SSO, i.e., $|X|(2^{|X|}-1)$.
In fact, this upper bound is conservative, which has been improved to $|\hat{X}|2^{|X\backslash X_S|}-1$ in our previous work~\cite{Zhang(2023a)}.

Now we are ready to derive an upper bound on $K$ in $K$-SSO in Definition~\ref{de:3.1}.
Specifically, by Definition~\ref{de:3.1}, we conclude readily that if a system $G$ is $K$-SSO w.r.t. $\Sigma_o$ and $X_S$, then it is also $K^\prime$-SSO for any $K^\prime\leq K$.
Conversely, if $G$ is not $K$-SSO with $K> |\hat{X}|2^{|X\backslash X_S|}-1$, by Definition~\ref{de:3.1},
we know that in $G$: 1) there exists a run $x_0\stackrel{s_1}{\rightarrow}x_1\stackrel{s_2}{\rightarrow}x_2$, where $x_0\in X_0$, $x_1\in X_S$, and $|P(s_2)|=k\leq K$,
and 2) there exists no run $x^\prime_0\stackrel{s_1^\prime}{\rightarrow}{x^\prime_1}\stackrel{s_2^\prime}{\rightarrow}{x^\prime_2}$ such that ${x^\prime_1}\stackrel{s_2^\prime}{\rightarrow}{x^\prime_2}$ is non-secret, where $x^\prime_0\in X_0$, $P(s_1^\prime)=P(s_1)$, and $P(s_2^\prime)=P(s_2)$.
Thus, we conclude that state $(x_2,\emptyset)$ in $Cc(\hat{G},\tilde{G}_{obs})$ is $k$-step observationally reachable from $\hat{X}_{cc,0}$ (see ``$\Leftarrow$" part in the proof of Theorem~\ref{th:3.1}).
Since $Cc(\hat{G},\tilde{G}_{obs})$ has at most $|\hat{X}|2^{|X\backslash X_S|}$ states, there exists a $k^\prime\leq|\hat{X}|2^{|X\backslash X_S|}-1$ such that $(x_2,\emptyset)$ is $k^\prime$-step observationally reachable from $\hat{X}_{cc,0}$.
Then by Theorem~\ref{th:3.1}, $G$ is not $k^\prime$-SSO.
Therefore, it is not $(|\hat{X}|2^{|X\backslash X_S|}-1)$-SSO.
This means that the result of determining $K$-SSO Definition~\ref{de:3.1} using Theorem~\ref{th:3.1} does not depend on $K$ when $K\geq|\hat{X}|2^{|X\backslash X_S|}-1$.
Therefore, the upper bound on $K$ in $K$-SSO in Definition~\ref{de:3.1} is also $|\hat{X}|2^{|X\backslash X_S|}-1$, which directly results in Theorem~\ref{th:3.2} below.

\begin{theorem}\label{th:3.2}
Given a system $G=(X,\Sigma,\delta,X_0)$, a projection map $P$ w.r.t. a set $\Sigma_o$ of observable events, a set $X_{S}$ of secret states, and a non-negative integer $K$, 
$G$ is $K$-SSO in Definition~\ref{de:3.1} w.r.t. $\Sigma_o$ and $X_S$ if and only if it is min$\{K, |\hat{X}|2^{|X\backslash X_S|}-1\}$-SSO in Definition~\ref{de:3.1} w.r.t. $\Sigma_o$ and $X_S$.
\end{theorem}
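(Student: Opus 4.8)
The plan is to reduce the statement to a bound on observational reachability in $Cc(\hat{G},\tilde{G}_{obs})$ and to combine that bound with the characterization of $K$-SSO provided by Theorem~\ref{th:3.1}. Write $B:=|\hat{X}|2^{|X\backslash X_S|}-1$. If $K\le B$ then $\min\{K,B\}=K$ and there is nothing to prove, so I would assume $K>B$ and show that $G$ is $K$-SSO in Definition~\ref{de:3.1} if and only if it is $B$-SSO.

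First I would record the easy monotonicity direction. Directly from Definition~\ref{de:3.1}, the defining condition for $K'$-SSO quantifies over the runs $x_0\stackrel{s_1}{\rightarrow}x_1\stackrel{s_2}{\rightarrow}x_2$ with $|P(s_2)|\le K'$, and this family of runs only grows as $K'$ increases; hence being $K$-SSO implies being $K'$-SSO for every $K'\le K$. In particular $K$-SSO implies $B$-SSO, which settles one implication immediately.

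The substantive direction is the converse, which I would prove in contrapositive form: if $G$ is not $K$-SSO, then it is not $B$-SSO. By Theorem~\ref{th:3.1}, failure of $K$-SSO yields a state of the form $(\cdot,\emptyset)$ that is $k$-step observationally reachable from $\hat{X}_{cc,0}$ for some $k\le K$. The key lemma I would establish is that \emph{every} observationally reachable state of $Cc(\hat{G},\tilde{G}_{obs})$ is in fact observationally reachable within $B$ steps. Since $\hat{X}_{cc}\subseteq\hat{X}\times 2^{\tilde{X}}$ and $\tilde{X}\subseteq X\backslash X_S$, the composition has at most $|\hat{X}|2^{|X\backslash X_S|}=B+1$ states. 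I would then argue by saturation of the layered reachability sets: letting $R_m$ denote the set of states observationally reachable within $m$ steps, the chain $R_0\subseteq R_1\subseteq\cdots\subseteq\hat{X}_{cc}$ is non-decreasing, and $R_{m+1}$ is obtained from $R_m$ by the single-observable-step closure operator (each observable transition bundled with its surrounding unobservable $(\sigma,\epsilon)$-moves), so $R_{m+1}=R_m$ forces stabilization from that point on. As $|R_0|\ge 1$ and each strict inclusion adds at least one of the at most $B+1$ states, the chain must stabilize by step $B$; thus $R_B$ already equals the full observationally reachable set. Applying this to the offending $(\cdot,\emptyset)$ state shows it is $k'$-step observationally reachable for some $k'\le B$, and Theorem~\ref{th:3.1} then gives that $G$ is not $B$-SSO, as required.

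Combining the two directions yields that $G$ is $K$-SSO if and only if it is $B$-SSO $=\min\{K,B\}$-SSO whenever $K>B$, which together with the trivial case $K\le B$ proves the theorem. I expect the main obstacle to be the saturation lemma: one must be careful that the relevant notion is \emph{observational} reachability (counting only observable steps, while allowing arbitrarily long unobservable segments in between), so the one-step closure operator defining $R_{m+1}$ must fold each observable transition together with its adjacent $\epsilon$-moves, and one must verify that the equality $R_{m+1}=R_m$ genuinely propagates to all later layers. The remaining arithmetic — the $B+1$ state count from $\hat{X}\times 2^{\tilde{X}}$ and the pigeonhole on the strict inclusions — is routine.
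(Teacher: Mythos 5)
Your proposal is correct and follows essentially the same route as the paper: monotonicity of Definition~\ref{de:3.1} in $K$ for one direction, and, for the converse, the contrapositive via Theorem~\ref{th:3.1} combined with the fact that any observationally reachable state of $Cc(\hat{G},\tilde{G}_{obs})$ (which has at most $|\hat{X}|2^{|X\backslash X_S|}$ states) is observationally reachable within $|\hat{X}|2^{|X\backslash X_S|}-1$ steps. The paper treats that reachability bound as an implicit pigeonhole/path-shortening step, whereas you justify it by saturation of the layered reachability sets $R_0\subseteq R_1\subseteq\cdots$; the two justifications are interchangeable.
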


\begin{remark}\label{re:3.5}
The authors in~\cite{Yin(2017)} derived an upper bound (i.e., $2^{|X|}-2$) on $K$ in the standard $K$-SO to show that both the standard $K$-SO and Inf-SO notions are equivalent based on this upper bound.
Similar to the standard $K$-SO, there is also an upper bound on $K$ for $K$-SSO defined in~\cite{Falcone(2015)} (resp., defined in~\cite{Zhang(2023b)}).
The upper bound is $|X|(2^{|X\backslash X_S|}-1)$.
In contrast, there exists no such an upper bound on $K$ in $K$-SSO in Definition~\ref{de:3.1} that establishes the equivalent relationship between $K$-SSO and Inf-SSO.
Let us consider the system $G$ shown in Fig.~\ref{Fig3}.
By Definitions~\ref{de:3.1} and~\ref{de:4.2} below, we conclude readily that $G$ is $K$-SSO in Definition~\ref{de:3.1} for any given non-negative integer $K$, but not Inf-SSO.
\begin{figure}[!ht]
  \centering
  \begin{tikzpicture}[>=stealth',shorten >=1pt,auto,node distance=2.0 cm, scale = 0.7, transform shape,
	>=stealth,inner sep=2pt]

	\node[initial, initial where = left, state] (0) {$0$};
	\node[state, right of = 0] (1) {$\red 1$};
	\node[state, above right of = 1] (2) {$2$};
	\node[state, below right of = 1] (3) {$3$};

	\path [->]
	(0) edge node [above, sloped] {$u$} (1)
	(1) edge node [above, sloped] {$a$} (2)
	(2) edge [loop right] node {$b$} (2)
	(1) edge node [above, sloped] {$u$} (3)
	(3) edge node {$a$} (2)
	;

  \end{tikzpicture}
  \caption{A automaton $G$ in which $\Sigma_o=\{a,b\}$, $X_0=\{0\}$, and $X_S=\{1\}$.}
  \label{Fig3}
\end{figure}
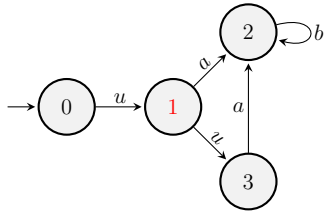
\end{remark}

\section{Enforcement of strong SBO}\label{sec4}

In this section, we focus on how to use the proposed two concurrent-composition structures to enforce the strong SBO notions: $K$-SSO, SCSO, SISO, and Inf-SSO.
We propose a novel opacity-enforcement mechanism, which chooses transitions labeled by controllable events to disable before an original system starts to run for the sake of cutting off all the runs violating the corresponding strong SBO definition (if possible).
Compared with the classical supervisory control framework that dynamically chooses controllable events to disable according to the observed output sequences
(see, e.g., \cite{Dubreil(2010),Saboori(2012b),Yin(2016),Tong(2018),Moulton(2022)}), the proposed opacity-enforcement mechanism has better scalability and wider implementability since it removes the limitation of \emph{observability condition} in supervisory control framework (i.e., all sequences generated by an original system that have the same projection require the same control action) and other assumptions such as the containment relation between $\Sigma_o$ and $\Sigma_c$, etc.
Note that, the two aforementioned opacity-enforcement approaches have the same prerequisite: the structure of a manmade system is known and partially changeable in advance.

This section includes three parts below: Subsection~\ref{subsec4.1} is \emph{problem formulation} for enforcing the notions of strong SBO.
In subsection~\ref{subsec4.2}, we use the proposed concurrent-composition $Cc(\hat{G},\tilde{G}_{obs})$ constructed in Section~\ref{sec3} to design an efficient algorithm for enforcing $K$-SSO in Definition~\ref{de:3.1}.
The enforcement algorithms of SCSO, SISO, and Inf-SSO using our previously-proposed structure $Cc(G, Obs(G_{dss}))$ in~\cite{Han(2023)} are presented in subsection~\ref{subsec4.3}.

\subsection{Problem formulation}\label{subsec4.1}

By convention, when we discuss the problem of enforcing opacity for a given system $G=(X,\Sigma,\delta,$ $X_0)$, its event set $\Sigma$ is partitioned into two disjoint subsets, i.e., $\Sigma=\Sigma_c\cup\Sigma_{uc}$, where $\Sigma_c$ is the set of controllable events and $\Sigma_{uc}$ is the set of uncontrollable events.
In our previous work~\cite{Han(2023)} and the Section~\ref{sec3} of this paper, we proposed two concurrent-composition structures $Cc(G, Obs(G_{dss}))$ and $Cc(\hat{G},\tilde{G}_{obs})$ to verify SCSO, SISO, Inf-SSO, and $K$-SSO, respectively, which is currently the most efficient opacity verification algorithms.
In the following, we will use these two structures to enforce them under the proposed framework.
We name it \emph{strong SBO enforcement problem} (SSBOEP).

To formulate SSBOEP, we use the notation $\mathcal{T}^G$ to denote the set consisting of all transitions of system $G$.
In particular, a transition $x\stackrel{\sigma}{\rightarrow}y\in\mathcal{T}^G$ is called \emph{controllable} if $\sigma\in\Sigma_c$.
A \emph{controllable run} in $G$ is a run that contains a controllable transition.
We denote the set of all controllable transitions of $G$ by $\mathcal{T}_c^G$.
Note that, if all transitions in $\mathcal{T}^G$ are uncontrollable, then it holds $\mathcal{T}_c^G=\emptyset$.
It is traditionally assumed that one can only disable controllable transitions to restrict the behavior of $G$, while its uncontrollable transitions can never be disabled.

\begin{problem}\label{problem:1}
Given a system $G=(X,\Sigma,\delta,X_0)$, a projection map $P$ w.r.t. a set $\Sigma_o$ of observable events, a set $\Sigma_c$ of controllable events, and a set $X_{S}\subseteq X$ of secret states,
compute a subset $\mathcal{E}_c\subseteq\mathcal{T}_c^G$ of controllable transitions (if one exists) such that the subsystem $G_{\backslash\mathcal{E}_c}$ obtained from system $G$ by disabling all transitions in $\mathcal{E}_c$ is $K$-SSO in Definition~\ref{de:3.1} for a given non-negative integer $K$ (resp., SCSO, SISO, and Inf-SSO) w.r.t. $\Sigma_o$ and $X_S$.
\end{problem}

Note that, Problem~\ref{problem:1} may have no solution due to system settings on uncontrollable events and secret states.
However, Problem~\ref{problem:1} may have more than one solution.
In this paper, we only consider how to find a valid solution of Problem~\ref{problem:1} (if exists) by choosing a number of controllable transitions to disable in order to cut off all runs in the corresponding concurrent composition that violate a strong SBO notion of interest.
In particular, when system $G$ is opaque\footnote{In this paper, the usage of the word ``opacity/opaque"  hereafter means ``one of all strong SBO notions", which depends on the context.}, such a solution of Problem~\ref{problem:1} is equal to the empty set (i.e., $\mathcal{E}_c=\emptyset$).
In the following two subsections, we design efficient algorithms to check the existence of a solution to Problem~\ref{problem:1} and compute one if one exists.

\subsection{Enforcement of \texorpdfstring{$K$}{K}-SSO in Definition~\ref{de:3.1}}\label{subsec4.2}

In this subsection, we design an efficient algorithm in the proposed opacity-enforcement framework to enforce $K$-SSO in Definition~\ref{de:3.1} using the concurrent-composition structures $Cc(\hat{G},\tilde{G}_{obs})$ (proposed in Section~\ref{sec3}) and $Cc(G, Obs(G))$.
It is noted that $Cc(G, Obs(G))$ is a slight variant of the parallel composition $G||Obs(G)$ in the current special case that the events of $Obs(G)$ are also events of $G$ and $Obs(G)$ does not have unobservable events.
Specifically, $Cc(G, Obs(G))$ can be obtained from $G||Obs(G)$ by extending the scalar events of $G||Obs(G)$ to the events of vector form by doubling events.
However, in more general cases, these two composition operations are remarkably different, see~\cite{Zhang(2023a)} for details.
Note that, the parallel composition $G||Obs(G)$ has been employed to enforce standard CSO under the supervisory control framework with a number of assumptions~\cite{Tong(2018),Moulton(2022)}.
Now we recall the properties of $Cc(G, Obs(G))$/$G||Obs(G)$.
Details see~\cite{Tong(2018)} and/or~\cite{Moulton(2022)}.

\begin{proposition}[\cite{Moulton(2022)}]\label{pro:4.1}
Given a system $G=(X,\Sigma,\delta,$ $X_0)$, a projection map $P$ w.r.t. a set $\Sigma_o$ of observable events, and a set $X_{S}$ of secret states, let $Obs(G)$ be the observer of $G$.
Then
\begin{enumerate}
  \item For any $q\in X_{obs}$ and any $x\in q$, $(x, q)$ is a state in $Cc(G, Obs(G))$;
  \item $(x, q)$ is a state in $Cc(G, Obs(G))$ if and only if there exists a string $s\in\mathcal{L}(G)$ that leads to state $x$ in $G$ and whose projection leads to estimate $q$ in $Obs(G)$;
  \item $G$ is standard CSO w.r.t. $\Sigma_o$ and $X_{S}$ if and only if there exists no state $(x, q)$ in $Cc(G, Obs(G))$ such that $q\subseteq X_S$.
\end{enumerate}
\end{proposition}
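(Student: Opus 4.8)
The plan is to establish the reachability characterization in item~2 first, since items~1 and~3 follow from it with little extra work. The heart of item~2 is the invariant that a pair $(x,q)$ is reachable in $Cc(G, Obs(G))$ along a run whose left-component label is $s$ if and only if $x\in\delta(X_0,s)$ and $q=\delta_{obs}(X_{obs,0},P(s))$. I would prove this invariant by induction on $|s|$, exploiting the two-regime transition rule inherited from the concurrent-composition template in Definition~\ref{de:3.2}: on an observable event $\sigma$ both coordinates advance via the vector $(\sigma,\sigma)$, so the left coordinate follows a transition of $G$ while the right coordinate follows the deterministic $\delta_{obs}$-step; on an unobservable event the vector is $(\sigma,\epsilon)$, so only the left coordinate moves and $q$ is unchanged. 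The base case amounts to handling the initial states together with the initial unobservable closure already folded into $X_{obs,0}$.

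For the forward direction of item~2, given a reachable $(x,q)$, I would project a witnessing run of $Cc(G, Obs(G))$ onto its left coordinate to extract a string $s\in\mathcal{L}(G)$ with $x\in\delta(X_0,s)$; the inductive invariant then forces $q=\delta_{obs}(X_{obs,0},P(s))$, i.e. $P(s)$ leads to estimate $q$ in $Obs(G)$. For the converse, given $s$ with $x\in\delta(X_0,s)$ and $q=\delta_{obs}(X_{obs,0},P(s))$, I would fix one concrete run of $s$ in $G$ and lift it step by step to a run of $Cc(G, Obs(G))$, using that the observer is complete and deterministic so that the right coordinate is always determined as $\delta_{obs}(X_{obs,0},P(w))$ for the current prefix $w$; at the end the right coordinate is exactly $\delta_{obs}(X_{obs,0},P(s))=q$.

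Item~1 is then immediate: if $q\in X_{obs}$ and $x\in q$, then reachability of $q$ gives a string $t$ with $\delta_{obs}(X_{obs,0},t)=q$, and $x\in q$ means by the definition of the observer that some $s$ with $P(s)=t$ satisfies $x\in\delta(X_0,s)$; applying the converse of item~2 yields that $(x,q)$ is a state. For item~3, item~2 shows that the collection of second coordinates occurring among the states of $Cc(G, Obs(G))$ is exactly the set $X_{obs}$ of reachable observer states (nonemptiness of each $q\in X_{obs}$ together with item~1 gives one containment, and item~2 gives the other). Hence a state $(x,q)$ with $q\subseteq X_S$ exists in $Cc(G, Obs(G))$ if and only if some reachable observer state is contained in $X_S$, and by the observer-based characterization of standard current-state opacity recalled in Remark~\ref{re:3.2} (equivalently $0$-SSO), this happens precisely when $G$ is not standard CSO.

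The hard part will be bookkeeping the observer's unobservable closure correctly inside the induction: I must ensure that advancing the right coordinate by the deterministic $\delta_{obs}$ on an observable event already absorbs the subsequent unobservable transitions of $G$, so that after reading $s$ the right coordinate coincides with the \emph{full} reachable set $\delta_{obs}(X_{obs,0},P(s))$ rather than a proper subset, even though distinct runs with the same projection may land the left coordinate at different states of that set. Pinning down the initial-state convention (whether $Cc(G, Obs(G))$ starts at $(x_0,X_{obs,0})$ with $x_0\in X_0$ and then performs the initial unobservable moves in the left coordinate, or starts directly at $(x,X_{obs,0})$ for $x\in X_{obs,0}$) is the other place where care is needed, and it must be made consistent with the definition of $\delta_{obs}$.
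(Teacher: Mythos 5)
The paper never proves Proposition~\ref{pro:4.1}: it is recalled as a known result from \cite{Moulton(2022)} (see also \cite{Tong(2018)}), so there is no in-paper argument to compare against, and your proposal should be judged as a self-contained proof. As such it is correct in structure and is the standard argument: the product invariant --- $(x,q)$ is reachable in $Cc(G,Obs(G))$ along a run with left label $s$ if and only if $x\in\delta(X_0,s)$ and $q=\delta_{obs}(X_{obs,0},P(s))$ --- established by induction on $|s|$ using the two-regime transition rule, with item~1 and item~3 as corollaries (item~3 additionally invoking the observer characterization of standard CSO recalled in Remark~\ref{re:3.2}, i.e.\ \cite{Saboori(2007)}). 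Your initial-state worry is resolved by the paper's own usage (e.g.\ Fig.~\ref{Fig5}): the composition starts at $(x_0,X_{obs,0})$ with $x_0\in X_0$, and the initial unobservable closure is carried out in the left coordinate by $(\sigma,\epsilon)$-moves, while $X_{obs,0}$ already contains that closure on the right.

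One point needs repair, though it is not fatal. You justify the lifting step in the converse direction of item~2 by saying the observer is ``complete and deterministic.'' It is deterministic but \emph{not} complete: the paper defines $\delta_{obs}$ as a partial function, defined only when the successor set is nonempty. The definedness of the right coordinate's move must instead come from strengthening your induction hypothesis with $x\in q$ (equivalently, $\delta(X_0,s)\subseteq\delta_{obs}(X_{obs,0},P(s))$, itself part of the same induction): then at an observable step with $x'\in\delta(x,\sigma)$ and $x\in q$, the definition of $\delta_{obs}$ gives $x'\in\delta_{obs}(q,\sigma)$, so this set is nonempty and the transition exists. The same strengthened invariant is what you are implicitly using in item~1, where ``$x\in q$ means by the definition of the observer that some $s$ with $P(s)=t$ satisfies $x\in\delta(X_0,s)$'' is the multi-step observer correctness property and deserves the same inductive justification rather than being read off the one-step definition of $\delta_{obs}$.
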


The following result establishes the relationship between the initial states in $Cc(\hat{G},\tilde{G}_{obs})$ and the corresponding states in $Cc(G, Obs(G))$.

\begin{proposition}\label{pro:4.2}
Given a system $G=(X,\Sigma,\delta,X_0)$ and a set $X_{S}$ of secret states, let $Cc(G, Obs(G))$ and $Cc(\hat{G},\tilde{G}_{obs})$ be two corresponding concurrent compositions.
Then for any initial state $(x,\tilde{q}_0)\in\hat{X}_{cc,0}$ in $Cc(\hat{G},\tilde{G}_{obs})$, there exists at least one corresponding state $(x, q)$ in $Cc(G, Obs(G))$ such that $\tilde{q}_0=q\setminus X_S$.
\end{proposition}

\begin{proof}
By definitions of $Cc(G,Obs(G))$ and $Cc(\hat{G},\tilde{G}_{obs})$, this proof trivially holds.
\end{proof}

By Proposition~\ref{pro:4.2}, we conclude that when an initial state $(x,\tilde{q}_0)\in\hat{X}_{cc,0}$ of $Cc(\hat{G},\tilde{G}_{obs})$ violates $K$-SSO in Definition~\ref{de:3.1}, in order to prevent it from being generated, in $Cc(G, Obs(G))$ we need to cut off all runs ended with state $(x, q)$ by disabling the corresponding controllable transitions, where $\tilde{q}_0=q\setminus X_S$.
Based on the above preparation, we design Algorithm~\ref{algo:2} to enforce $K$-SSO in Definition~\ref{de:3.1}.

\begin{algorithm}
\caption{Enforcement of $K$-SSO in Definition~\ref{de:3.1}}\label{algo:2}
\begin{algorithmic}[1]
\REQUIRE System $G=(X,\Sigma,\delta,X_0)$, set $\Sigma_o\subseteq\Sigma$ of observable events, set $\Sigma_c\subseteq\Sigma$ of controllable events, set $X_{S}\subseteq X$ of secret states, and non-negative integer $K$
\ENSURE ``Yes" if $G$ can be enforced $K$-SSO in Definition~\ref{de:3.1}, ``No" otherwise; in case of ``Yes", a subset $\mathcal{E}_c\subseteq\mathcal{T}_c^G$ of controllable transitions such that subsystem $G_{\backslash\mathcal{E}_c}$ is $K$-SSO in Definition~\ref{de:3.1}
\STATE Compute $Cc(\hat{G},\tilde{G}_{obs})$ and $Cc(G,Obs(G))$
\STATE Initialize $\mathcal{E}_c=\varnothing$
\IF{$G$ is $K$-SSO in Definition~\ref{de:3.1}}
\RETURN ``Yes" and ``$\mathcal{E}_c=\varnothing$"
\STOP
\ELSE
\WHILE{there exists a state of the form $(\cdot,\emptyset)$ in $Cc(\hat{G},\tilde{G}_{obs})$ that are observationally reachable from $\hat{X}_{cc,0}$ within $K$ steps}
\STATE Collect all states of the form $(\cdot,\emptyset)$ in $Cc(\hat{G},\tilde{G}_{obs})$ that are observationally reachable from $\hat{X}_{cc,0}$ within $K$ steps as a set $\Theta$
\FOR{each state of the form $(\cdot,\emptyset)$ in $\Theta$}
\IF{there exists an uncontrollable leaking-secret run $(x,\tilde{q}_0)\stackrel{e}{\rightarrow}(\cdot,\emptyset)$ generated by $Cc(\hat{G},\tilde{G}_{obs})$, where $|P(e)|\leq K$}
\STATE Mark all states $(x, q)$ in $Cc(G, Obs(G))$ that correspond to state $(x,\tilde{q}_0)$, where $\tilde{q}_0=q\setminus X_S$

\IF{there exists an uncontrollable run from the initial-state set of $Cc(G,Obs(G))$ to state $(x,q)$}
\RETURN ``No"
\STOP
\ELSE
\STATE Compute the set $\Lambda$ consisting of all last controllable transitions of all controllable runs from $\hat{X}_{cc,0}$ to $(\cdot,\emptyset)$ in $\Theta$ of observational length at most $K$ and all last controllable transitions of all controllable runs $(x_0,q_0)\stackrel{\tilde{e}}{\rightarrow}(x, q)$ (where $x\in X_S$) generated by $Cc(G,Obs(G))$ that are predecessor runs of all uncontrollable runs (if exist) from $\hat{X}_{cc,0}$ to $(\cdot,\emptyset)$ in $\Theta$ of observational length at most $K$ of $Cc(\hat{G},\tilde{G}_{obs})$
\FOR{each controllable transition $(x_1,q_1)\stackrel{{(\sigma,\tilde{\sigma}})}{\longrightarrow}(x_2, q_2)$ in $\Lambda$, where $\sigma\in\Sigma_c$}\label{line1}
\STATE Cut off $(x_1,q_1)\stackrel{{(\sigma,\tilde{\sigma}})}{\longrightarrow}(x_2,q_2)$ by disabling its left component $x_1\stackrel{\sigma}{\rightarrow}x_2$ in $G$, add $x_1\stackrel{\sigma}{\rightarrow}x_2$ to $\mathcal{E}_c$
\ENDFOR
\ENDIF
\ENDIF
\ENDFOR
\STATE Update $Ac(Cc(\hat{G},\tilde{G}_{obs}))$ and $Ac(Cc(G,Obs(G)))$
\ENDWHILE
\ENDIF
\end{algorithmic}
\end{algorithm}

Now, let us intuitively explain Algorithm~\ref{algo:2}.
To enforce $K$-SSO in Definition~\ref{de:3.1}, we first need to know whether there exists a leaking-secret run that violates $K$-SSO in Definition~\ref{de:3.1} in $Cc(\hat{G},\tilde{G}_{obs})$.
If such a leaking-secret run exists, by Theorem~\ref{th:3.1}, in Line $8$ we compute the set $\Theta$ consisting of all states of the form $(\cdot,\emptyset)$ that are observationally reachable from $\hat{X}_{cc,0}$ within $K$ steps.
In Lines $10$ to $13$, we need to further capture whether there exists an uncontrollable leaking-secret run $(x,\tilde{q}_0)\stackrel{e}{\rightarrow}(\cdot,\emptyset)$ of observational length at most $K$ in $Cc(\hat{G},\tilde{G}_{obs})$ and its an uncontrollable predecessor run $(x_0, q_0)\stackrel{\tilde{e}}{\rightarrow}(x, q)$ in $Cc(G, Obs(G))$, where $\tilde{q}_0=q\setminus X_S$.
If such an uncontrollable leaking-secret run exists, then $G$ cannot be enforced to be $K$-SSO in Definition~\ref{de:3.1}.
Whereas if all leaking-secret runs that violate $K$-SSO in Definition~\ref{de:3.1} are controllable, in order to cut off them, our strategy is to disable the last controllable transition along each one of them\footnote{A transition $(x_1, q_1)\stackrel{{(\sigma,\tilde{\sigma}})}{\longrightarrow}(x_2, q_2)$ in $Cc(\hat{G},\tilde{G}_{obs})$ (resp., $Cc(G, Obs(G))$) is called controllable if $\sigma\in\Sigma_c$.
A controllable run in $Cc(\hat{G},\tilde{G}_{obs})$ (resp., $Cc(G, Obs(G))$) is a run that contains a controllable transition.}.
This process can be done in Lines $16$ to $18$.
Note that, the set $\Lambda$ in Line $16$ can be computed using the ``Breadth-First Search Algorithm" in~\cite{Cormen(2009)}.
It is worth noting that when we disable a controllable transition in $Cc(\hat{G},\tilde{G}_{obs})$ or $Cc(G, Obs(G))$, it may make a non-leaking-secret run to become leaking-secret. (see Example~\ref{ex:4.1} below).
To this end, in Line $23$, we repeatedly update $Ac(Cc(\hat{G},\tilde{G}_{obs}))$ and $AC(Cc(G,Obs(G)))$.

\begin{remark}\label{re:4.1}
In particular, when $K=0$, the enforcement of $0$-SSO in Definition~\ref{de:3.1} (i.e., standard CSO) in Algorithm~\ref{algo:2} can be simplified using the concurrent composition $Cc(G,Obs(G))$ based on Proposition~\ref{pro:4.1}.
$Cc(\hat{G},\tilde{G}_{obs})$ is not needed.
\end{remark}

Next, we prove the correctness of Algorithm~\ref{algo:2} for enforcing $K$-SSO in Definition~\ref{de:3.1}.

\begin{theorem}\label{th:4.1}
A subset $\mathcal{E}_c\subseteq\mathcal{T}_c^G$ returned by Algorithm~\ref{algo:2} (if exists) is a solution to Problem~\ref{problem:1} on $K$-SSO in Definition~\ref{de:3.1}.
\end{theorem}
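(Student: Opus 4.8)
The plan is to establish correctness by proving two claims: (i) if Algorithm~\ref{algo:2} returns a set $\mathcal{E}_c$, then the resulting subsystem $G_{\backslash\mathcal{E}_c}$ is indeed $K$-SSO in Definition~\ref{de:3.1}; and (ii) the transitions placed into $\mathcal{E}_c$ are all controllable, so $\mathcal{E}_c\subseteq\mathcal{T}_c^G$ as required by Problem~\ref{problem:1}. The second claim is immediate from the guard $\sigma\in\Sigma_c$ in Line~\ref{line1}, since only left components of controllable transitions are ever added. So the substance lies in (i), and I would reduce it to the characterization provided by Theorem~\ref{th:3.1}: $G_{\backslash\mathcal{E}_c}$ is $K$-SSO if and only if its concurrent composition contains no state of the form $(\cdot,\emptyset)$ that is observationally reachable from the initial-state set within $K$ steps. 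Thus it suffices to show that upon termination every such offending state has been removed from $Cc(\hat{G},\tilde{G}_{obs})$, i.e. the \textbf{while}-loop exit condition is exactly the $K$-SSO condition of Theorem~\ref{th:3.1}.

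First I would argue termination. Each iteration of the \textbf{while}-loop that does not return ``No'' disables at least one controllable transition (every offending state in $\Theta$ is reached by a controllable leaking-secret run, so $\Lambda\neq\emptyset$), and once a transition is disabled in $G$ it is removed from both compositions and never reinstated. Since $G$ has finitely many transitions, the loop runs only finitely often. I would then argue soundness of the disabling step: for each offending state $(\cdot,\emptyset)\in\Theta$, Theorem~\ref{th:3.1} together with Definition~\ref{de:3.1} tells us it corresponds to a leaking-secret run $x_0\stackrel{s_1}{\rightarrow}x_1\stackrel{s_2}{\rightarrow}x_2$. Cutting the last controllable transition along this run in $Cc(\hat{G},\tilde{G}_{obs})$ (or, when that run is itself uncontrollable but has a controllable predecessor leading into the secret entry state, cutting in $Cc(G,Obs(G))$ via the correspondence $\tilde q_0=q\setminus X_S$ supplied by Proposition~\ref{pro:4.2}) destroys the run and hence makes $(\cdot,\emptyset)$ no longer reachable by that path. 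Running the loop to fixpoint and re-accessibilizing via $Ac(\cdot)$ in Line~23 guarantees that at termination no offending state remains, which by Theorem~\ref{th:3.1} is precisely $K$-SSO of $G_{\backslash\mathcal{E}_c}$.

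Next I would justify the ``No'' branch as a genuine impossibility result, not merely a failure of this particular strategy. When Line~12 detects an uncontrollable run from the initial-state set of $Cc(G,Obs(G))$ to the marked state $(x,q)$ combined with an uncontrollable leaking-secret run $(x,\tilde q_0)\stackrel{e}{\rightarrow}(\cdot,\emptyset)$, the composite run consists entirely of uncontrollable transitions. By the standing assumption that uncontrollable transitions can never be disabled, \emph{no} choice of $\mathcal{E}_c\subseteq\mathcal{T}_c^G$ can remove this leaking-secret run, so by Theorem~\ref{th:3.1} no subsystem $G_{\backslash\mathcal{E}_c}$ can be $K$-SSO; returning ``No'' is therefore correct. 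Here the correspondence between the two compositions (Propositions~\ref{pro:4.1} and~\ref{pro:4.2}) is what lets us transfer the uncontrollability witnessed in the $K$-step structure $Cc(\hat{G},\tilde{G}_{obs})$ back to a concrete run of $G$ read off from $Cc(G,Obs(G))$.

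The main obstacle I anticipate is the subtlety flagged in the discussion preceding the theorem, namely that disabling a controllable transition may turn a previously non-leaking-secret run into a leaking-secret one: removing a transition deletes the alternative non-secret observationally-equivalent witness run required by Definition~\ref{de:3.1}, thereby creating a \emph{new} $(\cdot,\emptyset)$ state. The proof must show the overall process still converges and terminates with \emph{all} offending states eliminated. I would handle this by relying on the re-accessibilization (Line~23) followed by re-entering the \textbf{while}-loop: the invariant to maintain is that $\mathcal{E}_c$ is monotonically growing while the transition set of $G_{\backslash\mathcal{E}_c}$ is strictly shrinking, so newly created offending states are caught on the next iteration and the fixpoint is reached after finitely many rounds. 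Verifying that this iteration does not spuriously return ``No'' (i.e. that a newly-exposed uncontrollable leaking run would have existed independently of the disabling) is the delicate point, and I would address it by noting that disabling only removes runs and never adds transitions to $G$, so any uncontrollable run present after disabling was already present before, making the ``No'' verdict robust under iteration.
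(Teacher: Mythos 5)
Your proposal is correct and its core is essentially the paper's own argument: soundness of a returned $\mathcal{E}_c$ is reduced to Theorem~\ref{th:3.1} by observing that Algorithm~\ref{algo:2} can only exit with ``Yes'' (whether $\mathcal{E}_c=\emptyset$ or not) when no state of the form $(\cdot,\emptyset)$ remains observationally reachable from the initial states within $K$ steps in the updated $Cc(\hat{G},\tilde{G}_{obs})$, whence $G_{\backslash\mathcal{E}_c}$ is $K$-SSO. The extra material you include goes beyond what this statement needs: termination is not required since the theorem conditions on a set actually being returned, and the impossibility argument for the ``No'' branch is the paper's separate Theorem~\ref{th:4.2}, where it is proved by essentially the argument you sketch.
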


\begin{proof}
There are two cases: 1) When $\mathcal{E}_c=\emptyset$, i.e., there exists no state of the form $(\cdot,\emptyset)$ in $Cc(\hat{G},\tilde{G}_{obs})$ that is observationally reachable from $\hat{X}_{cc,0}$ within $K$ steps.
Then by Theorem~\ref{th:3.1}, $G_{\backslash\mathcal{E}_c}=G$ is $K$-SSO.
Obviously, $\mathcal{E}_c=\emptyset$ is a solution to Problem~\ref{problem:1} on $K$-SSO in Definition~\ref{de:3.1}.
2) When $\mathcal{E}_c\neq\emptyset$, by Algorithm~\ref{algo:2}, we conclude that in the most updated $Ac(Cc(\hat{G},\tilde{G}_{obs}))$ (if exists\footnote{When Algorithm~\ref{algo:2} outputs $\mathcal{E}_c\neq\emptyset$, it is possible that all secret states in $X_S$ have been removed in the most updated $G$. In this case, $Ac(Cc(\hat{G},\tilde{G}_{obs}))$ disappears and the subsystem $G_{\backslash\mathcal{E}_c}$ is $K$-SSO in Definition~\ref{de:3.1}.}) there is no state of the form $(\cdot,\emptyset)$ that is observationally reachable from the initial-state set within $K$ steps.
Then by Theorem~\ref{th:3.1}, the subsystem $G_{\backslash\mathcal{E}_c}$ is $K$-SSO.
Taken together, Algorithm~\ref{algo:2} can correctly enforce $K$-SSO in Definition~\ref{de:3.1} for a given system $G$, a set $X_S\subseteq X$ of secret states, and a non-negative integer $K$.
\end{proof}

\begin{example}\label{ex:4.1}
Let us consider again the system $G$ depicted in Fig.~\ref{Fig1}.
According to Example~\ref{ex:3.1}, $G$ is not $2$-SSO in Definition~\ref{de:3.1}.
Now we demonstrate how Algorithm~\ref{algo:2} works for solving Problem~\ref{problem:1} on $2$-SSO.
Assume that only event $c$ is uncontrollable, i.e., $\Sigma_{uc}=\{c\}$.
In order to prevent $Cc(\hat{G},\tilde{G}_{obs})$ depicted in Fig.~\ref{Fig2}(e) from reaching state $(8,\emptyset)$ in $2$ observational steps, we need to cut off controllable transition $(7,\{1,2,3,4\})\stackrel{(b,b)}{\longrightarrow}(8,\{2\})$.
This equivalently means that controllable transition $7\stackrel{b}{\rightarrow}8$ in $G$ needs to be disabled.
Hence, we obtain the subsystem $G^1$ and its corresponding concurrent composition $Cc(\hat{G}^1,\tilde{G}^1_{obs})$ shown in~Fig.~\ref{Fig:subfig4a} and~Fig.~\ref{Fig:subfig4b}, respectively.

\begin{figure}[!ht]
\centering
\subcaptionbox{The subautomaton $G^1$\label{Fig:subfig4a}}{
    	  \begin{tikzpicture}[>=stealth',shorten >=1pt,auto,node distance=2.0 cm, scale = 0.5, transform shape,
	>=stealth,inner sep=2pt]

	\tikzstyle{emptynode}=[inner sep=0,outer sep=0]

	\node[initial, initial where = left, state] (0) {$0$};
	\node[state] (1) [right of =0] {$1$};
	\node[state] (2) [right of =1] {$2$};
	\node[state] (3) [above of =0] {$3$};
	\node[state] (4) [right of =3] {$4$};
	\node[state] (5) [right of =4] {$\red 5$};
	\node[state] (6) [below of =0] {$6$};
	\node[state] (7) [right of =6] {$\red 7$};
	\node[emptynode] (e1) [right of =5] {};
	\node[emptynode] (e2) [left of =3] {};

	\path [->]
	(0) edge node [above, sloped] {$a$} (1)
	(1) edge node [above, sloped] {$u$} (2)
	(3) edge node [above, sloped] {$u$} (4)
	(4) edge node [above, sloped] {$b$} (5)
	(6) edge node [above, sloped] {$a$} (7)
	(0) edge node {$a$} (3)
	(0) edge node {$u$} (6)
	(5) edge [loop right] node {$c$} (5)
	(2) edge [loop right] node {$b$} (2)
	;

        \end{tikzpicture}
}
\subcaptionbox{$Cc(\hat{G}^1,\tilde{G}^1_{obs})$\label{Fig:subfig4b}}{
  	  \begin{tikzpicture}[>=stealth',shorten >=1pt,auto,node distance=3.2 cm, scale = 0.5, transform shape,
	>=stealth,inner sep=2pt]

	\node[initial, initial where = above, rectangle state] (7-1234) {$({\red 7},\{1,2,3,4\})$};
	\node[initial, initial where = above, rectangle state] (5-2) [right of = 7-1234] {$({\red 5},\{2\})$};
	\node[initial, initial where = right, rectangle state] (5-phi) [below of = 5-2] {$({\red 5},\emptyset)$};

	\path [->]
	(5-2) edge node {$(c,c)$} (5-phi)
	(5-phi) edge [loop below] node {$(c,c)$} (5-phi)
	;

        \end{tikzpicture}
}
\caption{The subautomaton $G^1$ and its corresponding concurrent composition $Cc(\hat{G}^1,\tilde{G}^1_{obs})$.}
\label{Fig4}
\end{figure}
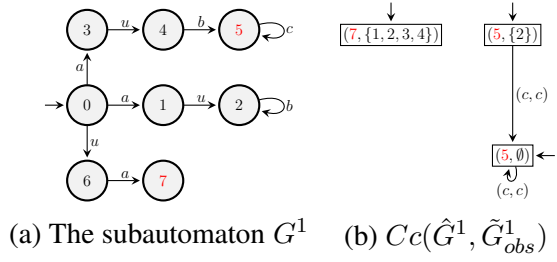

By~Fig.~\ref{Fig:subfig4b}, initial state $(5,\emptyset)$ of $Cc(\hat{G}^1,\tilde{G}^1_{obs})$ should also be removed.
Furthermore, since transition $(5,\{2\})\stackrel{(c,c)}{\longrightarrow}(5,\emptyset)$ is uncontrollable, initial state $(5,\{2\})$ should also be removed.
In order to prevent these two initial states $(5,\{2\})$ and $(5,\emptyset)$ from being reached, we construct the concurrent composition $Cc(G^1,Obs(G^1))$, as shown in~Fig.~\ref{Fig5}.
Since initial states $(5,\{2\})$ and $(5,\emptyset)$ in $Cc(\hat{G}^1,\tilde{G}^1_{obs})$ correspond to states $(5,\{2,5\})$ and $(5,\{5\})$ in $Cc(G^1,Obs(G^1))$, respectively, we cut off controllable transition $(4,\{1-4,7\})\stackrel{(b,b)}{\longrightarrow}(5,\{2,5\})$ ( i.e., $4\stackrel{b}{\rightarrow}5$ in $G_1$).
This results in the subsystem $G^2$ shown in~Fig.~\ref{Fig:subfig6a}, whose corresponding concurrent composition $Cc(\hat{G}^2,\tilde{G}^2_{obs})$ is shown in~Fig.~\ref{Fig:subfig6b}.
By Theorem~\ref{th:3.1}, $G^2=G_{\backslash\mathcal{E}_c}$ is $2$-SSO in Definition~\ref{de:3.1}, where $\mathcal{E}_c=\{4\stackrel{b}{\rightarrow}5, 7\stackrel{b}{\rightarrow}8\}$.
Thus, subset $\mathcal{E}_c=\{4\stackrel{b}{\rightarrow}5, 7\stackrel{b}{\rightarrow}8\}$ is a solution of Problem~\ref{problem:1} on $2$-SSO in Definition~\ref{de:3.1}.

\begin{figure}[!ht]
  \centering
	\begin{tikzpicture}[>=stealth',shorten >=1pt,auto,node distance=4.0 cm, scale = 0.5, transform shape,
	>=stealth,inner sep=2pt]

	\node[initial, initial where = left, rectangle state] (0-06) {$(0,\{0,6\})$};
	\node[rectangle state] (3-1-47) [above = 1cm of 0-06] {$(3,\{1-4,7\})$};
	\node[rectangle state] (6-06) [below = 1cm of 0-06] {$(6,\{0,6\})$};
	\node[rectangle state] (1-1-47) [right of = 0-06] {$(1,\{1-4,7\})$};
	\node[rectangle state] (4-1-47) [above = 1cm of 1-1-47] {$(4,\{1-4,7\})$};
	\node[rectangle state] (7-1-47) [below = 1cm of 1-1-47] {$({\red 7},\{1-4,7\})$};
	\node[rectangle state] (5-25) [right of = 4-1-47] {$({\red 5},\{2,5\})$};
	\node[rectangle state] (5-5) [right of = 5-25] {$({\red 5},\{5\})$};
	\node[rectangle state] (2-1-47) [right of = 1-1-47] {$(2,\{1-4,7\})$};
	\node[rectangle state] (2-25) [below = 1cm of 2-1-47] {$(2,\{2,5\})$};
	\node[rectangle state] (2-2) [right of = 2-25] {$(2,\{2\})$};

	\path [->]
	(0-06) edge node {$(a,a)$} (3-1-47)
	(0-06) edge node {$(u,\epsilon)$} (6-06)
	(0-06) edge node {$(a,a)$} (1-1-47)
	(3-1-47) edge node {$(u,\epsilon)$} (4-1-47)
	(6-06) edge node {$(a,a)$} (7-1-47)
	(4-1-47) edge node {$(b,b)$} (5-25)
	(5-25) edge node {$(c,c)$} (5-5)
	(5-5) edge [loop below] node {$(c,c)$} (5-5)
	(1-1-47) edge node {$(u,\epsilon)$} (2-1-47)
	(2-1-47) edge node {$(b,b)$} (2-25)
	(2-25) edge node {$(b,b)$} (2-2)
	(2-2) edge [loop above] node {$(b,b)$} (2-2)
	;

        \end{tikzpicture}
  \caption{The concurrent composition $Cc(G^1,Obs(G^1))$ for the subautomaton $G^1$ shown in Fig~\ref{Fig:subfig4a}.}
  \label{Fig5}
\end{figure}

\begin{figure}[!ht]
\centering
\subcaptionbox{The subautomaton $G^2$
\label{Fig:subfig6a}}{
      	  \begin{tikzpicture}[>=stealth',shorten >=1pt,auto,node distance=2.0 cm, scale = 0.5, transform shape,
	>=stealth,inner sep=2pt]

	\tikzstyle{emptynode}=[inner sep=0,outer sep=0]

	\node[initial, initial where = left, state] (0) {$0$};
	\node[state] (1) [right of =0] {$1$};
	\node[state] (2) [right of =1] {$2$};
	\node[state] (3) [above of =0] {$3$};
	\node[state] (4) [right of =3] {$4$};
	\node[state] (6) [below of =0] {$6$};
	\node[state] (7) [right of =6] {$\red 7$};

	\node[emptynode] (e1) [right of =2] {};
	\node[emptynode] (e2) [left of =0] {};

	\path [->]
	(0) edge node [above, sloped] {$a$} (1)
	(1) edge node [above, sloped] {$u$} (2)
	(3) edge node [above, sloped] {$u$} (4)
	(6) edge node [above, sloped] {$a$} (7)
	(0) edge node {$a$} (3)
	(0) edge node {$u$} (6)
	(2) edge [loop right] node {$b$} (2)
	;

        \end{tikzpicture}
}
\subcaptionbox{$Cc(\hat{G}^2,\tilde{G}^2_{obs})$
\label{Fig:subfig6b}}{
    	  \begin{tikzpicture}[>=stealth',shorten >=1pt,auto,node distance=3.2 cm, scale = 0.8, transform shape,
	>=stealth,inner sep=2pt]

	\tikzstyle{emptynode}=[inner sep=0,outer sep=0]

	\node[initial, initial where = above, rectangle state] (7-1234) {$({\red 7},\{1,2,3,4\})$};
	\node[emptynode] (e1) [right of =7-1234] {};
	\node[emptynode] (e2) [left of =7-1234] {};

    \end{tikzpicture}
}
\caption{The subautomaton $G^2$ and its corresponding concurrent composition $Cc(\hat{G}^2,\tilde{G}^2_{obs})$.}
\label{Fig6}
\end{figure}
\end{example}

The following result reveals that Problem~\ref{problem:1} on $K$-SSO in Definition~\ref{de:3.1} has a solution if and only if Algorithm~\ref{algo:2} outputs ``Yes" and a subset $\mathcal{E}_c\subseteq\mathcal{T}_c^G$.

\begin{theorem}\label{th:4.2}
Given a system $G=(X,\Sigma,\delta,X_0)$, a set $\Sigma_o\subseteq\Sigma$ of observable events, a set $\Sigma_c\subseteq\Sigma$ of controllable events, a set $X_S\subseteq X$ of secret states, and a non-negative integer $K$,
if Algorithm~\ref{algo:2} outputs ``No" (i.e., no subset $\mathcal{E}_c\subseteq\mathcal{T}_c^G$ is returned by Algorithm~\ref{algo:2}), then Problem~\ref{problem:1} on $K$-SSO in Definition~\ref{de:3.1} has no solution.
\end{theorem}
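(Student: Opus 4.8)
The plan is to prove the direct form of the statement: assuming Algorithm~\ref{algo:2} returns ``No'', I will exhibit a single secret-visiting run of $G$ that (i) consists entirely of uncontrollable transitions and (ii) admits no non-secret witness in any subsystem obtained by disabling controllable transitions. Such a run is a \emph{leaking-secret run} in the sense of Definition~\ref{de:3.1} that cannot be cut, so $G_{\backslash\mathcal{E}_c}$ fails to be $K$-SSO for every $\mathcal{E}_c\subseteq\mathcal{T}_c^G$, which is exactly the assertion that Problem~\ref{problem:1} has no solution.

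First I would read off the precise condition that forces the ``No'' output. Inspecting Algorithm~\ref{algo:2}, ``No'' is returned only inside the FOR loop, when for some state $(\cdot,\emptyset)\in\Theta$ there simultaneously exist (a) an uncontrollable leaking-secret run $(x,\tilde{q}_0)\stackrel{e}{\rightarrow}(\cdot,\emptyset)$ in $Cc(\hat{G},\tilde{G}_{obs})$ with $|P(e)|\leq K$, and (b) an uncontrollable run from the initial-state set of $Cc(G,Obs(G))$ to a corresponding state $(x,q)$ with $\tilde{q}_0=q\setminus X_S$. I would then reconstruct an actual plant run from these two fragments: by Propositions~\ref{pro:4.1} and~\ref{pro:4.2}, fragment (b) projects, via its left component, onto a run $x_0\stackrel{s_1}{\rightarrow}x$ of $G$ with $x_0\in X_0$ and $x\in X_S$; by the constructions of $\hat{G}$ and $\tilde{G}_{obs}$ and the correspondence used in the ``$\Rightarrow$'' part of Theorem~\ref{th:3.1}, fragment (a) projects, via its left component $e(L)=s_2$, onto a run $x\stackrel{s_2}{\rightarrow}x_2$ of $G$ with $|P(s_2)|\leq K$ for which no non-secret run of matching observation exists. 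Gluing the two at the shared secret state $x$ yields a single run $r:\,x_0\stackrel{s_1}{\rightarrow}x\stackrel{s_2}{\rightarrow}x_2$ of $G$, and because both fragments are uncontrollable, $r$ contains no transition labelled by an event of $\Sigma_c$.

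Next I would close the argument by irremovability plus monotonicity. Since $r$ uses only uncontrollable transitions, disabling any $\mathcal{E}_c\subseteq\mathcal{T}_c^G$ leaves $r$ intact, so $r$ is a run of $G_{\backslash\mathcal{E}_c}$ for every choice of $\mathcal{E}_c$. Moreover $\mathcal{L}(G_{\backslash\mathcal{E}_c})\subseteq\mathcal{L}(G)$, hence any non-secret witness for $r$ present in $G_{\backslash\mathcal{E}_c}$ would already be present in the system in which $r$ was detected to have none; it follows that $r$ remains a leaking-secret run of $G_{\backslash\mathcal{E}_c}$, and by Definition~\ref{de:3.1} (equivalently Theorem~\ref{th:3.1}) no $G_{\backslash\mathcal{E}_c}$ is $K$-SSO. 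As $\mathcal{E}_c$ was arbitrary, Problem~\ref{problem:1} on $K$-SSO in Definition~\ref{de:3.1} has no solution.

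The main obstacle I anticipate lies in the \emph{WHILE}-loop bookkeeping rather than in the single-pass logic above. The ``No'' branch may be reached only after several earlier iterations have already disabled a set of controllable transitions, so the run $r$ is, a priori, detected as leaking only in the current partially-modified system $G_{\backslash\mathcal{E}_c^{\mathrm{cur}}}$, not necessarily in $G$ itself: an earlier disabling could in principle have deleted a non-secret witness for $r$. The careful part is therefore to establish an invariant that the controllable transitions disabled before the ``No'' step are \emph{safe}, i.e.\ that they never destroy a witness on which some valid solution $\mathcal{E}_c^{*}$ would rely, so that a non-secret witness for $r$ in any $G_{\backslash\mathcal{E}_c}$ would have survived into $G_{\backslash\mathcal{E}_c^{\mathrm{cur}}}$, contradicting the detection of $(\cdot,\emptyset)$. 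Verifying this invariant by induction on the loop iterations, using that each disabled transition is the last controllable transition of a run already doomed to reach some $(\cdot,\emptyset)$, is where the real work sits; the uncontrollability of $r$ is what makes the conclusion unavoidable once the invariant is in place.
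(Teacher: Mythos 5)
Your proposal is sound only in the special case where Algorithm~\ref{algo:2} returns ``No'' during its \emph{first} pass through the \textbf{while}-loop, i.e.\ when no transition has yet been disabled. In that case your argument coincides with Case~1 of the paper's proof: the reconstructed run $r$ is uncontrollable and leaking in $G$ itself, it survives in every $G_{\backslash\mathcal{E}_c}$, and since $\mathcal{L}(G_{\backslash\mathcal{E}_c})\subseteq\mathcal{L}(G)$ no subsystem can contain a witness that $G$ lacks. But, as you yourself concede in your final paragraph, when ``No'' is emitted in a later iteration the run $r$ is only known to be leaking in the partially modified system $G_{\backslash\mathcal{E}_c^{\mathrm{cur}}}$, and the inclusion you invoke no longer does any work: for an arbitrary candidate solution $\mathcal{E}_c$ one has $\mathcal{L}(G_{\backslash\mathcal{E}_c})\subseteq\mathcal{L}(G)$ but \emph{not} $\mathcal{L}(G_{\backslash\mathcal{E}_c})\subseteq\mathcal{L}(G_{\backslash\mathcal{E}_c^{\mathrm{cur}}})$, because $\mathcal{E}_c$ and $\mathcal{E}_c^{\mathrm{cur}}$ are incomparable sets of transitions; a witness destroyed by $\mathcal{E}_c^{\mathrm{cur}}$ may be intact under $\mathcal{E}_c$. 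Your proposed repair---an invariant stating that the transitions disabled by the algorithm are ``safe,'' never destroying a witness that some valid solution relies on---is stated but never proved, and that invariant \emph{is} the entire content of the theorem; everything else in the proposal is routine. Declaring it to be ``where the real work sits'' leaves a genuine gap, not a deferred technicality, and it is not even clear that the invariant holds in the form you state it or that it yields to a simple induction: the natural inductive attempt founders on the possibility that a valid solution cuts a leaking-secret run by disabling an \emph{earlier} controllable transition while retaining the last one, and excluding that requires reasoning about runs rerouted through the shared transition.

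For comparison, the paper fills this gap by a different route (its Case~2). It considers the chain of secret runs $r^{(1)},\dots,r^{(n)}$ that successively become leaking as transitions are disabled, with $r^{(n)}$ containing no controllable event, and then argues that this cascade is insensitive to \emph{which} controllable transitions are chosen to cut the leaking runs: whether one disables the last controllable transitions (as the algorithm does) or an arbitrary collection of controllable transitions (as a hypothetically cleverer solution might), every way of cutting off $r^{(1)},\dots,r^{(n-1)}$ ends with $r^{(n)}$ being an uncontrollable leaking-secret run that cannot be removed. In other words, instead of showing that the algorithm's greedy choices are harmless to some optimal solution (your invariant), the paper shows that \emph{all} admissible choices terminate in the same dead end, so no $\mathcal{E}_c\subseteq\mathcal{T}_c^G$ can succeed. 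Any completion of your proposal would have to supply an argument of comparable substance for the post-disabling case; without it, the proof does not establish the theorem.
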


\begin{proof}
Assume that Algorithm~\ref{algo:2} outputs ``No".
Then there are two cases.

\textbf{1)} There exists at least one uncontrollable leaking-secret run $(x,\tilde{q}_0)\stackrel{e}{\rightarrow}(\cdot,\emptyset)$ of observational length at most $K$ in $Cc(\hat{G},\tilde{G}_{obs})$ and at least one uncontrollable predecessor run $(x_0, q_0)\stackrel{\tilde{e}}{\rightarrow}(x, q)$ in $Cc(G, Obs(G))$, where $(x,\tilde{q}_0)\in\hat{X}_{cc,0}$, $(x_0,q_0)$ is an initial state of $Cc(G,Obs(G))$, and $\tilde{q}_0=q\setminus X_S$.
By the constructions of both $Cc(\hat{G},\tilde{G}_{obs})$ and $Cc(G, Obs(G))$, we conclude that when $G$ generates the run $x_0\stackrel{\tilde{e}(L)}{\rightarrow}x\stackrel{e(L)}{\rightarrow}\cdot$, an intruder determines for sure that $G$ is at some secret states after $\tilde{e}(L)$ has just been generated within additional $K$ observable steps given that the observations are $P(\tilde{e}(L)e(L))$.
Since the leaking-secret run $x_0\stackrel{\tilde{e}(L)}{\rightarrow}x\stackrel{e(L)}{\rightarrow}\cdot$ is uncontrollable, it cannot be prevented from being generated.
Hence, Problem~\ref{problem:1} on $K$-SSO in Definition~\ref{de:3.1} has no solution.

\textbf{2)} Assume initially that there exists at least one controllable leaking-secret run, and there exists no uncontrollable leaking-secret run, but after Algorithm~\ref{algo:2} runs for a while, an uncontrollable leaking-secret run appears.
In this case, Algorithm~\ref{algo:2} outputs ``No".
Denote the secret runs corresponding to this process by $r^{(i)}:=x_0^{(i)}\xrightarrow[]{\tilde e^{(i)}}x_1^{(i)}\xrightarrow[]{e^{(i)}}x_2^{(i)}$, where $x_0^{(i)}\in X_0$, $x_1^{(i)}\in X_S$, $|P(e^{(i)})|\le K$, $i=1,2,\dots,n$,
$\tilde e^{(j)}e^{(j)}$ contains at least one controllable event, $j=1,2,\dots,n-1$, $\tilde e^{(n)}e^{(n)}$ contains no controllable event.
Also assume initially only $r^{(1)}$ is a leaking-secret run among $r^{(1)},r^{(2)},\dots,r^{(n)}$.
Assume after the {\bf for}-loop since Line~\ref{line1} starts, after the $j$-th execution of the {\bf for}-loop, $j=1,2,\dots,n-1$,
the last controllable transition of run $r^{(j)}$ is disabled, and then run $r^{(j+1)}$ becomes leaking-secret.
Then finally, $r^{(n)}$ becomes an uncontrollable leaking-secret run, and Algorithm~\ref{algo:2} returns ``No".

Next, we show that if during each execution of the {\bf for}-loop, Algorithm~\ref{algo:2} does not necessarily disable the ``last'' controllable transitions, but arbitrarily choose controllable transitions to disable, Algorithm~\ref{algo:2} still will return ``No".
Now we make the setting of the {\bf for}-loop to be more flexible as follows: in the first execution of the {\bf for}-loop, we do not only disable the last controllable transition of $r^{(1)}$, but choose an arbitrary number of controllable transitions (may including the last controllable transition of $r^{(1)}$) to disable to cut off all runs whose label sequences are the same as the label sequence of $r^{(1)}$, then we will have all runs $r^{(k)}$ are cut off with $1\le k< l_1$ for some $1< l_1\le n$, and run $r^{(l_1)}$ becomes a leaking-secret run.
Then we do the same thing to $r^{(l_1)}$, and will have $r^{(l_2)}$ becomes a leaking-secret run, where $l_1<l_2\le n$.
After repeating this for no more than $n-1$ times, we will have all runs $r^{(i)}$ ($1\le i<n$) are cut off, and $r^{(n)}$ becomes an uncontrollable leaking-secret run.
Then we cannot make the currently updated $G$ become $K$-SSO in Definition~\ref{de:3.1} by choosing an arbitrary number of controllable transitions to disable.
Therefore, system $G$ cannot be enforced to be $K$-SSO in Definition~\ref{de:3.1}.
\end{proof}

\begin{example}\label{ex:4.1a}
Consider the system $G$ depicted in Fig.~\ref{Fig7} in which $\Sigma_o=\{a,b,c\}$, $\Sigma_c=\{u,v\}$, $X_0=\{0\}$, and $X_S=\{4,13,16\}$.
The corresponding concurrent composition $Cc(\hat{G},\tilde{G}_{obs})$ is shown in Fig.~\ref{Fig8}.
By Theorem~\ref{th:3.1}, $G$ is $1$-SSO in Definition~\ref{de:3.1}, but not $2$-SSO.

\begin{figure}[!ht]
  \centering
	\begin{tikzpicture}[>=stealth',shorten >=1pt,auto,node distance=2.0 cm, scale = 0.5, transform shape,
	>=stealth,inner sep=2pt]

	\node[initial, initial where = left, state] (0) {$0$};
	\node[state] (1) [right of =0] {$1$};
	\node[state] (2) [right of =1] {$2$};
	\node[state] (3) [right of =2] {$3$};
	\node[state] (4) [right of =3] {$\red 4$};
	\node[state] (5) [right of =4] {$5$};
	\node[state] (6) [below of =1] {$6$};
	\node[state] (7) [below of =2] {$7$};
	\node[state] (14) [below of =6] {$14$};
	\node[state] (15) [below of =7] {$15$};
	\node[state] (16) [right of =15] {$\red 16$};
	\node[state] (8) [above of =0] {$8$};
	\node[state] (9) [right of =8] {$9$};
	\node[state] (10) [right of =9] {$10$};
	\node[state] (11) [right of =10] {$11$};
	\node[state] (12) [right of =11] {$12$};
	\node[state] (13) [right of =12] {$\red13$};

	\path [->]
	(0) edge node [above, sloped] {$a$} (1)
	(1) edge node [above, sloped] {$u$} (2)
	(2) edge node [above, sloped] {$c$} (3)
	(3) edge node [above, sloped] {$v$} (4)
	(4) edge node {$a$} (5)
	(5) edge [loop right] node {$a,b$} (5)
	(0) edge [bend right] node {$u$} (6)
	(6) edge node {$a$} (7)
	(7) edge [bend right] node {$c$} (3)
	(0) edge [bend right] node {$a$} (14)
	(14) edge node {$c$} (15)
	(15) edge node {$a$} (16)
	(0) edge node {$v$} (8)
	(8) edge node {$a$} (9)
	(9) edge node {$u$} (10)
	(10) edge node {$c$} (11)
	(11) edge node {$a$} (12)
	(12) edge node {$\tau$} (13)
	(13) edge [loop right] node {$b$} (13)
	;

        \end{tikzpicture}
  \caption{The automaton $G$ considered in Example~\ref{ex:4.1a}.}
  \label{Fig7}
\end{figure}
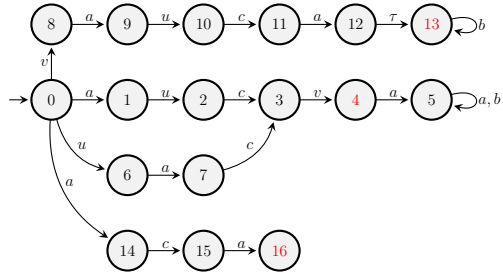

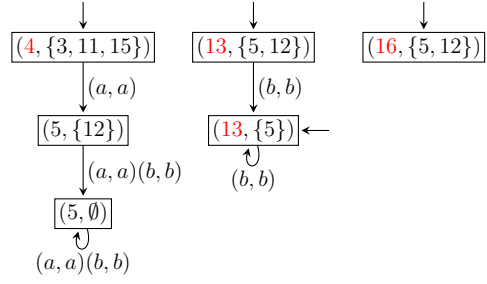
\begin{figure}[!ht]
  \centering
  	\begin{tikzpicture}[>=stealth',shorten >=1pt,auto,node distance=3.2 cm, scale = 0.7, transform shape,
	>=stealth,inner sep=2pt]

	\node[initial, initial where = above, rectangle state] (4-31115) {$({\red 4},\{3,11,15\})$};
	\node[initial, initial where = above, rectangle state] (13-512) [right of = 4-31115] {$({\red 13},\{5,12\})$};
	\node[initial, initial where = above, rectangle state] (16-512) [right of = 13-512] {$({\red 16},\{5,12\})$};
	\node[rectangle state] (5-12) [below = 1cm of 4-31115] {$(5,\{12\})$};
	\node[initial, initial where = right, rectangle state] (13-5) [below = 1cm of 13-512] {$({\red 13},\{5\})$};
	\node[rectangle state] (5-phi) [below = 1cm of 5-12] {$(5,\emptyset)$};

	\path [->]
	(4-31115) edge node {$(a,a)$} (5-12)
	(5-12) edge node {$(a,a)(b,b)$} (5-phi)
	(5-phi) edge [loop below] node {$(a,a)(b,b)$} (5-phi)
	(13-512) edge node {$(b,b)$} (13-5)
	(13-5) edge [loop below] node {$(b,b)$} (13-5)
	;

  \end{tikzpicture}
  \caption{The concurrent composition $Cc(\hat{G},\tilde{G}_{obs})$ for the automaton $G$ shown in Fig.~\ref{Fig7}.}
  \label{Fig8}
\end{figure}

Next, we want to know whether the enforcement of $2$-SSO of $G$ can be done using Algorithm~\ref{algo:2}.
Specifically, in order to prevent state $(5,\emptyset)$ in $Cc(\hat{G},\tilde{G}_{obs})$ from being reached within $2$ observational steps,
states $(4,\{3,11,15\})$ and $(5,\{12\})$ should be removed since events $a$ and $b$ are uncontrollable.
To this end, we construct the concurrent composition $Cc(G,Obs(G))$.
For the sake of simplicity, it is enough to depict part of $Cc(G,Obs(G))$ in Fig.~\ref{Fig9}.
Since state $(4,\{3,11,15\})$ in $Cc(\hat{G},\tilde{G}_{obs})$ corresponds to state $(4,\{3,4,11,15\})$ in $Cc(G,Obs(G))$, we cut off controllable transition
$(3,\{3,4,11,15\})\stackrel{(v,\epsilon)}{\longrightarrow}(4,\{3,4,11,15\})$ ( i.e., $3\stackrel{v}{\rightarrow}4$ in $G$).
This results in the subsystem $G^1$ shown in Fig.~\ref{Fig:subfig10a}, whose corresponding concurrent composition $Cc(\hat{G}^1,\tilde{G}^1_{obs})$ is shown in Fig.~\ref{Fig:subfig10b}.
By Theorem~\ref{th:3.1}, states $(13,12)$ and $(13,\emptyset)$ in $Cc(\hat{G}^1,\tilde{G}^1_{obs})$ should be forbidden to be reached.
Therefore, we construct part of the concurrent composition $Cc(G^1,Obs(G^1))$, as shown in Fig.~\ref{Fig11}.
In order to prevent states $(13,\{12,13,16\})$ and $(13,\{13\})$ in $Cc(G^1,Obs(G^1))$ from being reached, we cut off controllable transition $(9,\{1,2,7,9,10,14\})\stackrel{(u,\epsilon)}{\longrightarrow}(10,\{1,2,7,9,10,14\})$ ( i.e., $9\stackrel{u}{\rightarrow}10$ in $G^1$).
We obtain the subsystem $G^2$ shown in Fig.~\ref{Fig12}, whose corresponding concurrent composition $Cc(\hat{G}^2,\tilde{G}^2_{obs})$ is shown in Fig.~\ref{Fig12}.
In $Cc(\hat{G}^2,\tilde{G}^2_{obs})$, initial state $(16,\emptyset)$ should be forbidden to be reached.
Toward that end, we construct part of the concurrent composition $Cc(G^2,Obs(G^2))$ in Fig.~\ref{Fig13} in which state $(16,\{16\})$ corresponds to state $(16,\emptyset)$ in $Cc(\hat{G}^2,\tilde{G}^2_{obs})$.
Since the only run ended with state $(16,\{16\})$ is uncontrollable, Algorithm~\ref{algo:2} outputs ``No".
Therefore, $G$ cannot be enforced to be $2$-SSO.
Note that, it is not difficult to conclude that $G$ cannot be enforced to be $2$-SSO no matter how to cut off its controllable transitions.

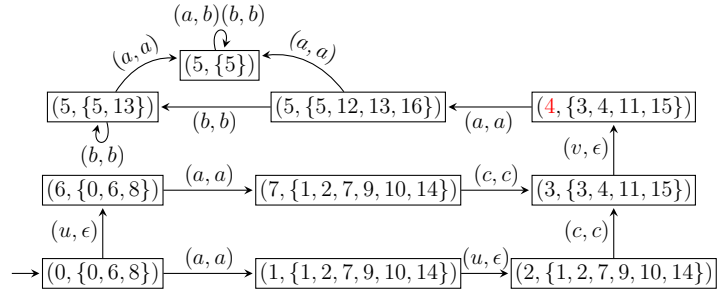
\begin{figure}[!ht]
  \centering
  	\begin{tikzpicture}[>=stealth',shorten >=1pt,auto,node distance=4.8 cm, scale = 0.7, transform shape,
	>=stealth,inner sep=2pt]

	\node[rectangle state] (6-068) {$(6,\{0,6,8\})$};
	\node[rectangle state] (5-513) [above = 1cm of 6-068] {$(5,\{5,13\})$};
	\node[initial, initial where = left, rectangle state] (0-068) [below = 1cm of 6-068] {$(0,\{0,6,8\})$};
	\node[rectangle state] (7-12791014) [right of = 6-068] {$(7,\{1,2,7,9,10,14\})$};
	\node[rectangle state] (5-5121316) [above = 1cm of 7-12791014] {$(5,\{5,12,13,16\})$};
	\node[rectangle state] (1-12791014) [below = 1cm of 7-12791014] {$(1,\{1,2,7,9,10,14\})$};
	\node[rectangle state] (4-341115) [right of = 5-5121316] {$({\red 4},\{3,4,11,15\})$};
	\node[rectangle state] (3-341115) [right of = 7-12791014] {$(3,\{3,4,11,15\})$};
	\node[rectangle state] (2-12791014) [below = 1cm of 3-341115] {$(2,\{1,2,7,9,10,14\})$};
	\node[rectangle state] (5-5) [above left = 0.3cm of 5-5121316] {$(5,\{5\})$};

	\path [->]
	(0-068) edge node {$(u,\epsilon)$} (6-068)
	(6-068) edge node {$(a,a)$} (7-12791014)
	(5-5121316) edge node {$(b,b)$} (5-513)
	(0-068) edge node {$(a,a)$} (1-12791014)
	(4-341115) edge node {$(a,a)$} (5-5121316)
	(7-12791014) edge node {$(c,c)$} (3-341115)
	(2-12791014) edge node {$(c,c)$} (3-341115)
	(1-12791014) edge node {$(u,\epsilon)$} (2-12791014)
	(3-341115) edge node {$(v,\epsilon)$} (4-341115)
	(5-513) edge [loop below] node {$(b,b)$} (5-513)
	(5-5) edge [loop above] node {$(a,b)(b,b)$} (5-5)
	(5-513) edge [bend left] node [sloped, above] {$(a,a)$} (5-5)
	(5-5121316) edge [bend right] node [sloped, above] {$(a,a)$} (5-5)
	;

  \end{tikzpicture}
  \caption{Part of the concurrent composition $Cc(G,Obs(G))$ for the automaton $G$ shown in Fig.~\ref{Fig7}.}
  \label{Fig9}
\end{figure}

\begin{figure}[!ht]
\centering
\subcaptionbox{The subautomaton $G^1$\label{Fig:subfig10a}}{
  	\begin{tikzpicture}[>=stealth',shorten >=1pt,auto,node distance=2.0 cm, scale = 0.5, transform shape,
	>=stealth,inner sep=2pt]

	\node[initial, initial where = left, state] (0) {$0$};
	\node[state] (1) [right of =0] {$1$};
	\node[state] (2) [right of =1] {$2$};
	\node[state] (3) [right of =2] {$3$};
	\node[state] (6) [below of =1] {$6$};
	\node[state] (7) [below of =2] {$7$};
	\node[state] (14) [below of =6] {$14$};
	\node[state] (15) [below of =7] {$15$};
	\node[state] (16) [right of =15] {$\red 16$};
	\node[state] (8) [above of =0] {$8$};
	\node[state] (9) [right of =8] {$9$};
	\node[state] (10) [right of =9] {$10$};
	\node[state] (11) [right of =10] {$11$};
	\node[state] (12) [right of =11] {$12$};
	\node[state] (13) [below of =12] {$\red13$};

	\path [->]
	(0) edge node [above, sloped] {$a$} (1)
	(1) edge node [above, sloped] {$u$} (2)
	(2) edge node [above, sloped] {$c$} (3)
	(0) edge [bend right] node {$u$} (6)
	(6) edge node {$a$} (7)
	(7) edge [bend right] node {$c$} (3)
	(0) edge [bend right] node {$a$} (14)
	(14) edge node {$c$} (15)
	(15) edge node {$a$} (16)
	(0) edge node {$v$} (8)
	(8) edge node {$a$} (9)
	(9) edge node {$u$} (10)
	(10) edge node {$c$} (11)
	(11) edge node {$a$} (12)
	(12) edge node {$\tau$} (13)
	(13) edge [loop right] node {$b$} (13)
	;

    \end{tikzpicture}
}
\subcaptionbox{$Cc(\hat{G}^1,\tilde{G}^1_{obs})$\label{Fig:subfig10b}}{
   \begin{tikzpicture}[>=stealth',shorten >=1pt,auto,node distance=3.2 cm, scale = 0.7, transform shape,
	>=stealth,inner sep=2pt]

	\node[initial, initial where = above, rectangle state] (13-12) [right of = 4-31115] {$({\red 13},\{12\})$};
	\node[initial, initial where = above, rectangle state] (16-12) [right of = 13-12] {$({\red 16},\{12\})$};
	\node[initial, initial where = right, rectangle state] (13-) [below = 1cm of 13-12] {$({\red 13},\emptyset)$};

	\path [->]
	(13-12) edge node {$(b,b)$} (13-)
	(13-) edge [loop below] node {$(b,b)$} (13-)
	;

  \end{tikzpicture}
}
\caption{The subautomaton $G^1$ and its corresponding concurrent composition $Cc(\hat{G}^1,\tilde{G}^1_{obs})$.}
\label{Fig10}
\end{figure}

\begin{figure}[!ht]
  \centering 
  \begin{tikzpicture}[>=stealth',shorten >=1pt,auto,node distance=4.3 cm, scale = 0.7, transform shape,
	>=stealth,inner sep=2pt]

	\node[initial, initial where = left, rectangle state] (0-068) {$(0,\{0,6,8\})$};
	\node[rectangle state] (8-068) [right of = 0-068] {$(8,\{0,6,8\})$};
	\node[rectangle state] (9-12791014) [right of = 8-068] {$(9,\{1,2,7,9,10,14\})$};
	\node[rectangle state] (10-12791014) [below = 1cm of 9-12791014] {$(10,\{1,2,7,9,10,14\})$};
	\node[rectangle state] (11-31115) [left of = 10-12791014] {$(11,\{3,11,15\})$};
	\node[rectangle state] (13-13) [below = 1cm of 10-12791014] {$({\red 13},\{13\})$};
	\node[rectangle state] (13-121316) [left of = 13-13] {$({\red 13},\{12,13,16\})$};
	\node[rectangle state] (12-121316) [left of = 13-121316] {$(12,\{12,13,16\})$};

	\path [->]
	(12-121316) edge node {$(\tau,\epsilon)$} (13-121316)
	(0-068) edge node {$(v,\epsilon)$} (8-068)
	(13-121316) edge node {$(b,b)$} (13-13)
	(10-12791014) edge node {$(c,c)$} (11-31115)
	(9-12791014) edge node {$(u,\epsilon)$} (10-12791014)
	(8-068) edge node {$(a,a)$} (9-12791014)
	(11-31115) edge node [sloped, above] {$(a,a)$} (12-121316)
	(13-13) edge [loop below] node {$(b,b)$} (13-13)
	;
  \end{tikzpicture}

  \caption{Part of the concurrent composition $Cc(G_1,Obs(G_1))$ for the subautomaton $G_1$ shown in Fig.~\ref{Fig10}.}
  \label{Fig11}
\end{figure}

\begin{figure}[!ht]
\centering
\subcaptionbox{The subautomaton $G^2$\label{Fig:subfig12a}}{
    	\begin{tikzpicture}[>=stealth',shorten >=1pt,auto,node distance=2.0 cm, scale = 0.5, transform shape,
	>=stealth,inner sep=2pt]

	\node[initial, initial where = left, state] (0) {$0$};
	\node[state] (1) [right of =0] {$1$};
	\node[state] (2) [right of =1] {$2$};
	\node[state] (3) [right of =2] {$3$};
	\node[state] (6) [below of =1] {$6$};
	\node[state] (7) [below of =2] {$7$};
	\node[state] (14) [below of =6] {$14$};
	\node[state] (15) [below of =7] {$15$};
	\node[state] (16) [right of =15] {$\red 16$};
	\node[state] (8) [above of =1] {$8$};
	\node[state] (9) [right of =8] {$9$};

	\path [->]
	(0) edge node [above, sloped] {$a$} (1)
	(1) edge node [above, sloped] {$u$} (2)
	(2) edge node [above, sloped] {$c$} (3)
	(0) edge [bend right] node {$u$} (6)
	(6) edge node {$a$} (7)
	(7) edge [bend right] node {$c$} (3)
	(0) edge [bend right] node {$a$} (14)
	(14) edge node {$c$} (15)
	(15) edge node {$a$} (16)
	(0) edge node {$v$} (8)
	(8) edge node {$a$} (9)
	;

    \end{tikzpicture}
}\hspace{1cm}
\subcaptionbox{$Cc(\hat{G}^2,\tilde{G}^2_{obs})$\label{Fig:subfig12b}}{
     \begin{tikzpicture}[>=stealth',shorten >=1pt,auto,node distance=3.2 cm, scale = 0.7, transform shape,
	>=stealth,inner sep=2pt]
	\tikzstyle{emptynode}=[inner sep=0,outer sep=0]

	\node[initial, initial where = above, rectangle state] (16-) {$({\red 16},\emptyset)$};
	\node[emptynode] (1) [left of = 16-] {};
	\node[emptynode] (1) [right of = 16-] {};

  \end{tikzpicture}
}
\caption{The subautomaton $G^2$ and its corresponding concurrent composition $Cc(\hat{G}^2,\tilde{G}^2_{obs})$.}
\label{Fig12}
\end{figure}

\begin{figure}[!ht]
  \centering
    \begin{tikzpicture}[>=stealth',shorten >=1pt,auto,node distance=4.3 cm, scale = 0.7, transform shape,
	>=stealth,inner sep=2pt]

	\node[initial, initial where = left, rectangle state] (0-068) {$(0,\{0,6,8\})$};
	\node[rectangle state] (14-12791014) [right of = 0-068] {$(14,\{1,2,7,9,10,14\})$};
	\node[rectangle state] (15-315) [right of = 14-12791014] {$(15,\{3,15\})$};
	\node[rectangle state] (16-16) [right of = 15-315] {$(16,\{16\})$};

	\path [->]
	(0-068) edge node {$(a,a)$} (14-12791014)
	(15-315) edge node {$(a,a)$} (16-16)
	(14-12791014) edge node {$(c,c)$} (15-315)
	;
  \end{tikzpicture}

  \caption{Part of the concurrent composition $Cc(G_2,Obs(G_2))$ for the subautomaton $G_2$ shown in Fig.~\ref{Fig12}.}
  \label{Fig13}
\end{figure}
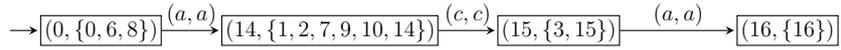
\end{example}

\begin{remark}\label{re:4.2}
We analyze the time complexity of using Algorithm~\ref{algo:2} to do enforcement for $K$-SSO in Definition~\ref{de:3.1}.
Firstly, the time consumption for computing $Cc(\hat{G},\tilde{G}_{obs})$ is $\mathcal{O}((|\Sigma_o||\Sigma_{uo}|+|\Sigma|)|X|^{2}2^{|X|})$ (See Remark~\ref{re:3.4} for details).
Performing the process of Lines $3$ to $10$ can be done in the size of $Cc(\hat{G},\tilde{G}_{obs})$, i.e., $\mathcal{O}(|\Sigma||X|^22^{|X|}+|X|2^{|X|})$.
Secondly, computing $Cc(G,Obs(G))$ has also time complexity $\mathcal{O}((|\Sigma_o||\Sigma_{uo}|+|\Sigma|)|X|^{2}2^{|X|})$.
Further, the time cost for performing the process of Lines $11$ to $16$ is $\mathcal{O}(|\Sigma||X|^22^{|X|}+|X|2^{|X|})$ using the ``Breadth-First Search Algorithm" in~\cite{Cormen(2009)}.
Thirdly, in order to cut off all runs generated by $Cc(\hat{G},\tilde{G}_{obs})$ or $Cc(G, Obs(G))$ that violate $K$-SSO in Definition~\ref{de:3.1}, in Lines $17$ to $18$ we need to disable each controllable transition in $\Lambda$, which has time complexity $\mathcal{O}(|X|^2|\Sigma_c|)$.
In Line $23$, updating $Ac(Cc(\hat{G},\tilde{G}_{obs}))$ and $Ac(Cc(G,Obs(G)))$ takes time $\mathcal{O}((|\Sigma_o||\Sigma_{uo}|+|\Sigma|)|X|^{2}2^{|X|})$.
Taken together, the overall time complexity for enforcing $K$-SSO in Definition~\ref{de:3.1} using Algorithm~\ref{algo:2} is $\mathcal{O}((|\Sigma_o||\Sigma_{uo}|+|\Sigma|)|\Sigma_c||X|^42^{|X|})$.
\end{remark}

\begin{remark}\label{re:4.3}
Analogously, two notions of $K$-SSO in Definitions~\ref{de:3.1a} and~\ref{de:3.1b} can also be enforced under the proposed enforcement framework using their verification structures in~\cite{Zhang(2023a)} and~\cite{Zhang(2023b)}.
In particular, the authors in~\cite{Ma(2021)} considered the enforcement of $K$-SSO in Definition~\ref{de:3.1a} under the supervisory control framework using the so-called $K$-step recognizer ($K$-SR).
They provided an approach to synthesize a supremal permissive supervisor for enforcing $K$-SSO in Definition~\ref{de:3.1a} under the assumption of $\Sigma_c\subseteq\Sigma_o$, which has time complexity $\mathcal{O}(|\Sigma_o|2^{(K+2)|X|})$.
Whereas our proposed enforcement approach for $K$-SSO in Definition~\ref{de:3.1a} works in a remarkably different way and does not require any assumption.
Furthermore, our approach has lower time complexity than that of~\cite{Ma(2021)}.
\end{remark}

\subsection{Enforcement of SCSO, SISO, and Inf-SSO}\label{subsec4.3}

In this subsection, we focus on the enforcement problems of SCSO, SISO, and Inf-SSO.
We first recall the formal definitions and verification approaches of these three strong SBO.
Details can be found in our previous work~\cite{Han(2023)}.

\begin{definition}[\cite{Han(2023)}]\label{de:4.2}
Given a system $G=(X,\Sigma,\delta,X_0)$, a projection map $P$ w.r.t. a set $\Sigma_o$ of observable events, and a set $X_{S}\subseteq X$ of secret states, $G$ is said to be
\begin{enumerate}
  \item \emph{strongly current-state opaque} (SCSO) w.r.t. $\Sigma_o$ and $X_{S}$, if for all $x_0\in X_0$ and all $s\in\mathcal L(G,x_0)$ with $\delta(x_0,s)\cap X_S\neq\emptyset$,
there exists a non-secret run $x^\prime_0\stackrel{t}{\rightarrow}x$ such that $P(t)=P(s)$, where $x^\prime_0\in X^{NS}_0$ and $x\in X$.
  \item \emph{strongly initial-state opaque} (SISO) w.r.t. $\Sigma_o$ and $X_{S}$, if for all $x_0\in X^S_0$ and all $s\in\mathcal{L}(G,x_0)$,
there exists a non-secret run $x^\prime_0\stackrel{t}{\rightarrow}$ such that $P(t)=P(s)$, where $x^\prime_0\in X^{NS}_0$.
  \item \emph{strongly infinite-step opaque} (Inf-SSO) w.r.t. $\Sigma_o$ and $X_{S}$, if for all $x_0\in X_0$ and all $s=s_1s_2\in\mathcal{L}(G,x_0)$ with $\delta(x_0,s_1)\cap X_S\neq\emptyset$,
there exists a non-secret run $x^\prime_0\stackrel{t}{\rightarrow}$ such that $P(t)=P(s)$, where $x^\prime_0\in X^{NS}_0$.
\end{enumerate}
\end{definition}

In our previous work~\cite{Han(2023)}, we demonstrated that: 1) SCSO and SISO are incomparable (i.e., SCSO does not imply SISO, and vice versa);
2) SCSO (resp., SISO) and $K$-SSO in Definition~\ref{de:3.1a} are also incomparable when $K\geq 1$ (resp., $K\geq 0$); and
3) Inf-SSO implies SCSO (resp., SISO), but the converse is not true.
Furthermore, we proposed a novel concurrent-composition structure $Cc(G,Obs(G_{dss}))=(X_{cc},\Sigma_{cc},\delta_{cc},X_{cc,0})$ to simultaneously verify SCSO, SISO, and Inf-SSO (see Lemma~\ref{lemma:4.1} below), where $G_{dss}$ is the \emph{non-secret subautomaton} of $G$, which can be obtained from $G$ by deleting all states in $X_S$ and computing the accessible part of the remainder.

\begin{lemma}[\cite{Han(2023)}]\label{lemma:4.1}
Given a system $G=(X,\Sigma,\delta,X_0)$, a projection map $P$ w.r.t. the set $\Sigma_o$ of observable events, and a set $X_{S}\subseteq X$ of secret states,
let $Cc(G,Obs(G_{dss}))$ be the corresponding concurrent composition.
Then $G$ is
\begin{enumerate}
  \item SCSO w.r.t. $\Sigma_o$ and $X_S$ if and only if there exists no state of the form $(x,\emptyset)$ in $Cc(G,Obs(G_{dss}))$, where $x\in X_S$.
  \item SISO w.r.t. $\Sigma_o$ and $X_S$ if and only if there exists no state of the form $(\cdot,\emptyset)$ in $Cc(G,Obs(G_{dss}))$ that is reachable from $X_{cc,0}^{S}=\{(x_0,q)\in X_{cc,0}: x_0\in X^S_0\}$.
  \item Inf-SSO w.r.t. $\Sigma_o$ and $X_S$ if and only if there exists no state of the form $(\cdot,\emptyset)$ in $Cc(G,Obs(G_{dss}))$.
\end{enumerate}
\end{lemma}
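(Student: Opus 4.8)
The plan is to deduce all three equivalences from a single structural correspondence between paths in $Cc(G,Obs(G_{dss}))$ and pairs consisting of a run in $G$ together with the observer-state of $G_{dss}$ induced by that run's observation. First I would prove, by induction on the number of observable steps, the following reachability fact: a state $(x,\tilde q)$ is reachable from $X_{cc,0}$ along some sequence $e$ with $P(e)=\alpha$ if and only if there is a run $x_0\stackrel{s}{\rightarrow}x$ in $G$ with $x_0\in X_0$ and $P(s)=\alpha$, and $\tilde q$ is exactly the set of states reached in $G_{dss}$ by a non-secret run from $X^{NS}_0$ whose observation equals $\alpha$, with the convention that $\tilde q=\emptyset$ precisely when no such non-secret run exists. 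The base case records the initialization $X_{cc,0}$, while the inductive step uses the two transition rules of the concurrent composition, which advance the $G$-component on every event but advance the observer-component only on observable events and send it to the absorbing value $\emptyset$ as soon as the observer transition becomes undefined. The single fact I really need from this is the equivalence ``$\tilde q=\emptyset$'' $\Longleftrightarrow$ ``no non-secret run from $X^{NS}_0$ produces the observation $\alpha$'', since the latter is precisely the negation of the existential clause $P(t)=P(s)$ appearing in every part of Definition~\ref{de:4.2}.

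With this correspondence in hand, each equivalence follows by contraposition. For SCSO, a reachable state $(x,\emptyset)$ with $x\in X_S$ unwinds to a run $x_0\stackrel{s}{\rightarrow}x$ with $\delta(x_0,s)\cap X_S\neq\emptyset$ whose observation $P(s)$ admits no matching non-secret run, which is exactly a witness that SCSO fails; conversely any SCSO-violating run yields such a state. The SISO case is handled identically, except that I would restrict attention to paths originating in $X_{cc,0}^{S}$ so that the witnessing run begins at a secret initial state $x_0\in X^S_0$, matching the leading quantifier of the SISO clause; the absorbing nature of $\emptyset$ guarantees that a reachable $(\cdot,\emptyset)$ corresponds to a genuine prefix at which the observation first becomes unmatchable.

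The part demanding the most care is Inf-SSO, where emptiness is forbidden with no constraint on the left component, even though the definition only restricts runs that actually pass through a secret state. The reconciling observation is that reachability of any $(\cdot,\emptyset)$ already forces its underlying run to visit a secret state: were the run $x_0\stackrel{s}{\rightarrow}x$ realizing $(\cdot,\emptyset)$ entirely non-secret, then $x_0\in X^{NS}_0$ and this very run would be a non-secret run with observation $P(s)$, contradicting $\tilde q=\emptyset$. Hence each reachable $(\cdot,\emptyset)$ splits its run as $x_0\stackrel{s_1}{\rightarrow}x_1\stackrel{s_2}{\rightarrow}x$ with $x_1\in X_S$ and $P(s_1s_2)$ unmatchable, which is precisely an Inf-SSO violation, and the converse direction is immediate from the correspondence. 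I expect this ``automatic secret-visit'' step, together with the bookkeeping that $\emptyset$ is a trap so that the point of failure is well-defined, to be the main obstacle; the remainder is a routine transcription of the correspondence lemma against the three quantifier patterns in Definition~\ref{de:4.2}.
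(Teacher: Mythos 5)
Your proof is correct; note that the paper itself states Lemma~\ref{lemma:4.1} without proof, citing~\cite{Han(2023)}, so the closest in-paper comparison is the proof of the analogous Theorem~\ref{th:3.1}, which performs exactly your unwinding (contraposition through the concurrent-composition structure) inline rather than via an explicit correspondence lemma. Your induction-based invariant---that the right component of a reachable state equals the set of $G_{dss}$-states reached by non-secret runs consistent with the current observation, with $\emptyset$ absorbing and marking precisely the unmatchable observations---and your key step for Inf-SSO, that any run reaching a state $(\cdot,\emptyset)$ must visit a secret state since an all-non-secret run would itself witness matchability, are the same ideas the paper (and~\cite{Han(2023)}) uses, merely made explicit and organized around a single reachability lemma.
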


Next, we discuss how to use our previously-proposed structure $Cc(G,Obs(G_{dss}))$ given in~\cite{Han(2023)} to design the algorithms of enforcing SCSO, SISO, and Inf-SSO under the proposed framework.
Note that, the procedures of enforcing SISO and Inf-SSO are very similar to that of SCSO.
For brevity, we only formally design Algorithm~\ref{algo:3} to enforce SCSO.

\begin{algorithm}
\caption{Enforcement of SCSO}\label{algo:3}
\begin{algorithmic}[1]
\REQUIRE System $G=(X,\Sigma,\delta,X_0)$, set $\Sigma_o\subseteq\Sigma$ of observable events, set $\Sigma_c\subseteq\Sigma$ of controllable events, and set $X_{S}\subseteq X$ of secret states
\ENSURE ``Yes" if $G$ can be enforced to be SCSO, ``No" otherwise; in case of ``Yes", a subset $\mathcal{E}_c\subseteq\mathcal{T}_c^G$ of controllable transitions such that subsystem $G_{\backslash\mathcal{E}_c}$ is SCSO

\STATE Compute $Cc(G,Obs(G_{dss}))$
\STATE Initialize $\mathcal{E}_c=\emptyset$
\STATE Collect all states of the form $(x,\emptyset)$ with $x\in X_S$ in $Cc(G,Obs(G_{dss}))$ as a set $\Delta$
\IF{$\Delta$ is empty}
\STATE $G$ is SCSO
\RETURN ``Yes" and ``$\mathcal{E}_c=\emptyset$"
\STOP
\ELSE
\WHILE{$\Delta$ is not empty}
\FOR{each state of form $(x,\emptyset)$ in $\Delta$}
\IF{in $Cc(G,Obs(G_{dss}))$ there exists an uncontrollable leaking-secret run $(x_0, q)\stackrel{e}{\rightarrow}(x,\emptyset)$, where $(x_0,q)\in X_{cc,0}$ and $x\in X_S$}
\RETURN ``No"
\STOP
\ELSE
\STATE Compute the set $\Omega$ consisting of all last controllable transitions of all controllable runs in $Cc(G,Obs(G_{dss}))$ from $X_{cc,0}$ to $\Delta$
\FOR{each controllable transition $(x_1,q_1)\stackrel{{(\sigma,\tilde{\sigma}})}{\longrightarrow}(x_2, q_2)$ in $\Omega$, where $\sigma\in\Sigma_c$}
\STATE Cut off $(x_1,q_1)\stackrel{{(\sigma,\tilde{\sigma}})}{\longrightarrow}(x_2, q_2)$ by disabling its left component $x_1\stackrel{\sigma}{\rightarrow}x_2$ in $G$, add $x_1\stackrel{\sigma}{\rightarrow}x_2$ to $\mathcal{E}_c$
\ENDFOR
\ENDIF
\ENDFOR
\STATE Update $Ac(Cc(G,Obs(G_{dss})))$ and $\Delta$
\ENDWHILE
\ENDIF
\end{algorithmic}
\end{algorithm}

We prove that Algorithm~\ref{algo:3} is correct for enforcing SCSO.

\begin{theorem}\label{th:4.3}
A subset $\mathcal{E}_c\subseteq\mathcal{T}_c^G$ returned by Algorithm~\ref{algo:3} (if exists) is a solution to Problem~\ref{problem:1} on SCSO.
\end{theorem}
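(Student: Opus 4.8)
The plan is to follow the same template as the proof of Theorem~\ref{th:4.1}, but with the verification criterion of Theorem~\ref{th:3.1} replaced by the SCSO characterization in Lemma~\ref{lemma:4.1}(1): a system is SCSO if and only if its concurrent composition $Cc(\cdot,Obs(\cdot_{dss}))$ contains no state of the form $(x,\emptyset)$ with $x\in X_S$. Since the statement concerns only an $\mathcal{E}_c$ that is actually returned --- not the ``No'' branch, which would be the subject of a separate result analogous to Theorem~\ref{th:4.2} --- I would first note that Algorithm~\ref{algo:3} returns a subset exactly when the \textbf{while}-loop of Line~9 is exited with $\Delta$ empty. The argument then splits according to whether $\mathcal{E}_c=\emptyset$.

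First I would dispose of the case $\mathcal{E}_c=\emptyset$. This happens precisely when the set $\Delta$ computed in Line~3 is already empty, i.e., $Cc(G,Obs(G_{dss}))$ has no state of the form $(x,\emptyset)$ with $x\in X_S$; by Lemma~\ref{lemma:4.1}(1), $G$ itself is SCSO, and as $G_{\backslash\emptyset}=G$, the empty set is trivially a solution to Problem~\ref{problem:1} on SCSO. For the case $\mathcal{E}_c\neq\emptyset$, the \textbf{while}-loop must have run at least once and terminated because its guard failed, so the current $\Delta$ is empty.

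The key step is to argue that the iteratively-updated structure $Ac(Cc(G,Obs(G_{dss})))$ maintained by the algorithm coincides, at termination, with the concurrent composition $Cc(G_{\backslash\mathcal{E}_c},Obs((G_{\backslash\mathcal{E}_c})_{dss}))$ built from scratch on the final subsystem. This should hold because each disabling of a transition $x_1\stackrel{\sigma}{\rightarrow}x_2$ in $G$ (Line~17) together with the recomputation of the accessible part (Line~21) is exactly the effect of removing that transition from $G$ and re-deriving both the non-secret subautomaton $(G_{\backslash\mathcal{E}_c})_{dss}$ and its observer. Granting this correspondence, the emptiness of $\Delta$ in the final structure says precisely that $Cc(G_{\backslash\mathcal{E}_c},Obs((G_{\backslash\mathcal{E}_c})_{dss}))$ has no state of the form $(x,\emptyset)$ with $x\in X_S$, whence Lemma~\ref{lemma:4.1}(1) yields that $G_{\backslash\mathcal{E}_c}$ is SCSO, i.e., $\mathcal{E}_c$ solves Problem~\ref{problem:1} on SCSO.

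The hard part will be establishing exactly this structural correspondence between the incrementally-updated composition and the one obtained from the final subsystem. As the $K$-SSO discussion around Example~\ref{ex:4.1} illustrates, disabling a controllable transition can turn a previously harmless run into a leaking-secret one (by shrinking $G_{dss}$ and hence $Obs(G_{dss})$), so a single sweep does not suffice and one must lean on the loop's repeated updating of $Ac(Cc(G,Obs(G_{dss})))$ and $\Delta$ in Line~21. The delicate point is thus to confirm that this update faithfully tracks the change of the non-secret subautomaton and its observer induced by the newly-disabled transitions, so that the termination condition $\Delta=\emptyset$ is genuinely the SCSO condition of Lemma~\ref{lemma:4.1}(1) applied to $G_{\backslash\mathcal{E}_c}$. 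Once this is secured, the two cases together establish the theorem.
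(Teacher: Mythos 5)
Your proposal takes essentially the same route as the paper's own proof: the identical case split on whether $\mathcal{E}_c=\emptyset$, with Lemma~\ref{lemma:4.1}(1) applied to the final updated concurrent composition to conclude that $G_{\backslash\mathcal{E}_c}$ is SCSO once the termination condition $\Delta=\emptyset$ holds. The structural correspondence you single out as the hard part --- that the iteratively updated $Ac(Cc(G,Obs(G_{dss})))$ at termination coincides with $Cc(G_{\backslash\mathcal{E}_c},Obs((G_{\backslash\mathcal{E}_c})_{dss}))$ built from the final subsystem --- is exactly what the paper's proof asserts without comment, so your treatment is, if anything, more explicit than the original.
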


\begin{proof}
1) When $\mathcal{E}_c=\emptyset$, by Algorithm~\ref{algo:3} there exists no state of the form $(x,\emptyset)$ in $Cc(G,Obs(G_{dss}))$, where $x\in X_S$.
Then by Lemma~\ref{lemma:4.1}, subsystem $G_{\backslash\mathcal{E}_c}=G$ is SCSO.
Obviously, $\mathcal{E}_c=\emptyset$ is a solution to Problem~\ref{problem:1} on SCSO.
2) When $\mathcal{E}_c\neq\emptyset$, by Algorithm~\ref{algo:3} we conclude that in the most updated $Ac(Cc(G,Obs(G_{dss})))$ there exists no state of the form $(x,\emptyset)$, where $x\in X_S$.
Then by Lemma~\ref{lemma:4.1}, subsystem $G_{\backslash\mathcal{E}_c}$ still is SCSO.
Taken together, Algorithm~\ref{algo:3} can correctly enforce SCSO for a given system $G$ and a set $X_S\subseteq X$ of secret states.
\end{proof}

The following result reveals that if Problem~\ref{problem:1} on SCSO is solvable, then Algorithm~\ref{algo:3} outputs ``Yes" and a subset $\mathcal{E}_c\subseteq\mathcal{T}_c^G$.

\begin{theorem}\label{th:4.4}
Given a system $G=(X,\Sigma,\delta,X_0)$, a set $\Sigma_o\subseteq\Sigma$ of observable events, a set $\Sigma_c\subseteq\Sigma$ of controllable events, and a set $X_S\subseteq X$ of secret states,
if Algorithm~\ref{algo:3} outputs ``No" (i.e., no subset $\mathcal{E}_c\subseteq\mathcal{T}_c^G$ is returned by Algorithm~\ref{algo:3}), then Problem~\ref{problem:1} on SCSO has no solution.
\end{theorem}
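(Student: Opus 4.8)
The plan is to prove the statement in contrapositive form, mirroring the proof of Theorem~\ref{th:4.2} but specialized to the single structure $Cc(G,Obs(G_{dss}))$ and the SCSO-leakage criterion of Lemma~\ref{lemma:4.1}, namely the presence of a state $(x,\emptyset)$ with $x\in X_S$. Assuming Algorithm~\ref{algo:3} outputs ``No'', I would show that no choice of $\mathcal{E}_c\subseteq\mathcal{T}_c^G$ can make $G_{\backslash\mathcal{E}_c}$ SCSO. The argument splits into the two cases corresponding to the two ways the ``No'' output can be triggered: either an uncontrollable leaking-secret run is present from the outset, or one emerges only after the algorithm disables some controllable transitions.

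In the first case, $Cc(G,Obs(G_{dss}))$ already contains an uncontrollable leaking-secret run $(x_0,q)\stackrel{e}{\rightarrow}(x,\emptyset)$ with $(x_0,q)\in X_{cc,0}$ and $x\in X_S$. By the construction of $Cc(G,Obs(G_{dss}))$, this run projects onto a run $x_0\stackrel{e(L)}{\rightarrow}x$ in $G$ ending at the secret state $x$ whose observation $P(e(L))$ is matched by no non-secret run. Since every transition along it is uncontrollable, the run persists in $G_{\backslash\mathcal{E}_c}$ for every $\mathcal{E}_c$, so the offending state $(x,\emptyset)$ with $x\in X_S$ survives; by Lemma~\ref{lemma:4.1}, $G_{\backslash\mathcal{E}_c}$ is not SCSO, whence Problem~\ref{problem:1} has no solution.

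The second, harder case is the cascading one: initially every leaking-secret run is controllable, but disabling controllable transitions turns a previously covered secret run into a leaking one, and after finitely many such cascades an uncontrollable leaking-secret run appears. Here I would transplant the genericity argument from Theorem~\ref{th:4.2}: enumerate the relevant secret runs $r^{(i)}:x_0^{(i)}\xrightarrow[]{\tilde e^{(i)}}x_1^{(i)}\xrightarrow[]{e^{(i)}}x_2^{(i)}$ with $x_1^{(i)}\in X_S$, where only $r^{(1)}$ leaks initially, each $r^{(j)}$ ($j<n$) carries a controllable event, and $r^{(n)}$ carries none. The crucial step is to argue that the greedy ``disable the last controllable transition'' rule used by Algorithm~\ref{algo:3} is without loss of generality: replacing it by an arbitrary disabling that cuts off all runs sharing the label sequence of the current leaking run still forces some later $r^{(l)}$ to become leaking, so iterating at most $n-1$ times drives the system to the point where $r^{(n)}$---being uncontrollable---becomes a leaking-secret run that survives all further disabling. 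By Lemma~\ref{lemma:4.1}, no $\mathcal{E}_c$ can then restore SCSO.

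The main obstacle is exactly this genericity step in the second case: showing that failure of the greedy policy implies failure of every disabling policy. The crux is a monotonicity invariant---once the non-secret run serving as the observational cover of some $r^{(j+1)}$ is removed, which is unavoidable as soon as one cuts every run with the label sequence of $r^{(j)}$, the run $r^{(j+1)}$ becomes leaking and retains that status under all subsequent disabling. Making this cascade provably uncircumventable, rather than merely plausible, is where the real work lies; the remainder is a routine translation of the $Cc(\hat G,\tilde G_{obs})$/$Cc(G,Obs(G))$ bookkeeping of Theorem~\ref{th:4.2} into the $Cc(G,Obs(G_{dss}))$ setting.
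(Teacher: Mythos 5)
Your proposal is correct and takes essentially the same route as the paper: the paper's own proof of Theorem~\ref{th:4.4} simply invokes the proof of Theorem~\ref{th:4.2} with $Cc(\hat{G},\tilde{G}_{obs})$ and $Cc(G,Obs(G))$ replaced by $Cc(G,Obs(G_{dss}))$, i.e., the same two-case split (a pre-existing uncontrollable leaking-secret run versus the cascading case) and the same ``greedy disabling is without loss of generality'' argument you describe. The genericity step you flag as the main obstacle is handled no more rigorously in the paper itself, so your sketch matches the published proof in both structure and level of detail.
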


\begin{proof}
This proof is similar to that of Theorem~\ref{th:4.2}.
The only difference is that we need to replace $Cc(\hat{G},\tilde{G}_{obs})$ and $Cc(G, Obs(G))$ with $Cc(G,Obs(G_{dss}))$.
We here omit it.
\end{proof}

\begin{example}\label{ex:4.2}
Consider the system $G$ depicted in Fig.~\ref{Fig14} in which $\Sigma_o=\{a,b\}$, $\Sigma_c=\{a,u,v\}$, $X_0=\{0,1\}$, and $X_S=\{1,5,9\}$.
The corresponding concurrent composition $Cc(G,Obs(G_{dss}))$ is shown in Fig.~\ref{Fig15}.
There exists a state $(9,\emptyset)$ in $Cc(G,Obs(G_{dss}))$.
Then by Lemma~\ref{lemma:4.1}, $G$ is not SCSO w.r.t. $\Sigma_o$ and $X_S$.

\begin{figure}[!ht]
  \centering
    	\begin{tikzpicture}[>=stealth',shorten >=1pt,auto,node distance=2.0 cm, scale = 0.5, transform shape,
	>=stealth,inner sep=2pt]
	\tikzstyle{emptynode}=[inner sep=0,outer sep=0]

	\node[initial, initial where = left, state] (0) {$0$};
	\node[state] (2) [right of =0] {$2$};
	\node[state] (4) [right of =2] {$4$};
	\node[state] (7) [right of =4] {$7$};
	\node[state] (5) [below of =4] {$\red 5$};
	\node[state] (9) [below of =7] {$\red 9$};
	\node[emptynode] (empty1) [below of = 0] {};
	\node[initial, initial where = left, state] (1) [below of =empty1] {$\red 1$};
	\node[state] (3) [right of =1] {$3$};
	\node[state] (6) [right of =3] {$6$};
	\node[state] (8) [right of =6] {$8$};

	\path [->]
	(0) edge node [above, sloped] {$u$} (2)
	(2) edge node [above, sloped] {$b$} (4)
	(4) edge node [above, sloped] {$a$} (7)
	(4) edge node {$a$} (5)
	(7) edge node {$b$} (9)
	(1) edge node {$b$} (3)
	(3) edge node {$a$} (5)
	(3) edge node {$u$} (6)
	(5) edge node {$v$} (6)
	(6) edge node {$b$} (8)
	;

    \end{tikzpicture}
  \caption{The automaton $G$ considered in Example~\ref{ex:4.2}.}
  \label{Fig14}
\end{figure}
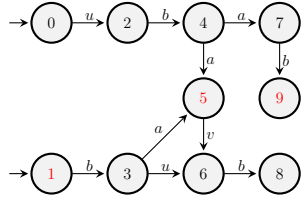
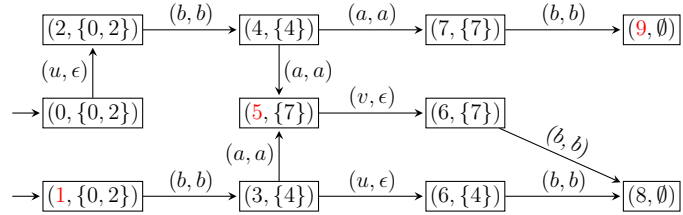
\begin{figure}[!ht]
  \centering
    	\begin{tikzpicture}[>=stealth',shorten >=1pt,auto,node distance=3.5 cm, scale = 0.7, transform shape,
	>=stealth,inner sep=2pt]

	\node[initial, initial where = left, rectangle state] (0-02) {$(0,\{0,2\})$};
	\node[rectangle state] (2-02) [above = 1cm of 0-02] {$(2,\{0,2\})$};
	\node[initial, initial where = left, rectangle state] (1-02) [below = 1cm of 0-02] {$({\red 1},\{0,2\})$};
	\node[rectangle state] (5-7) [right of = 0-02] {$({\red 5},\{7\})$};
	\node[rectangle state] (4-4) [above = 1cm of 5-7] {$(4,\{4\})$};
	\node[rectangle state] (3-4) [below = 1cm of 5-7] {$(3,\{4\})$};
	\node[rectangle state] (7-7) [right of = 4-4] {$(7,\{7\})$};
	\node[rectangle state] (6-7) [right of = 5-7] {$(6,\{7\})$};
	\node[rectangle state] (6-4) [below = 1cm of 6-7] {$(6,\{4\})$};
	\node[rectangle state] (9-) [right of = 7-7] {$({\red 9},\emptyset)$};
	\node[rectangle state] (8-) [right of = 6-4] {$(8,\emptyset)$};

	\path [->]
	(4-4) edge node {$(a,a)$} (5-7)
	(2-02) edge node {$(b,b)$} (4-4)
	(3-4) edge node {$(a,a)$} (5-7)
	(1-02) edge node {$(b,b)$} (3-4)
	(4-4) edge node {$(a,a)$} (7-7)
	(5-7) edge node {$(v,\epsilon)$} (6-7)
	(3-4) edge node {$(u,\epsilon)$} (6-4)
	(0-02) edge node {$(u,\epsilon)$} (2-02)
	(7-7) edge node {$(b,b)$} (9-)
	(6-7) edge node [sloped, above] {$(b,b)$} (8-)
	(6-4) edge node {$(b,b)$} (8-)
	;

  \end{tikzpicture}
  \caption{The concurrent composition $Cc(G,Obs(G_{dss}))$ for the automaton $G$ shown in Fig.~\ref{Fig14}.}
  \label{Fig15}
\end{figure}

We now enforce SCSO for the system $G$ shown in Fig.~\ref{Fig14} using Algorithm~\ref{algo:3}.
In order to prevent state $(9,\emptyset)$ in $Cc(G,Obs(G_{dss}))$ from being reached, we choose controllable transition $(4,\{4\})\stackrel{(a,a)}{\longrightarrow}(7,\{7\})$ to disable since transition $(7,\{7\})\stackrel{(b,b)}{\longrightarrow}(9,\emptyset)$ is uncontrollable.
After disabling transition $4\stackrel{a}{\rightarrow}7$ in $G$, we obtain a new concurrent composition $Cc(G^1,Obs(G^1_{dss}))$ shown in Fig.~\ref{Fig16}.
Further, we choose controllable transitions $(3,\{4\})\stackrel{(a,a)}{\longrightarrow}(5,\emptyset)$ and $(4,\{4\})\stackrel{(a,a)}{\longrightarrow}(5,\emptyset)$ in $Cc(G^1,Obs(G^1_{dss}))$ to disable to prevent state $(5,\emptyset)$ from being reached.
After disabling them, $Cc(G^1,Obs(G^1_{dss}))$ becomes $Cc(G^2,Obs(G^2_{dss}))$ shown in Fig.~\ref{Fig17}.
We see that there exists no state of the form $(x,\emptyset)$ in $Cc(G^2,Obs(G^2_{dss}))$, where $x\in X_S$.
Thus, Algorithm~\ref{algo:3} outputs ``Yes" and the subset $\mathcal{E}_c=\{4\stackrel{a}{\rightarrow}7, 3\stackrel{a}{\rightarrow}5, 4\stackrel{a}{\rightarrow}5\}$.
Hence, subsystem $G_{\backslash\mathcal{E}_c}$ is SCSO w.r.t. $\Sigma_o$ and $X_S$.

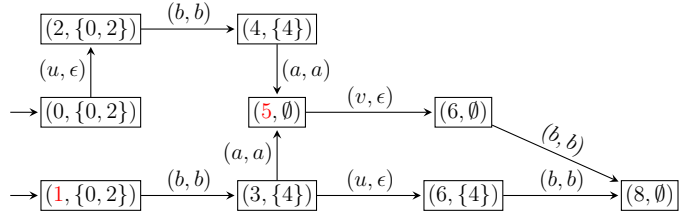
\begin{figure}[!ht]
  \centering
      	\begin{tikzpicture}[>=stealth',shorten >=1pt,auto,node distance=3.5 cm, scale = 0.7, transform shape,
	>=stealth,inner sep=2pt]

	\node[initial, initial where = left, rectangle state] (0-02) {$(0,\{0,2\})$};
	\node[rectangle state] (2-02) [above = 1cm of 0-02] {$(2,\{0,2\})$};
	\node[initial, initial where = left, rectangle state] (1-02) [below = 1cm of 0-02] {$({\red 1},\{0,2\})$};
	\node[rectangle state] (5-7) [right of = 0-02] {$({\red 5},\emptyset)$};
	\node[rectangle state] (4-4) [above = 1cm of 5-7] {$(4,\{4\})$};
	\node[rectangle state] (3-4) [below = 1cm of 5-7] {$(3,\{4\})$};
	\node[rectangle state] (6-7) [right of = 5-7] {$(6,\emptyset)$};
	\node[rectangle state] (6-4) [below = 1cm of 6-7] {$(6,\{4\})$};
	\node[rectangle state] (8-) [right of = 6-4] {$(8,\emptyset)$};

	\path [->]
	(4-4) edge node {$(a,a)$} (5-7)
	(2-02) edge node {$(b,b)$} (4-4)
	(3-4) edge node {$(a,a)$} (5-7)
	(1-02) edge node {$(b,b)$} (3-4)
	(5-7) edge node {$(v,\epsilon)$} (6-7)
	(3-4) edge node {$(u,\epsilon)$} (6-4)
	(0-02) edge node {$(u,\epsilon)$} (2-02)
	(6-7) edge node [sloped, above] {$(b,b)$} (8-)
	(6-4) edge node {$(b,b)$} (8-)
	;

  \end{tikzpicture}
  \caption{The concurrent composition $Cc(G^1,Obs(G^1_{dss}))$.}
  \label{Fig16}
\end{figure}

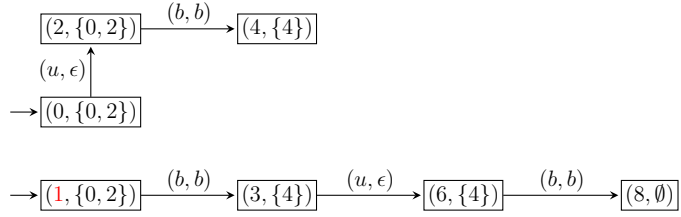
\begin{figure}[!ht]
  \centering
        	\begin{tikzpicture}[>=stealth',shorten >=1pt,auto,node distance=3.5 cm, scale = 0.7, transform shape,
	>=stealth,inner sep=2pt]

	\node[initial, initial where = left, rectangle state] (0-02) {$(0,\{0,2\})$};
	\node[rectangle state] (2-02) [above = 1cm of 0-02] {$(2,\{0,2\})$};
	\node[initial, initial where = left, rectangle state] (1-02) [below = 1cm of 0-02] {$({\red 1},\{0,2\})$};
	\node[rectangle state] (4-4) [above = 1cm of 5-7] {$(4,\{4\})$};
	\node[rectangle state] (3-4) [below = 1cm of 5-7] {$(3,\{4\})$};
	\node[rectangle state] (6-4) [below = 1cm of 6-7] {$(6,\{4\})$};
	\node[rectangle state] (8-) [right of = 6-4] {$(8,\emptyset)$};

	\path [->]
	(2-02) edge node {$(b,b)$} (4-4)
	(1-02) edge node {$(b,b)$} (3-4)
	(3-4) edge node {$(u,\epsilon)$} (6-4)
	(0-02) edge node {$(u,\epsilon)$} (2-02)
	(6-4) edge node {$(b,b)$} (8-)
	;

  \end{tikzpicture}
  \caption{The concurrent composition $Cc(G^2,Obs(G^2_{dss}))$.}
  \label{Fig17}
\end{figure}
\end{example}

\begin{remark}\label{re:4.4}
We discuss the time complexity of using Algorithm~\ref{algo:3} to enforce SCSO.
By~\cite{Han(2023)}, the time cost for computing $Cc(G,Obs(G_{dss}))$ is $\mathcal{O}((|\Sigma_o||\Sigma_{uo}|+|\Sigma|)|X|^22^{|X|})$.
In Line $3$, the time consumption for finding all states of form $(x,\emptyset)$ in $Cc(G,Obs(G_{dss}))$ is $\mathcal{O}(|X|2^{|X|})$.
In Line $11$ (resp., $15$), computing an uncontrollable leaking-secret run in $Cc(G,Obs(G_{dss}))$ from $X_{cc,0}$ to $\Delta$ (resp., a set $\Omega$) can be done with the time complexity $\mathcal{O}(|\Sigma||X|^22^{|X|}+|X|2^{|X|})$ using the ``Breadth-First Search Algorithm" in~\cite{Cormen(2009)}.
In Lines $16$ to $17$, disabling all controllable transitions in $\Omega$ has complexity $\mathcal{O}(|X|^2|\Sigma_c|)$.
Updating each state in $Ac(Cc(G,Obs(G_{dss})))$ can be done in the size of $Cc(G,Obs(G_{dss}))$.
Taken together, the overall time complexity for enforcing SCSO using Algorithm~\ref{algo:3} is $\mathcal{O}((|\Sigma_o||\Sigma_{uo}|+|\Sigma|)|\Sigma_c||X|^42^{|X|})$.
\end{remark}

By Lemma~\ref{lemma:4.1}, the algorithms of enforcing SISO and Inf-SSO can be designed by making a few slight modifications to Algorithm~\ref{algo:3}, respectively.
Specifically, for the enforcement of SISO:

\begin{itemize}
  \item In Algorithm~\ref{algo:3}, ``SCSO" is replaced by ``SISO".
  \item Line $3$ of Algorithm~\ref{algo:3} is replaced by: Collect all states of the form $(\cdot,\emptyset)$ in $Cc(G,Obs(G_{dss}))$ that are reachable from $X_{cc,0}^{S}$ as a set $\Delta$.
  \item In Line $10$ of Algorithm~\ref{algo:3}, we replace state $(x_,\emptyset)$ with state $(\cdot,\emptyset)$.
  \item Line $11$ of Algorithm~\ref{algo:3} is replaced by: there exists an uncontrollable leaking-secret run $(x_0, q)\stackrel{e}{\rightarrow}(\cdot,\emptyset)$ in $Cc(G,Obs(G_{dss}))$, where $(x_0,q)\in X_{cc,0}^{S}$.
  \item In Line $15$ of Algorithm~\ref{algo:3}, we replace $X_{cc,0}$ with $X_{cc,0}^{S}$.
\end{itemize}

Whereas for the enforcement of Inf-SSO: in Algorithm~\ref{algo:3} we only need to replace ``SCSO" and state $(x,\emptyset)$ (where $x\in X_S$) with ``Inf-SSO" and state $(\cdot,\emptyset)$, respectively.

\begin{theorem}\label{th:4.5}
A subset $\mathcal{E}_c\subseteq\mathcal{T}_c^G$ returned by the modified Algorithm~\ref{algo:3} (if exists) is a solution of Problem~\ref{problem:1} on SISO (resp., Inf-SSO).
\end{theorem}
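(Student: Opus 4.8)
The plan is to mirror the proof of Theorem~\ref{th:4.3} in structure, substituting the SCSO characterization of Lemma~\ref{lemma:4.1}(1) by the SISO characterization in Lemma~\ref{lemma:4.1}(2) (resp., the Inf-SSO characterization in Lemma~\ref{lemma:4.1}(3)), and verifying that the listed modifications to Algorithm~\ref{algo:3} make the loop invariant coincide exactly with the ``forbidden-state'' set appearing in the relevant clause of Lemma~\ref{lemma:4.1}.

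First I would split into the two cases $\mathcal{E}_c=\emptyset$ and $\mathcal{E}_c\neq\emptyset$, just as in Theorem~\ref{th:4.3}. For SISO, when $\mathcal{E}_c=\emptyset$ the modified Line~3 has found $\Delta=\emptyset$, i.e., no state of the form $(\cdot,\emptyset)$ in $Cc(G,Obs(G_{dss}))$ is reachable from $X_{cc,0}^{S}$; Lemma~\ref{lemma:4.1}(2) then yields that $G_{\backslash\mathcal{E}_c}=G$ is SISO, so $\mathcal{E}_c=\emptyset$ solves Problem~\ref{problem:1}. When $\mathcal{E}_c\neq\emptyset$, the \textbf{while}-loop terminates only when $\Delta=\emptyset$ in the most updated $Ac(Cc(G,Obs(G_{dss})))$; again by Lemma~\ref{lemma:4.1}(2) the subsystem $G_{\backslash\mathcal{E}_c}$ obtained by disabling exactly the transitions placed in $\mathcal{E}_c$ is SISO. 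The Inf-SSO argument is identical except that the forbidden set is \emph{all} states of the form $(\cdot,\emptyset)$, reached from the full initial-state set $X_{cc,0}$, and we invoke Lemma~\ref{lemma:4.1}(3) in place of clause (2).

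The one point requiring care, and the main obstacle, is justifying that the iterative recomputation of the accessible part (``Update $Ac(Cc(G,Obs(G_{dss})))$ and $\Delta$'') terminates with the correct loop invariant. Disabling a controllable transition can turn a previously harmless run into a leaking-secret run once the accessible part is recomputed (the same phenomenon exhibited for $K$-SSO in Example~\ref{ex:4.1}), so one must argue that each pass of the \textbf{while}-loop strictly shrinks the reachable portion of $Cc(G,Obs(G_{dss}))$, whence the loop halts; at that stage $\Delta=\emptyset$ certifies the relevant clause of Lemma~\ref{lemma:4.1}. For SISO the key is that the modified Lines~3, 11, and~15 consistently restrict attention to runs originating in $X_{cc,0}^{S}$, so the maintained invariant is precisely ``no $(\cdot,\emptyset)$ is reachable from $X_{cc,0}^{S}$,'' matching Lemma~\ref{lemma:4.1}(2); for Inf-SSO no such restriction is imposed and the invariant is ``no $(\cdot,\emptyset)$ exists at all,'' matching Lemma~\ref{lemma:4.1}(3). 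Since these invariants coincide with the respective opacity characterizations, the correctness of the returned $\mathcal{E}_c$ follows exactly as in Theorem~\ref{th:4.3}, and no new idea beyond the substitution of the Lemma~\ref{lemma:4.1} clause is needed.
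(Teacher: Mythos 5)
Your proposal is correct and matches the paper's approach: the paper's own proof of this theorem is simply the remark that it is analogous to Theorem~\ref{th:4.3} with $Cc(G,Obs(G_{dss}))$ and the appropriate clause of Lemma~\ref{lemma:4.1} (clause 2 for SISO, clause 3 for Inf-SSO) substituted in, which is exactly the case split ($\mathcal{E}_c=\emptyset$ versus $\mathcal{E}_c\neq\emptyset$) and substitution you carry out. Your additional remarks on loop termination and the maintained invariant only flesh out details the paper leaves implicit.
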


\begin{proof}
This proof is similar to that of Theorem~\ref{th:4.3}.
We here omit it.
\end{proof}

\begin{theorem}\label{th:4.6}
Given a system $G=(X,\Sigma,\delta,X_0)$, a set $\Sigma_o\subseteq\Sigma$ of observable events, a set $\Sigma_c\subseteq\Sigma$ of controllable events, and a set $X_S\subseteq X$ of secret states,
if the modified Algorithm~\ref{algo:3} outputs ``No" (i.e., no subset $\mathcal{E}_c\subseteq\mathcal{T}_c^G$ is returned by the modified Algorithm~\ref{algo:3}), then Problem~\ref{problem:1} on SISO (resp., Inf-SSO) has no solution.
\end{theorem}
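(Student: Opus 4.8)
The plan is to adapt, almost verbatim, the contrapositive argument used in the proof of Theorem~\ref{th:4.2} (and already reused for SCSO in Theorem~\ref{th:4.4}), with two simplifications suited to the present setting. First, the two structures $Cc(\hat{G},\tilde{G}_{obs})$ and $Cc(G,Obs(G))$ are replaced by the single structure $Cc(G,Obs(G_{dss}))$, so the separate ``predecessor run'' bookkeeping of Theorem~\ref{th:4.2} is no longer needed: a witnessing run now lives entirely inside one automaton. Second, the verification criterion of Theorem~\ref{th:3.1} is replaced by the appropriate item of Lemma~\ref{lemma:4.1}. Concretely, for SISO a \emph{leaking-secret run} is a run of $Cc(G,Obs(G_{dss}))$ from an initial state in $X_{cc,0}^{S}$ to some state $(\cdot,\emptyset)$ (Lemma~\ref{lemma:4.1}(2)); for Inf-SSO it is any run from $X_{cc,0}$ to some $(\cdot,\emptyset)$ (Lemma~\ref{lemma:4.1}(3)). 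Assuming the modified Algorithm~\ref{algo:3} outputs ``No'', I would then split into the same two cases as in Theorem~\ref{th:4.2}.

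In the first case, an uncontrollable leaking-secret run $(x_0,q)\stackrel{e}{\rightarrow}(\cdot,\emptyset)$ already exists in $Cc(G,Obs(G_{dss}))$, with $(x_0,q)\in X_{cc,0}^{S}$ for SISO (resp. $(x_0,q)\in X_{cc,0}$ for Inf-SSO). By the construction of $Cc(G,Obs(G_{dss}))$ and Lemma~\ref{lemma:4.1}, its left component $x_0\stackrel{e(L)}{\rightarrow}\cdot$ is a secret run of $G$ witnessing a violation of SISO (resp. Inf-SSO); since this run contains no controllable transition, it cannot be prevented by disabling controllable transitions, so Problem~\ref{problem:1} has no solution.

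In the second case, every leaking-secret run is initially controllable but an uncontrollable one is created during the {\bf while}-loop. Mirroring Theorem~\ref{th:4.2}, I would index the relevant secret runs as $r^{(i)}:=(x_0^{(i)},q^{(i)})\stackrel{e^{(i)}}{\rightarrow}(\cdot,\emptyset)$, $i=1,\dots,n$, where $e^{(1)},\dots,e^{(n-1)}$ each contain a controllable transition, $e^{(n)}$ contains none, and initially only $r^{(1)}$ is leaking-secret. I would then argue that cutting off $r^{(1)}$ (by disabling controllable transitions in $\Omega$) forces $r^{(l_1)}$ to become leaking-secret for some $1<l_1\le n$, and that iterating this at most $n-1$ times exposes $r^{(n)}$ as an uncontrollable leaking-secret run. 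Exactly as in Theorem~\ref{th:4.2}, the decisive point is that this conclusion is independent of \emph{which} controllable transitions are disabled: any choice that removes all runs sharing the label sequence of the currently-leaking run still leaves $r^{(n)}$ intact, so no control decision can render the updated system SISO (resp. Inf-SSO).

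I expect the main obstacle to be the same choice-independence claim as in Theorem~\ref{th:4.2}: rigorously justifying that the order and selection of disabled controllable transitions is irrelevant, so that any enforcement strategy is forced to traverse the chain $r^{(1)},r^{(l_1)},\dots,r^{(n)}$ and terminate at the uncontrollable run. The only genuinely new bookkeeping is getting the anchoring of the initial-state set right: for SISO the notions of ``reachable leaking-secret run'' and ``sharing a label sequence'' must be taken relative to $X_{cc,0}^{S}$ (as reflected in the modified Lines~3, 11, and 15), whereas for Inf-SSO they are taken relative to all of $X_{cc,0}$. Once the indexing and the choice-independence argument are phrased with respect to the correct initial-state set and the single structure $Cc(G,Obs(G_{dss}))$, the remainder is a transcription of the argument in Theorem~\ref{th:4.2}, which is why it suffices to note that the proof is analogous.
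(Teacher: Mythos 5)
Your proposal is correct and follows essentially the same route as the paper: the paper's own proof of this theorem is just the remark that it is analogous to Theorem~\ref{th:4.2}, with the two structures $Cc(\hat{G},\tilde{G}_{obs})$ and $Cc(G,Obs(G))$ replaced by the single structure $Cc(G,Obs(G_{dss}))$ and the verification criterion replaced by the relevant items of Lemma~\ref{lemma:4.1}, which is exactly the adaptation you carry out (including the correct anchoring at $X_{cc,0}^{S}$ for SISO and at $X_{cc,0}$ for Inf-SSO). Your reconstruction of the two-case contrapositive argument, including the chain $r^{(1)},r^{(l_1)},\dots,r^{(n)}$ and the choice-independence claim, matches the paper's argument for Theorem~\ref{th:4.2} at the same level of rigor.
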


\begin{proof}
This proof is similar to that of Theorem~\ref{th:4.2}.
We also omit it here.
\end{proof}

\begin{remark}\label{re:4.5}
The (worst-case) time complexity of enforcing SISO (resp., Inf-SSO) using the modified Algorithm~\ref{algo:3} also is $\mathcal{O}((|\Sigma_o||\Sigma_{uo}|+|\Sigma|)|\Sigma_c||X|^42^{|X|})$.
Note that, the authors in~\cite{Ma(2021)} considered the enforcement of Inf-SSO under supervisory control framework using so-called infinite-step recognizer ($\infty$-SR).
In particular, when $\Sigma_c\subseteq\Sigma_o$ holds, a supervisor for enforcing Inf-SSO can be obtained by modifying $\infty$-SR whose time complexity is $\mathcal{O}(|\Sigma_{o}||\Sigma_{uo}||X|2^{2|X|})$.
Therefore, our proposed approach for enforcing Inf-SSO is more efficient than that of~\cite{Ma(2021)} from the complexity's point of view.
\end{remark}

\begin{example}\label{ex:4.3}
Let us consider again the system $G$ shown in Fig.~\ref{Fig14}, whose concurrent composition $Cc(G,Obs(G_{dss}))$ is shown in Fig.~\ref{Fig15}.
By Lemma~\ref{lemma:4.1}, $G$ is neither SISO nor Inf-SSO.

We firstly enforce SISO of $G$.
Specifically, state $(8,\emptyset)$ in $Cc(G,Obs(G_{dss}))$ violates the notion of SISO.
Hence, we cut off all runs from initial state $(1,\{0,2\})$ to state $(8,\emptyset)$ by disabling the last controllable transition of each of all such runs to prevent state $(8,\emptyset)$ from being reached.
After disabling controllable transitions $(3,\{4\})\stackrel{(u,\epsilon)}{\longrightarrow}(6,\{4\})$ and $(5,\{7\})\stackrel{(v,\epsilon)}{\longrightarrow}(6,\{7\})$, $Cc(G,Obs(G_{dss}))$ becomes $Cc(G^3,$ $Obs(G^3_{dss}))$, as shown in Fig.~\ref{Fig18}, from which we see that there exists no state of violating SISO.
Therefore, subsystem $G^3=Ac(G_{\backslash\mathcal{E}_c})$ is SISO w.r.t. $\Sigma_o$ and $X_S$, where $\mathcal{E}_c=\{3\stackrel{u}{\rightarrow}6, 5\stackrel{v}{\rightarrow}6\}$.

\begin{figure}[!ht]
  \centering
      	\begin{tikzpicture}[>=stealth',shorten >=1pt,auto,node distance=3.5 cm, scale = 0.7, transform shape,
	>=stealth,inner sep=2pt]

	\node[initial, initial where = left, rectangle state] (0-02) {$(0,\{0,2\})$};
	\node[rectangle state] (2-02) [above = 1cm of 0-02] {$(2,\{0,2\})$};
	\node[initial, initial where = left, rectangle state] (1-02) [below = 1cm of 0-02] {$({\red 1},\{0,2\})$};
	\node[rectangle state] (5-7) [right of = 0-02] {$({\red 5},\{7\})$};
	\node[rectangle state] (4-4) [above = 1cm of 5-7] {$(4,\{4\})$};
	\node[rectangle state] (3-4) [below = 1cm of 5-7] {$(3,\{4\})$};
	\node[rectangle state] (7-7) [right of = 4-4] {$(7,\{7\})$};
	\node[rectangle state] (9-) [right of = 7-7] {$({\red 9},\emptyset)$};

	\path [->]
	(4-4) edge node {$(a,a)$} (5-7)
	(2-02) edge node {$(b,b)$} (4-4)
	(3-4) edge node {$(a,a)$} (5-7)
	(1-02) edge node {$(b,b)$} (3-4)
	(4-4) edge node {$(a,a)$} (7-7)
	(0-02) edge node {$(u,\epsilon)$} (2-02)
	(7-7) edge node {$(b,b)$} (9-)
	;

  \end{tikzpicture}
  \caption{A concurrent composition $Cc(G^3,Obs(G^3_{dss}))$.}
  \label{Fig18}
\end{figure}
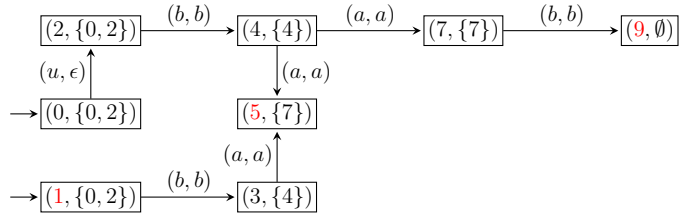

Secondly, we enforce Inf-SSO of $G$.
In order to prevent states $(8,\emptyset)$ and $(9,\emptyset)$ in $Cc(G,Obs(G_{dss}))$ that violate Inf-SSO from being reached, we disable all last controllable transitions of all such runs ended with state $(8,\emptyset)$ or $(9,\emptyset)$, which are $(3,\{4\})\stackrel{(u,\epsilon)}{\longrightarrow}(6,\{4\})$, $(5,\{7\})\stackrel{(v,\epsilon)}{\longrightarrow}(6,\{7\})$, and $(4,\{4\})\stackrel{(a,a)}{\longrightarrow}(7,\{7\})$.
This results in a new concurrent composition $Cc(G^4,Obs(G^4_{dss}))$ shown in Fig.~\ref{Fig19}.
State $(5,\emptyset)$ in $Cc(G^4,Obs(G^4_{dss}))$ also violates Inf-SSO.
Hence, we disable controllable transitions $(3,\{4\})\stackrel{(a,a)}{\longrightarrow}(5,\emptyset)$ and $(4,\{4\})\stackrel{(a,a)}{\longrightarrow}(5,\emptyset)$ to cut off all such runs ended with state $(5,\emptyset)$.
After disabling them, we obtain the modified concurrent composition $Cc(G^5,Obs(G^5_{dss}))$ shown in Fig.~\ref{Fig20}.
By Lemma~\ref{lemma:4.1}, we have that subsystem $G^5=Ac(G_{\backslash\mathcal{E}_c})$ is Inf-SSO w.r.t. $\Sigma_o$ and $X_S$, where $\mathcal{E}_c=\{3\stackrel{u}{\rightarrow}6, 5\stackrel{v}{\rightarrow}6, 4\stackrel{a}{\rightarrow}7, 3\stackrel{a}{\rightarrow}5, 4\stackrel{a}{\rightarrow}5\}$.

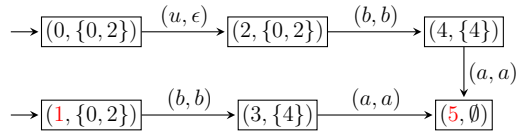
\begin{figure}[!ht]
  \centering
    \begin{tikzpicture}[>=stealth',shorten >=1pt,auto,node distance=3.5 cm, scale = 0.7, transform shape,
	>=stealth,inner sep=2pt]

	\node[initial, initial where = left, rectangle state] (0-02) {$(0,\{0,2\})$};
	\node[rectangle state] (2-02) [right of = 0-02] {$(2,\{0,2\})$};
	\node[initial, initial where = left, rectangle state] (1-02) [below = 1cm of 0-02] {$({\red 1},\{0,2\})$};

	\node[rectangle state] (4-4) [right of = 2-02] {$(4,\{4\})$};
	\node[rectangle state] (3-4) [right of = 1-02] {$(3,\{4\})$};
	\node[rectangle state] (5-7) [right of = 3-4] {$({\red 5},\emptyset)$};

	\path [->]
	(4-4) edge node {$(a,a)$} (5-7)
	(2-02) edge node {$(b,b)$} (4-4)
	(3-4) edge node {$(a,a)$} (5-7)
	(1-02) edge node {$(b,b)$} (3-4)
	(0-02) edge node {$(u,\epsilon)$} (2-02)
	;

  \end{tikzpicture}
  \caption{A concurrent composition $Cc(G^4,Obs(G^4_{dss}))$.}
  \label{Fig19}
\end{figure}
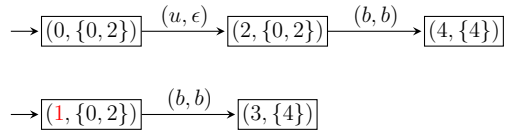
\begin{figure}[!ht]
  \centering
  \begin{tikzpicture}[>=stealth',shorten >=1pt,auto,node distance=3.5 cm, scale = 0.7, transform shape,
	>=stealth,inner sep=2pt]

	\node[initial, initial where = left, rectangle state] (0-02) {$(0,\{0,2\})$};
	\node[rectangle state] (2-02) [right of = 0-02] {$(2,\{0,2\})$};
	\node[initial, initial where = left, rectangle state] (1-02) [below = 1cm of 0-02] {$({\red 1},\{0,2\})$};

	\node[rectangle state] (4-4) [right of = 2-02] {$(4,\{4\})$};
	\node[rectangle state] (3-4) [right of = 1-02] {$(3,\{4\})$};

	\path [->]
	(2-02) edge node {$(b,b)$} (4-4)
	(1-02) edge node {$(b,b)$} (3-4)
	(0-02) edge node {$(u,\epsilon)$} (2-02)
	;

  \end{tikzpicture}
  \caption{A concurrent composition $Cc(G^5,Obs(G^5_{dss}))$.}
  \label{Fig20}
\end{figure}
\end{example}

\section{Concluding remarks}\label{sec5}

In this paper, we studied the verification and enforcement of strong state-based opacity for partially-observed discrete-event systems.
In order to efficiently verify if a given system $G$ is $K$-SSO in Definition~\ref{de:3.1}, we proposed a new concurrent-composition structure $Cc(\hat{G},\tilde{G}_{obs})$, which is a slight variant of $Cc(G,Obs(G_{dss}))$ proposed in our previous work~\cite{Han(2023)}.
Using this variant structure, we designed an algorithm for verifying $K$-SSO in Definition~\ref{de:3.1}, which does not depend on $K$ if $K>|\hat{X}|2^{|X\backslash X_S|}$, resulting in an upper bound of $|\hat{X}|2^{|X\backslash X_S|}-1$ on $K$ in $K$-SSO in Definition~\ref{de:3.1}.

We proposed a distinctive opacity-enforcement mechanism that chooses controllable transitions to disable (if possible) before an original system of interest starts to run in order to make system's secrets from non-opaque to opaque, which remarkably differs from the existing opacity-enforcement approaches in the literature.
To enforce multifarious types of strong state-based opacity, we designed the corresponding algorithms using the proposed $Cc(\hat{G},\tilde{G}_{obs})$ and $Cc(G,Obs(G_{dss}))$, respectively.
In particular, the proposed algorithm for enforcing Inf-SSO is better than that of~\cite{Ma(2021)} from the complexity's point of view.

The proposed opacity-enforcement mechanism, in this paper, is to restrict the original system's behavior to ensure that the system does not reveal its ``secrets" to an intruder.
However, it does not apply to such a scenario where the system must execute its full behavior.
To overcome this drawback, it is reasonable to extend the proposed opacity-enforcement approach by disabling, adding, and/or replacing transitions without creating new behavior.
Thus, an intruder still cannot learn for sure whether the structure of the original system has been modified based on his/her full knowledge of the system structure.
The interesting direction is left for future study.

%%%%%%%%%%%%%%%%%%%%%%%%%%%%%%%%%%%%%%%%%%%%%%%%%%%%%%%%%%%%%%%%%%%%%%%%%%%%%%%%

\bibliographystyle{plain}

\begin{thebibliography}{99}

\bibitem{Lafortune(2018)}
S. Lafortune, F. Lin, and C.N. Hadjicostis,
``On the history of diagnosability and opacity in discrete event systems,"
{\it Annual Reviews in Control}, vol. 45, pp. 257--266, 2018.

\bibitem{Hadjicostis(2020)}
C.N. Hadjicostis,
{\it Estimation and Inference in Discrete Event Systems}.
Springer, Switzerland AG, 2020.

\bibitem{An(2020)}
L. An and G. Yang,
``Opacity enforcement for confidential robust control in linear cyber-physical systems,"
{\it IEEE Transactions on Automatic Control}, vol. 65, no. 3, pp. 1234--1241, 2020.

\bibitem{Ramasubramanian(2020)}
B. Ramasubramanian, W.R. Cleaveland, and S. Marcus,
``Notions of centralized and decentralized opacity in linear systems,"
{\it IEEE Transactions on Automatic Control}, vol. 65, no. 4, pp. 1442--1455, 2020.

\bibitem{Yin(2021)}
X. Yin, M. Zamani, and S. Liu,
``On approximate opacity of cyber-physical systems,"
{\it IEEE Transactions on Automatic Control}, vol. 66, no. 4, pp. 1630--1645, 2021.

\bibitem{Mazare(2004)}
L. Mazar$\acute{e}$,
``Using unification for opacity properties,"
in {\it Proceedings of the Workshop on Issues in the Theory of Security}, 2004, pp. 165--176.

\bibitem{Bryans(2005)}
J.W. Bryans, M. Koutny, and P. Ryan,
``Modelling opacity using Petri nets,"
{\it Electronic Notes in Theoretical Computer Science}, vol. 121, pp. 101--115, 2005.

\bibitem{Badouel(2007)}
E. Badouel, M. Bednarczyk, A. Borzyszkowski, et al.,
``Concurrent secrets,"
{\it Discrete Event Dynamic Systems}, vol. 17, no. 4, pp. 425--446, 2007.

\bibitem{Saboori(2007)}
A. Saboori and C.N. Hadjicostis,
``Notions of security and opaicty in discrete event systems,"
in {\it Proceedings of 46th IEEE Conference on Decision and Control}, 2007, pp. 5056--5061.

\bibitem{Bryans(2008)}
J.W. Bryans, M. Koutny, L. Mazar$\acute{e}$, and P. Ryan,
``Opacity generalised to transition systems,"
{\it Internationa Journal of Information Security}, vol. 7, no. 6, pp. 421--435, 2008.

\bibitem{Jacob(2016)}
R. Jacob, J.J. Lesage, and J.M. Faure,
``Overview of discrete event systems opacity: Models, validation, and quantification,"
{\it Annual Reviews in Control}, vol. 41, pp. 135--146, 2016.

\bibitem{Saboori(2013)}
A. Saboori and C.N. Hadjicostis,
``Verification of initial-state opacity in security appications of discrete event systems,"
{\it Information Sciences}, vol. 246, pp. 115--132, 2013.

\bibitem{Saboori(2011a)}
A. Saboori and C.N. Hadjicostis,
``Verification of $K$-step opacity and analysis of its complexity,"
{\it IEEE Transactions on Automation Science and Engineering}, vol. 8, no. 3, pp. 549--559, 2011.

\bibitem{Saboori(2012)}
A. Saboori and C.N. Hadjicostis,
``Verification of infinite-step opacity and complexity considerations,"
{\it IEEE Transactions on Automatic Control}, vol. 57, no. 5, pp. 1265--1269, 2012.

\bibitem{Lin(2011)}
F. Lin,
``Opacity of discrete event systems and its applications,"
{\it Automatica}, vol. 47, no. 3, pp. 496--503, 2011.

\bibitem{Wu(2013)}
Y. Wu and S. Lafortune,
``Comparative analysis of related notions of opacity in centralized and coordinated architectures,"
{\it Discrete Event Dynamic Systems}, vol. 23, no. 3, pp. 307--339, 2013.

\bibitem{Zhang(2023a)}
K. Zhang,
``A unified concurrent-composition method to state/event inference and concealment in labeled finite-state automata as discrete-event systems,"
{\it Annual Reviews in Control}, vol. 56, 2023, Art. no. 100902.

\bibitem{Balun(2023)}
J. Balun and T. Masopust,
``Verifying weak and strong $k$-step opacity in discrete-event systems,"
{\it Automatica}, vol. 155, 2023, Art. no. 111153.

\bibitem{Balun(2021)}
J. Balun and T. Masopust,
``Comparing the notions of opacity for discrete-event systems,"
{\it Discrete Event Dynamic Systems}, vol. 31, pp. 553--582, 2021.

\bibitem{Dubreil(2010)}
J. Dubreil, P. Darondeau, and H. Marchand,
``Supervisory control for opacity,"
{\it IEEE Transactions on Automatic Control}, vol. 55, no. 5, pp. 1089--1100, 2010.

\bibitem{Saboori(2012b)}
A. Saboori and C.N. Hadjicostis,
``Opacity-enforcing supervisory strategies via state estimator constructions,"
{\it IEEE Transactions on Automatic Control}, vol. 57, no. 2, pp. 1155--1165, 2012.

\bibitem{Yin(2016)}
X. Yin and S. Lafortune,
``A uniform approach for synthesizing property-enforcing supervisors for partially-observed discrete-event systems,"
{\it IEEE Transactions on Automatic Control}, vol. 61, no. 8, pp. 2140--2154, 2016.

\bibitem{Tong(2018)}
Y. Tong, Z. Li, C. Seatzu, and A. Giua,
``Current-state opacity enforcement in discrete event systems under incomparable observations,"
{\it Discrete Event Dynamic Systems}, vol. 28, no. 2, pp. 161--182, 2018.

\bibitem{Moulton(2022)}
R.H. Moulton, B.B. Hamgini, Z.A. Khouzani, et al.,
``Using subobservers to synthesize opacity-enforcing supervisors,"
{\it Discrete Event Dynamic Systems}, vol. 32, no. 4, pp. 611--640, 2022.

\bibitem{Wu(2014a)}
Y. Wu and S. Lafortune,
``Synthesis of insertion functions for enforcement of opacity security properties,"
{\it Automatica}, vol. 50, pp. 1336--1348, 2014.

\bibitem{Ji(2018)}
Y. Ji, Y. Wu, and S. Lafortune,
``Enforcement of opacity by public and private insertion functions,"
{\it Automatica}, vol. 93, pp. 369--378, 2018.

\bibitem{Ji(2019)}
Y. Ji, X. Yin, and S. Lafortune,
``Opacity enforcement using nondeterministic publicly-known edit functions,"
{\it IEEE Transactions on Automatic Control}, vol. 64, no. 10, pp. 4369--4376, 2019.

\bibitem{Liu(2022)}
R. Liu and J. Lu,
``Enforcement for infinite-step opacity and $K$-step opacity via insertion mechanism,"
{\it Automatica}, vol. 140, 2022, Art. no. 110212.

\bibitem{Li(2022)}
X. Li, C.N. Hadjicostis, and Z. Li,
``Extended insertion functions for opacity enforcement in discrete-event systems,"
{\it IEEE Transactions on Automatic Control}, vol. 67, no. 10, pp. 5289--5303, 2022.

\bibitem{Cassez(2012)}
F. Cassez, J. Dubreil, and H. Marchand,
``Synthesis of opaque systems with static and dynamic masks,"
{\it Formal Methods in System Design}, vol. 40, pp. 88--115, 2012.

\bibitem{Zhang(2015)}
B. Zhang, S. Shu, and F. Lin,
``Maximum information release while ensuring opacity in discrete event systems,"
{\it IEEE Transactions on Automation Science and Engineering}, vol. 12, no. 4, pp. 1067--1079, 2015.

\bibitem{Yin(2020)}
X. Yin and S. Li,
``Synthesis of dynamic masks for infinite-step opacity,"
{\it IEEE Transactions on Automatic Control}, vol. 65, no. 4, pp. 1429--1441, 2020.

\bibitem{Tong(2017)}
Y. Tong, Z. Li, C. Seatzu, and A. Giua,
``Verification of state-based opacity using Petri nets,"
{\it IEEE Transactions on Automatic Control}, vol. 62, no. 6, pp. 2823--2837, 2017.

\bibitem{Zhang(2019)}
K. Zhang, X. Yin, and M. Zamani,
``Opacity of nondeterministic transition systems: A (bi)simulation relation approach,"
{\it IEEE Transactions on Automatic Control}, vol. 64, no. 2, pp. 5116--5123, 2019.

\bibitem{Keroglou(2017)}
C. Keroglou and C.N. Hadjicostis,
``Probabilistic system opacity in discrete event systems,"
{\it Discrete Event Dynamic Systems}, vol. 28, pp. 289--314, 2018.

\bibitem{Yin(2019a)}
X. Yin, Z. Li, W. Wang, and C. Liu,
``Infinite-step opacity and $K$-step opacity of stochastic discrete-event systems,"
{\it Automatica}, vol. 99, pp. 266--274, 2019.

\bibitem{Deng(2020)}
W. Deng, D. Qiu, and J. Yang,
``Opacity measures of fuzzy discrete event systems,"
{\it IEEE Transactions on Fuzzy Systems}, vol. 29, no. 9, pp. 2612--2622, 2020.

\bibitem{Zhang(2024)}
K. Zhang,
``State-based opacity of labeled real-time automata,"
{\it Theoretical Computer Science}, vol. 987, 2024, Art. no. 114373.

\bibitem{Hou(2022)}
J. Hou, X. Yin, and S. Li,
``A framework for current-state opacity under dynamic information release mechanism,"
{\it Automatica}, vol. 140, 2022, Art. no. 110238.

\bibitem{Zhang(2023)}
J. Zhang, Q. Chu, X. Han, et al.,
``Opacity of discrete-event systems under nondeterministic observation mechanism,"
{\it Asian Journal of Control}, vol. 25, no. 2, pp. 1577--1590, 2023.

\bibitem{Saboori(2011b)}
A. Saboori and C.N. Hadjicostis,
``Coverage analysis of mobile agent trajectory via state-based opacity formulations,"
{\it Control Engineering Practice}, vol. 19, no. 9, pp. 967--977, 2011.

\bibitem{Wu(2014)}
Y. Wu, K. Sankararaman, and S. Lafortune,
``Ensuring privacy in location-based services: An approach based on opacity enforcement,"
in {\it Proceedings of 12th International Workshop on Discrete Event Systems}, 2014, pp. 33--38.

\bibitem{Bourouis(2017)}
A. Bourouis, K. Klai, N.B. Hadj-Alouane, and Y.E. Touati,
``On the verification of opacity in web services and their composition,"
{\it IEEE Transactions on Services Computing}, vol. 10, no. 1, pp. 66--79, 2017.

\bibitem{Lin(2020)}
F. Lin, W. Chen, W. Wang, and F. Wang,
``Information control in networked discrete event systems and its application to battery management systems,"
{\it Discrete Event Dynamic Systems}, vol. 30, no. 2, pp. 243--268, 2020.

\bibitem{Falcone(2015)}
Y. Falcone and H. Marchand,
``Enforcement and validation (at runtime) of various notions of opacity,"
{\it Discrete Event Dynamic Systems}, vol. 25, pp. 531--570, 2015.

\bibitem{Ma(2021)}
Z. Ma, X. Yin, and Z. Li,
``Verification and enforcement of strong infinite- and $k$-step opacity using state recognizers,"
{\it Automatica}, vol. 133, 2021, Art. no. 109838.

\bibitem{Han(2023)}
X. Han, K. Zhang, J. Zhang, Z. Li, and Z. Chen,
``Strong current-state and initial-state opacity of discrete-event systems,"
{\it Automatica}, vol. 148, 2023, Art. no. 110756.

\bibitem{Zhang(2023b)}
K. Zhang,
``A new framework for discrete-event systems,"
{\it Foundations and Trends$^\circledR$ in Systems and Control}, vol. 10, no. 1-2, pp. 1--179, 2023.

\bibitem{Zhang(2023c)}
K. Zhang,
``Polynomial-time verification and enforcement of delayed strong detectability for discrete-event systems,"
{\it IEEE Transactions on Automatic Control}, vol. 68, no. 1, pp. 510--515, 2023.

\bibitem{Han(2022)}
X. Han, K. Zhang, and Z. Li,
``Verification of strong $K$-step opacity for discrete-event systems,"
in {\it Proceedings of 61th IEEE Conference on Decision and Control}, 2022, pp. 4250--4255.

\bibitem{Hopcroft(2001)}
J.E. Hopcroft, R. Motwani, and J.D. Ullman,
{\it Introduction to Automata Theory, Languages, and Computation},
3rd Edition. Addison-Wesley, 2001.

\bibitem{Cassandras(2021)}
C.G. Cassandras and S. Lafortune,
{\it Introduction to Disctete Event Systems},
3rd Edition. Springer, Switzerland AG, 2021.

\bibitem{Cormen(2009)}
T. H. Cormen, C. E. Leiserson, R. L. Rivest, and C. Stein,
{\it Introduction to Algorithms}.
MIT Press, 2009.

\bibitem{Yin(2017)}
X. Yin and S. Lafortune,
``A new approach for the verification of infinite-step and K-step opacity using two-way observers,"
{\it Automatica}, vol. 80, pp. 162--171, 2017.

\bibitem{Ramadge(1987)}
P.J. Ramadge and W.M. Wonham,
``Supervisory control of a class of discrete event processes,"
{\it SIAM Journal on Control and Optimization}, vol. 25, no. 1, pp. 206--230, 1987.
\end{thebibliography}

\end{document}